\documentclass[twocolumn,prx,nofootinbib,superscriptaddress, aps, 10pt]{revtex4-2}
\usepackage{algorithm2e}

\usepackage{amssymb,amsxtra,amsmath,amsfonts, amsthm}
\usepackage{enumerate}
\usepackage{hyperref}
\usepackage[toc,page]{appendix}
\usepackage[caption=false]{subfig}
\usepackage{tikz}
\usetikzlibrary{external}
\usetikzlibrary{fit}
\usetikzlibrary{positioning}
%\tikzexternalize % activate!
\usetikzlibrary{calc}

\def\aa{\mathfrak{a}}
\def\bb{\mathfrak{b}}
\def\cc{\mathfrak{c}}
\def\g{\mathfrak{g}}
\def\uu{\mathfrak{u}}

\def\su{\mathfrak{su}}
\def\so{\mathfrak{so}}
\def\sp{\mathfrak{sp}}

\def\mcA{\mathcal{A}}
\def\mcB{\mathcal{B}}

\def\mcP{\mathcal{P}}

\def\ad{\mathrm{ad}}
\def\Span{\mathrm{span}}

\newcommand\Lie[1]{\langle #1 \rangle_{\mathrm{Lie}}}

\newtheorem{theorem}{Theorem}
\newtheorem{definition}[theorem]{Definition}

\newtheorem{lemma}[theorem]{Lemma}
\newtheorem{proposition}[theorem]{Proposition}

\newtheorem{corollary}[theorem]{Corollary}
\numberwithin{theorem}{section}

\theoremstyle{definition}
\newtheorem{example}{Example}[section]

\newcommand{\LK}[1]{\textcolor{magenta}{[LEX: #1]}}

\newcommand{\EK}[1]{\textcolor{blue}{[EFEKAN: #1]}}

\usepackage[english]{babel}
\usepackage{adjustbox}
\usepackage{amsmath}
\usepackage{graphicx}
\usepackage{lipsum}
\usepackage{xcolor}
\hypersetup{colorlinks=true, linkcolor=blue, citecolor=blue, urlcolor=blue, unicode=true}
\usepackage{cleveref}
\definecolor{fgraph}{RGB}{20, 79, 181}
\definecolor{highlight}{RGB}{235, 101, 52}
\definecolor{oldgraph}{RGB}{181,20, 20}

\usepackage{enumitem}
\newlist{abbrv}{itemize}{1}
\setlist[abbrv,1]{label=,labelwidth=1.1in,align=parleft,itemsep=0.1\baselineskip,leftmargin=!}
\newif\ifTikz
\Tikztrue
\begin{document}

\title{Classification of dynamical Lie algebras generated\\
by spin interactions  
on undirected graphs}

\author{Efekan K\"okc\"u}
\affiliation{Department of Physics, North Carolina State University, Raleigh, NC 27695, USA}
\address{Computational Research Division,
            Lawrence Berkeley National Laboratory,
            Berkeley, CA 94720, USA}
\email{ekokcu@lbl.gov}

\author{Roeland Wiersema}
\affiliation{Vector Institute, MaRS  Centre,  Toronto,  Ontario,  M5G  1M1,  Canada}
\affiliation{Department of Physics and Astronomy, University of Waterloo, Ontario, N2L 3G1, Canada}
\affiliation{Xanadu, Toronto, ON, M5G 2C8, Canada}

\author{Alexander F. Kemper}
\affiliation{Department of Physics, North Carolina State University, Raleigh, NC 27695, USA}

\author{Bojko N. Bakalov}
\affiliation{Department of Mathematics, North Carolina State University, Raleigh, NC 27695, USA}

\begin{abstract}

We provide a classification of all dynamical Lie algebras generated by 2-local spin interactions %Hamiltonians 
on undirected graphs. Building on our previous work in~\cite{wiersema2023classification}, %Wiersema et al. (2023), 
where we provided such a classification for spin chains, here we consider the more general case of undirected graphs. As it turns out, 
the one-dimensional case is special; for any other graph, the dynamical Lie algebra solely depends on whether the graph is bipartite or not. 
An important consequence of this result %classification 
is that the cases where the dynamical Lie algebra is polynomial in size are special 
and restricted to one dimension.

\end{abstract}

\date{September 29, 2024}

\maketitle

\section{Introduction}

A \emph{dynamical Lie algebra}~\cite{zeier2011symmetry,zimboras2015symmetry,dalessandro2021}, or DLA, refers to the Lie algebra generated by the terms of a Hamiltonian describing a quantum system. The name comes from the fact that the dynamics of a quantum system are controlled by the exponential map of the Hamiltonian; the potential dynamical ``directions''
are given by linear combinations of nested commutators of the individual (non-commuting) terms in the Hamiltonian, as can be seen from the
Baker--Campbell--Hausdorff formula~\cite{knapp2013lie}.
Originally, DLAs were explored in the domain of quantum control, where the primary focus was on assessing the controllability of a quantum system~\cite{dalessandro2021}.
They have recently gained renewed attention due to their significant connections with variational quantum computing, particularly in relation to the trainability of quantum circuits~\cite{mcclean2018barren,marrero2020entanglement,ragone2023,fontana2023theadjoint,cerezo2023simul,larocca2023theory}. In the early 2000s, work of Somma~\cite{somma2005quantum,somma2006efficient} connected DLAs to classical simulatibility by proposing efficient algorithms for simulating DLAs whose dimension scaled polynomially with the system size. Furthermore, in the realm of condensed matter physics, DLAs have been employed to construct path integrals for many-body quantum systems~\cite{galitski2011hubbard}. Finally, DLAs appear under a different nomenclature in discussions surrounding Hamiltonian symmetries and bond algebras~\cite{Moudgalya2022fragment}.

Given their broad relevance across various areas of physics, a comprehensive understanding of DLAs is essential. In our previous work~\cite{wiersema2023classification}, we classified DLAs that arise from a specific class of generators: 1- and 2-local Pauli spin-$1/2$ operators in one dimension. Here, we generalize this classification to encompass DLAs that are generated by 1- and 2-local Pauli spin operators where the operators lie on the vertices and edges of an arbitrary undirected graph, which we call the \emph{interaction graph}. Our earlier classification provides a crucial component---the DLAs for the complete graph $K_n$---which facilitated this extension to general graphs. Surprisingly, expanding the classification to more complex topologies is more straightforward than one would
naively expect from such an increase in complexity; rather, the extension is more straightforward. The inverse is true:
due to the restrictive nature of one dimension, a variety
of DLAs arise that do not in more complex topologies.
Intuitively, the increased connectivity leads somewhat
inevitably to a bigger DLA, with fewer symmetries.

Recently, Aguilar and coauthors~\cite{aguilar2024full}
have presented a similar work, which provides a full classification of Pauli Lie algebras, i.e., Lie algebras generated by Pauli strings. They do so by use
of the \emph{frustration graph}~\cite{chapman2020characterization},
a graph whose vertices are the generators and edges
exist if two vertices do not commute (see Appendix \ref{sec:frustr}). Aguilar et al.\ show
that any frustration graph can, in principle, be reduced via certain operations to one of several fundamental graphs, and provide the corresponding fundamental dynamical Lie
algebras.

While Ref.~\cite{aguilar2024full} is quite general
and shows that the dynamical Lie algebra generated by any set of Pauli strings is equivalent to that generated by one of their fundamental graphs, it is not immediately clear which one.  In contrast, although it is restricted to algebras generated by 1 and 2-local Pauli strings, we do provide an explicit answer.
Moreover, the approach of this work is based on interaction graphs, which can be more convenient in a number of ways, e.g., when dealing with subgraphs and their corresponding DLAs. In particular, with interaction graphs we can apply equivalence relations on a subgraph in a local manner (c.f. Lemma~\ref{lem:loc}), whereas this cannot be necessarily done for frustration graphs. Although our results primarily make use of interaction graphs, we note that frustration graphs are a
useful tool and that a number of our results can be obtained using them. For completeness, we have included
a discussion of frustration graphs and some associated proofs in the Appendix. 

\subsection{Summary of the Main Results}

Here we give a brief summary of our main results, which extend the classification of DLAs generated by 1- and 2-local operators from
line, cycle, and complete graphs ($L_n$, $C_n$, $K_n$)
\cite{wiersema2023classification} 
to DLAs generated by the same operators on an arbitrary undirected graph $G$. 
The vertices of the interaction graph $G$ correspond to qubits, and its edges to 2-local interactions between qubits.
Note that when $G$ is a disjoint union of connected components, the DLA is a direct sum of commuting subalgebras corresponding to the components.

In the following, we will consider connected graphs $G$ with $n$ vertices and $E$ edges.  
Because we have already considered the line and cycle graphs in~\cite{wiersema2023classification}, 
we will be focusing on graphs with at least one vertex of degree $>2$.
As the interaction graph is undirected, we further restrict the classification to the case where the DLA is symmetric under exchange of the qubits, which means that whenever we have a
generator such as $XY$ between two sites we must also have $YX$.

As in Ref.~\cite{wiersema2023classification},
we have two types of Lie algebras: $\aa$-type, which can
be generated by a set formed of only 2-local interactions (not including the identity matrix),
and $\bb$-type for which this is not possible.
The list of these Lie algebras, together with their generators and bases, is given in Appendix \ref{sec:list_dla}. When restricted to symmetric DLAs, we are left with 9 $\aa$-type and 3 $\bb$-type Lie algebras~\cite{wiersema2023classification}:
$\aa_k$ for $k = 0,2,4,6,7,14,16,20,22$, and $\bb_l$ for $l = 0,1,3$. 
The corresponding DLAs will be denoted as $\aa_k^G$ and $\bb_l^G$.
For example, $\aa_0=\Span_{\mathbb R}\{iXX\}$, gives that $\aa^G_0$ is Abelian and has a basis over $\mathbb R$ given by a copy of $iXX$ acting on every edge of $G$.
Similarly, $\bb$-type Lie algebras can be analyzed easily: $\bb_0^G, \bb_1^G$ are Abelian, and $\bb_3^G$ only contains 1-qubit operators.

The DLAs that arise in this way strongly depend on whether
the graph $G$ is bipartite (BP) or non-bipartite (NBP),
and we will consider these two cases separately.
Recall that a graph is called \emph{bipartite} if its vertices can be colored in two colors
so that every edge connects vertices of different colors.

\ifTikz
\begin{figure}[htp!]
    \centering
    \includegraphics[width=\columnwidth]{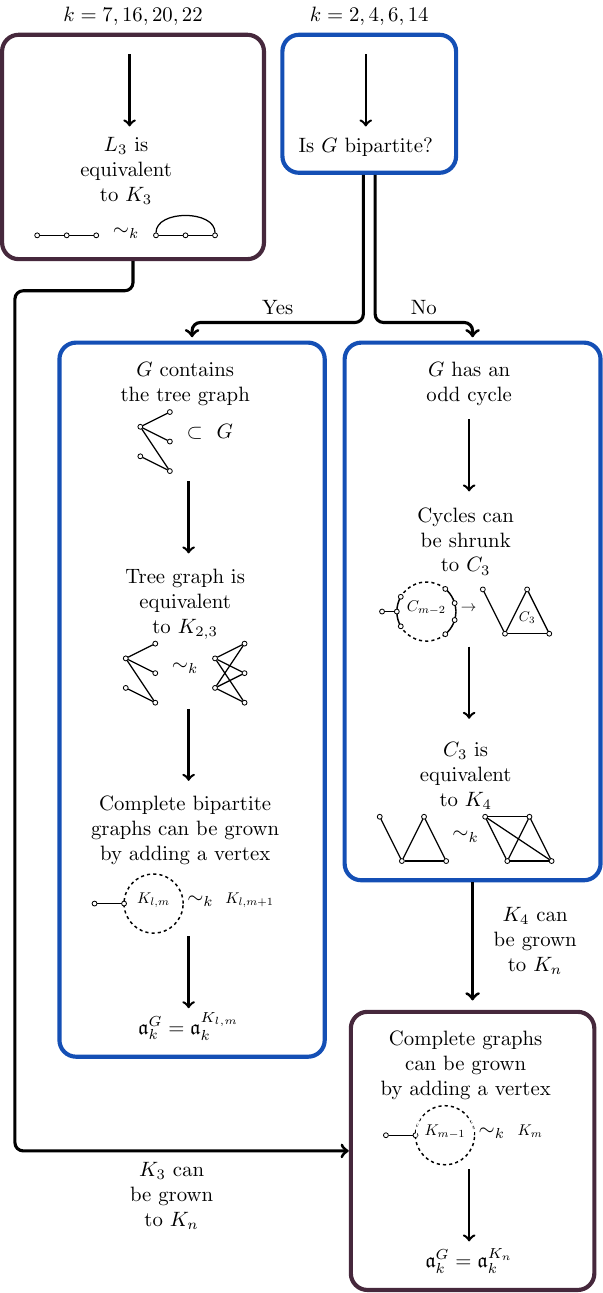}
    \caption{\textbf{Summary of the proof.} We consider two different cases. In the first case, $k=7,16,20,22$, we show that the graph is equivalent to the complete graph, and then use the results of~\cite{wiersema2023classification} to quickly identify the DLAs. In the second case,
    $k=2,4,6,14$, we analyze separately bipartite and non-bipartite graphs.$\vspace{5cm}$}
    \label{fig:summary}
\end{figure}
\fi

Our main result is as follows:
\begin{theorem}[Classification of dynamical Lie algebras of 2-local spin interactions on connected undirected graphs]\label{the:classification}
    For any connected undirected graph $G$ with $n$  
    vertices and $E$ edges, which has at least one vertex of degree $>2$, we have:
    \allowdisplaybreaks
    \begin{align*}
    \aa_0^G      &\cong	\uu(1)^{\oplus E},\\
    \aa_2^G&\cong  \begin{cases}
                \mathrm{BP}:
                \begin{cases}
                \su(2^{n-2})^{\oplus 2}, &\text{$l,m$ odd,} \\
                \so(2^{n-2})^{\oplus 4}, &\text{$l,m$ even,} \\
                \so(2^{n-1}), &\text{$n=l+m$ odd,} 
            \end{cases}\\
        \mathrm{NBP}:\so(2^{n-1})^{\oplus2}, %\;\; n\ge3,
    \end{cases}\\
    \aa_4^G&\cong \begin{cases}
         \mathrm{BP}:\aa_2^G,\\
         \mathrm{NBP}:\begin{cases} 
            \su(2^{n-1}),  & n \;\;\text{odd}, \\
            \su(2^{n-2})^{\oplus 4},  & n \;\;\text{even},
            \end{cases} \\
    \end{cases}\\
    \aa_6^G&\cong \begin{cases}
         \mathrm{BP}:\aa_7^G, \\
         %= \begin{cases} \su(2^{n-1}), & n \;\;\text{odd}, \\
        %\su(2^{n-2})^{\oplus 4}, & n \;\;\text{even},
        %\end{cases}\\
         \mathrm{NBP}:\su(2^{n-1})^{\oplus2}, \\
    \end{cases}\\
    \aa^G_7 &\cong \begin{cases} \su(2^{n-1}), & n \;\;\text{odd}, \\
        \su(2^{n-2})^{\oplus 4}, & n \;\;\text{even},
        \end{cases} \\
        \aa_{14}^G&\cong \begin{cases}
         \mathrm{BP}:\begin{cases}
            \sp(2^{n-2})^{\oplus 2}, &\text{$l,m$ odd,} \\
            \so(2^{n-1})^{\oplus 2}, &\text{$l,m$ even,} \\
            \su(2^{n-1}), &\text{$n=l+m$ odd,} \\
        \end{cases}\\
         \mathrm{NBP}:\aa_6^G, \\   
         \end{cases}      \\
    \aa^G_{16} &\cong \so(2^n),  \\
    \aa^G_{20} &\cong\su(2^{n-1})^{\oplus2}, \\
    \aa^G_{22} &= \su(2^n), \\
    \bb_0^G      &\cong	\uu(1)^{\oplus n},\\
    \bb_1^G      &\cong	\uu(1)^{\oplus (n+E)},\\
    \bb_3^G      &\cong	\su(2)^{\oplus n}.
    \end{align*}
    Here $l$ and $m$ represent the number of vertices in each of the two colors of the bipartite graph $G$.
\end{theorem}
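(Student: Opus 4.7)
The plan is to dispatch the Abelian and trivially-structured families first, then handle the two main cases described in Figure~\ref{fig:summary}: those DLAs for which $G$ can be reduced to the complete graph $K_n$, and those for which the bipartite/non-bipartite dichotomy controls the answer.

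First, the cases $\aa_0^G$, $\bb_0^G$, $\bb_1^G$, $\bb_3^G$ are immediate. The generator sets are either Abelian ($\aa_0^G$, $\bb_0^G$, $\bb_1^G$) or split into commuting single-site blocks ($\bb_3^G$), so the DLA is simply the real span of the generators. Counting the number of independent generators (one per edge, one per vertex, or three per vertex depending on the type) yields the claimed direct sums of $\uu(1)$ and $\su(2)$ factors.

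Next, for $k \in \{7, 16, 20, 22\}$ the strategy is to prove $\aa_k^G = \aa_k^{K_n}$ whenever $G$ is connected and has a vertex of degree $>2$, and then read off the isomorphism type from \cite{wiersema2023classification}. The reduction proceeds by induction on graph distance: if $\{u,w\}$ and $\{w,v\}$ are both edges of $G$, then a nested commutator computation confined to the three-vertex subgraph (justified by Lemma~\ref{lem:loc}) produces the full set of $\aa_k$-generators on the non-edge $\{u,v\}$. Iterating, every pair of vertices becomes effectively connected, and the hypothesis that some vertex has degree $>2$ is exactly what makes this propagation possible (ruling out the exceptional one-dimensional behavior of $L_n$ and $C_n$).

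The substantive case is $k \in \{2, 4, 6, 14\}$, where the generator set preserves an extra $\mathbb{Z}_2$-grading (a global Pauli-parity symmetry). I would argue in two steps. First, use Lemma~\ref{lem:loc} to show that the DLA is invariant under adding or removing any edge that does not change the bipartite/non-bipartite status of $G$, reducing to a canonical bipartite graph (e.g.\ a suitable spanning tree plus, in the NBP case, a single odd-cycle chord). Second, identify the Lie algebra on the canonical graph by decomposing the Hilbert space under the conserved parity operators: in the NBP case, an odd cycle collapses one parity sector, producing the single direct sum stated; in the BP case, the two independent color-class parities give a $\mathbb{Z}_2\times\mathbb{Z}_2$ decomposition whose sector dimensions depend on the parities of $l$ and $m$, and the restriction of the generators to each sector preserves a specific bilinear form (symmetric for $\so$, antisymmetric for $\sp$, or none for $\su$).

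The main obstacle is the bipartite subcase of $\aa_2^G$ and $\aa_{14}^G$, where three distinct parity patterns of $(l,m)$ give three different classical algebras. Identifying the correct form preserved in each sector (and verifying that the DLA actually fills out the full $\so$, $\sp$, or $\su$ rather than a proper subalgebra) requires an explicit change of basis on the parity eigenspaces, plus a dimension count matched against the known DLAs of $L_n$ and $K_n$ from \cite{wiersema2023classification}, which serve as the base cases that anchor the locality-based reduction.
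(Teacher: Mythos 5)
Your outline matches the paper's high-level architecture (trivial cases; complete-graph reduction for $k=7,16,20,22$; BP/NBP split for $k=2,4,6,14$; symmetry-sector identification of the algebras on the reduced graph), but the core of the hard case is asserted rather than proved, and one assertion is false as stated. You claim that Lemma~\ref{lem:loc} shows the DLA is ``invariant under adding or removing any edge that does not change the bipartite/non-bipartite status of $G$,'' and you propose shrinking $G$ to a canonical graph (a spanning tree plus, in the NBP case, one chord). Removal invariance fails: pruning a bipartite graph with a branch vertex down to a spanning path keeps it bipartite, yet the DLA collapses from exponential to the polynomial-size one-dimensional algebras of $L_n$ --- this is exactly the ``1D is special'' phenomenon the theorem isolates, so no minimal canonical graph can carry the full DLA in general. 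Addition invariance is true but is precisely the nontrivial content: Lemma~\ref{lem:loc} only transports \emph{already established} local equivalences, and the actual equivalences $\Sigma\sim_k K_{2,3}$ and $\Omega\sim_k K_4$ (Lemmas~\ref{lem:deltak23} and~\ref{lem:k3plusk4}) require explicit nested-commutator identities such as Eqs.~\eqref{eq:X1Y4}--\eqref{eq:Z1Z3}, followed by the growth inductions of Lemmas~\ref{lem:kn1kn_again} and~\ref{lem:knm1kn1m} that enlarge the graph \emph{upward} to $K_n$ or $K_{l,m}$ (Theorems~\ref{thm:branch_complete} and~\ref{thm:klm}). Your proposal contains no substitute for these computations, so the reduction step has a genuine hole. (Minor point: the degree-$>2$ hypothesis is not what makes the $k=7,16,20,22$ propagation work --- there $L_3\sim_k K_3$ suffices for any connected graph with $n\ge3$; the hypothesis is only needed for $k=2,4,6,14$ to exclude lines and cycles.)

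The identification step also needs more than you allow for. The paper's upper bound is not obtained from commuting parity operators alone but from the involution $\theta_{l,m}(g)=-Qg^TQ$ with $Q=Y^{\otimes l}\otimes X^{\otimes m}$, whose transpose behavior (governed by the parities of $l$ and $m$) is what selects $\so$, $\sp$ or $\su$ on each sector --- your ``two independent color-class parities'' picture is in this spirit but underspecified. More importantly, tightness cannot be settled by ``a dimension count matched against the known DLAs of $L_n$ and $K_n$'': the candidate fixed-point algebras sit strictly between those two, so comparison with them does not exclude a proper subalgebra. The paper closes this gap with a separate induction on Pauli strings (Lemma~\ref{lem:upbt}), showing every Pauli string fixed by $\theta_{l,m}$ can be driven by commutators with generators of $\aa_k^{K_{l,m}}$ to one with an identity factor; some argument of this kind (together with the reductions $\aa_2^G\cong\aa_4^G$, $\aa_6^G\cong\aa_7^G$ of Proposition~\ref{prop:iso}, or an equivalent device) is needed before the case analysis in $(l,m)$ can be declared complete.
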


A direct consequence \Cref{the:classification} is the following.

\begin{corollary}[Dimension scalings of the DLAs on an arbitrary graph] The dimension of the DLA generated by 2-local spin interactions on an arbitrary connected undirected graph, which has a vertex of degree $>2$,  will scale as $\mathcal{O}(n)$, $\mathcal{O}(n^2)$ or $\mathcal{O}(4^n)$ 
where $n$ is the number of vertices. 
The $\text{Poly}(n)$ scalings occur only when the DLA is Abelian $(\aa_0, \bb_0, \bb_1)$ or consists of only 1-qubit operators $(\bb_3)$. 
\end{corollary}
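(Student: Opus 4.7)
The plan is to derive the corollary as a direct application of Theorem~\ref{the:classification}, by enumerating the twelve candidate DLAs on the right-hand side and computing (or asymptotically bounding) the dimension of each. Because Theorem~\ref{the:classification} gives a finite, explicit list, nothing beyond classical-Lie-algebra dimension formulas and elementary graph-theoretic bounds on $E$ is needed.

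First I would handle the four Abelian/1-qubit cases by reading the dimensions off directly: $\dim \bb_0^G = n$ and $\dim \bb_3^G = 3n$, so both are $\mathcal{O}(n)$; and $\dim \aa_0^G = E$, $\dim \bb_1^G = n+E$, where in any simple connected graph $n-1 \le E \le \binom{n}{2}$, so both are $\mathcal{O}(n^2)$. This settles the polynomial branch of the corollary and, by the list in Appendix~\ref{sec:list_dla}, also identifies these as precisely the Abelian algebras ($\aa_0,\bb_0,\bb_1$) together with the 1-qubit algebra $\bb_3$.

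Next, for the remaining eight algebras $\aa_k^G$ with $k \in \{2,4,6,7,14,16,20,22\}$, I would verify that every branch of Theorem~\ref{the:classification} identifies the DLA, up to a direct sum of at most four summands, with one of $\su(N)$, $\so(N)$, or $\sp(N)$ for $N \in \{2^{n-2}, 2^{n-1}, 2^n\}$. Using the standard dimensions $\dim \su(N) = N^2-1$, $\dim \so(N) = N(N-1)/2$, and $\dim \sp(N) = N(2N+1)$, each such summand has dimension $\Theta(4^n)$, and a direct sum of a bounded number of copies preserves this order. A compact verification table laying out the parity subcases of $n,l,m$ against the three classical families handles all branches at once.

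There is no real mathematical obstacle here; the entire argument is bookkeeping on top of Theorem~\ref{the:classification}. The substantive content worth highlighting in the write-up is the \emph{gap} between $\mathcal{O}(n^2)$ and $\mathcal{O}(4^n)$: no intermediate scaling ever appears, and this is a consequence of the fact that every non-Abelian simple summand that arises in the classification already has rank exponentially large in $n$, so partial or ``smaller'' non-Abelian possibilities are ruled out at the level of Theorem~\ref{the:classification} itself rather than here.
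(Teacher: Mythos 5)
Your proposal is correct and is essentially the paper's own (implicit) argument: the corollary is stated as a direct consequence of Theorem~\ref{the:classification}, obtained exactly by reading off $\dim\aa_0^G=E$, $\dim\bb_0^G=n$, $\dim\bb_1^G=n+E$, $\dim\bb_3^G=3n$ with $n-1\le E\le\binom{n}{2}$, and noting that every remaining case is a bounded direct sum of classical algebras $\su(N)$, $\so(N)$, $\sp(N)$ with $N\in\{2^{n-2},2^{n-1},2^n\}$, hence of dimension $\Theta(4^n)$. No further comment is needed.
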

A consequence of this corollary is that the DLAs generated by 2-local interactions that can be simulated efficiently are limited to free fermion models~\cite{somma2005quantum, somma2006efficient}.

\subsection{Outline of the Proof}

The core ingredient of the proof of \Cref{the:classification} is the surprising observation that the 2-colorability of the graph $G$ completely determines the associated DLA $\aa_k^G$. As we show, if $G$ is not a line or cycle, the DLA is always isomorphic to either the DLA on the complete graph $K_n$ or the DLA on the complete bipartite graph $K_{l,m}$. 
The proof that $\aa_k^G \cong\aa_k^{K_n}$ or $\aa_k^G \cong\aa_k^{K_{l,m}}$ relies on two inductive techniques. The first technique involves showing that if the graph $K_{l,m}$ is connected to a single vertex, then the DLA on that graph will be equivalent to the DLA on the graph $K_{l+1,m}$ or $K_{l,m+1}$. The second technique allows us to go from a graph $K_{n-1}$ connected to a single vertex to the complete graph $K_n$. The proof then comes down to showing that for all $k$, one can generate the DLA of $K_{2,3}$ or $K_4$ with elements of $G$. The last step is to determine the DLAs $\aa_k^G$ for $G=K_{l,m}$ and $G=K_n$ for $k = 2,4,6,7,14,16,20$. We summarize the outline of the proof in \Cref{fig:summary}.

\section{Preliminaries}\label{sec:preliminaries}
\ifTikz
\begin{figure*}[htb!]
\begin{tikzpicture}[every node/.style={align=center,anchor=base},baseline=0pt]
    \node (A1) at (0,0.)[anchor=south] {
    \begin{tikzpicture}[scale=1.4] % []
    \begin{scope}[every node/.style={circle,thick,draw,scale=0.5}]
        \node (A) at (0,0) {};
        \node (B) at (0.5,0) {};
        \node (C) at (1,0) {};
        \node (D) at (1.5,0) {};
    \end{scope}
    \begin{scope}[every edge/.style={draw=black,very thick}]
        \path [-] (A) edge (B);
        \path [-] (B) edge (C);
        \path [-] (C) edge (D);=
    \end{scope}
    \end{tikzpicture}
    };
    \node (A2) at (2.5,0)[anchor=south] {
    \begin{tikzpicture}
        \begin{scope}[every node/.style={circle, thick,draw, scale=0.5}]
            % Draw the first part of the arc (0 to 45 degrees)
            \draw[line width=.5mm] (0.75,0) arc (0:360:0.75);
            % Draw the dashed part of the arc (45 to 135 degrees)
            \foreach \i in {0,60,120,180,240,300}{
                \draw (\i:0.75) node[circle, fill=white,anchor=center] {};
            }
        \end{scope}
    \end{tikzpicture}
    };
    \node (A3) at (5.,0)[anchor=south] {
    \begin{tikzpicture}[scale=1.4] % []
    % Define the number of vertices
    \def\n{6}
    
    % Define the radius of the circumscribed circle
    \def\r{0.65}

    % Draw the septagon
    \begin{scope}[every node/.style={circle, thick, draw, scale=0.5}]
    \foreach \i in {1,...,\n} {
        % Calculate the angle for the current vertex
        \pgfmathsetmacro\angle{360*\i/\n}
        % Define the coordinates of the current vertex
        \coordinate (P\i) at (\angle:\r);
        % Draw the vertices
        \node[fill=white] (C\i) at (P\i)[anchor=center] {};
    }
    \end{scope}
    \foreach \i in {1,...,\n} {
    % Draw the edges between the vertices
        \foreach \j in {1,...,\n} {
            \ifnum\i<\j
                \draw[thick] (C\i) -- (C\j);
            \fi
        }
    }
    \end{tikzpicture}
    };
    \node (Ag0) at (0, 2.) {};
    \node (Ag1) at (0,-0.5) {};
    
    \node (A4) at (8,0)[anchor=south] {
    \begin{tikzpicture}[scale=1.4] % []
    \begin{scope}[every node/.style={circle,thick,draw,scale=0.5}]
        \node (A) at (0,0.) {};
        \node (B) at (0.5,0.0) {};
        \node (C) at (1,0) {};
        \node (D) at (0.33,1) {};
        \node (E) at (0.66,1) {};
        \node (F) at (1.,1) {};
        \node (G) at (0,1) {};
    \end{scope}
    \begin{scope}[every edge/.style={draw=black,very thick}]
        \path [-] (A) edge (D);
        \path [-] (A) edge (E);
        \path [-] (B) edge (F);
        \path [-] (B) edge (E);
        \path [-] (C) edge (D);
        \path [-] (C) edge (E);
        \path [-] (A) edge (G);
    \end{scope}
    \end{tikzpicture}
    };
    \node (A5) at (11.0,0)[anchor=south]{
    \begin{tikzpicture}[scale=1.4] % []
    \begin{scope}[every node/.style={circle,thick,draw,scale=0.5}]
        \node (A) at (0,0.) {};
        \node (B) at (0.3,0.5) {};
        \node (C) at (1.6,0) {};
        \node (D) at (0.2,1) {};
        \node (E) at (0.6, 0.7) {};
        \node (F) at (1.1,0.9) {};
        \node (G) at (1, 0.7) {};
    \end{scope}
    \begin{scope}[every edge/.style={draw=black,very thick}]
        \path [-] (A) edge (B);
        \path [-] (A) edge (D);
        \path [-] (B) edge (C);
        \path [-] (C) edge[bend left=20] (G);
        \path [-] (D) edge (E);
        \path [-] (E) edge (F);
        \path [-] (G) edge (E);
        \path [-] (G) edge (F);
        \path [-] (F) edge (C);
        \path [-] (E) edge (B);
    \end{scope}
    \end{tikzpicture}
    };
    \node (A6) at (14,0)[anchor=south]{
    \begin{tikzpicture}[scale=1.4] % []
    \begin{scope}[every node/.style={circle,thick,draw,scale=0.5}]
        \node (A) at (0,0.) {};
        \node (B) at (0.5,0.0) {};
        \node (C) at (1,0) {};
        \node (D) at (0.33,1) {};
        \node (E) at (0.66,1) {};
    \end{scope}
    \begin{scope}[every edge/.style={draw=black,very thick}]
        \path [-] (A) edge (D);
        \path [-] (A) edge (E);
        \path [-] (B) edge (D);
        \path [-] (B) edge (E);
        \path [-] (C) edge (D);
        \path [-] (C) edge (E);
    \end{scope}
    \end{tikzpicture}
    };
    \node (Ag2) at (6.5,-0.5) {};
    \node (Ag3) at (6.5, 2.) {};
    \node[below] at (A1.south) {(a) $L_4$};
    \node[below] at (A2.south) {(b) $C_6$};
    \node[below] at (A3.south) {(c) $K_6$};
    \node[below] at (A4.south) {(d) Bipartite $G$};
    \node[below] at (A5.south) {(e) Non-bipartite $G$};
    \node[below] at (A6.south) {(f) $K_{2,3}$};

    \begin{scope}[every node/.style={inner sep=1pt,line width=2pt, rounded corners=8pt}]
        \node[draw, fit=(A1) (A2) (A3) (Ag0) (Ag1), color=oldgraph] (border1) {};
        \node[draw, fit=(A4) (A5) (A6) (Ag2) (Ag3), color=fgraph] (border2) {};
    \end{scope}
    \node[below] at (border1.south) {Previous work};
    \node[below] at (border2.south) {This work};
\end{tikzpicture}
    \caption{\textbf{Examples of interaction graphs $\boldsymbol{G}$. }(a-c) The line, cycle and complete graph were the focus of our work in~\cite{wiersema2023classification}. (d-e) We extend our classification to arbitrary graphs, which will require studying the DLAs of complete bipartite graphs $K_{l,m}$.}
    \label{fig:graph_examples}
\end{figure*}
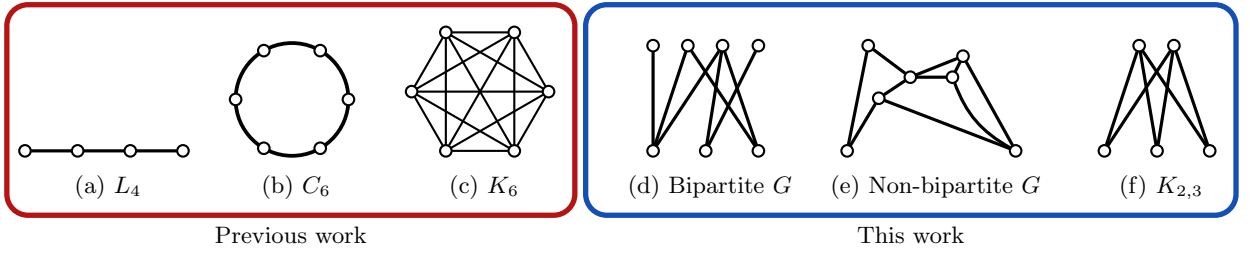
\fi
Consider a Hamiltonian 
on $n$ qubits, which is a sum of Hermitian terms,
\begin{align*}
    H = \sum_{m=1}^M a_m, \qquad a_m\in i\,\uu(2^n).
\end{align*}
Note that the space $\uu(2^n)$ of all skew-Hermitian $2^n\times 2^n$ matrices is a Lie algebra under the commutator
$[a,b] = ab - ba$.
The elements $ia_m\in \uu(2^n)$ generate a Lie subalgebra of $\uu(2^n)$, called the \emph{dynamical Lie algebra} (DLA) ~\cite{dalessandro2021, Albertini2001dynlie}; this is the smallest (under inclusion) subalgebra containing $ia_1, \ldots, ia_{M}$ (see, e.g.,~\cite[Supplemental Material A II]{wiersema2023classification} for more details).

We will denote the DLA generated by the set $\mcA=\{ a_1,\ldots, a_{M}\}$ as $\Lie{\mcA}$. Explicitly, $\Lie{\mcA}$ consists of all real linear combinations of nested commutators of elements of $i\mcA$:
\begin{equation}\label{eq:adj}
\begin{split}
\ad_{ia_{m_1}} &\cdots \ad_{ia_{m_r}} (ia_{m_{r+1}}) \\
&= i^{r+1} [a_{m_1},[a_{m_2},[\cdots[a_{m_r},a_{m_{r+1}}]\cdots]]].
\end{split}
\end{equation}
Here $r\ge0$, $1\le m_1,\dots,m_{r+1} \le M$, and we used the standard notation $\ad_a(b)= [a,b]$.

We will assume that the terms of the Hamiltonian $H$ correspond to 2-local spin interactions that are determined by an undirected graph $G$ with $n$ vertices. More precisely, we consider a subset $\mcA_2 \subset \{I,X,Y,Z\}^{\otimes 2}$ of length-2 Pauli strings, not containing $I\otimes I$ (cf.\ Appendix \ref{secpauli1}). Then $H$ will have the form
\begin{align}
    H= \sum_{(i,j)\in G} \sum_{a\otimes a'\in \mcA_2} J_{i,j,a,a'} a_{i}\otimes a'_{j}, \label{eq:hamiltonian}
\end{align}
where the first sum is over the edges of $G$,
and $J_{i,j,a,a'}$
%$=J_{j,i,a,a'}$
are arbitrary real coefficients that are possibly time-dependent. 
In Eq.\ \eqref{eq:hamiltonian}, $a_i$ denotes the action of the matrix $a\in\{I,X,Y,Z\}$ on the $i$-th qubit, i.e.,
on the $i$-th factor in the tensor power $(\mathbb{C}^2)^{\otimes n}$.

Observe that if we replace the generating set $\mcA_2$ by the Lie subalgebra $\Lie{\mcA_2} \subset\su(4)$, then the DLA generated by the terms of $H$  remains the same. Hence, we can make use of the classification of all Lie subalgebras of $\su(4)$ that are generated by length-2 Pauli strings,
due to~\cite[Supplemental Material B I]{wiersema2023classification}.
As the graph $G$ is \emph{undirected}, for every edge $(i,j)\in G$, we also have an edge $(j,i)\in G$.
Because of this symmetry, we will assume that the
set $\mcA_2$ is \emph{symmetric} under the flip of the two qubits, i.e., $XY \in\mcA_2$ if and only if $YX \in\mcA_2$. Here and further, we use the shorthand notation $XY:=X\otimes Y$.
In particular, we assume that
whenever we have an element of the form $XI\in\mcA_2$, we also have $IX \in\mcA_2$ and vice versa, so that
we get an action of $X$ on each qubit. Thus, we can exclude the $\cc$-type Lie algebras from~\cite{wiersema2023classification}. 

By inspecting the bases of all Lie subalgebras of $\su(4)$ of $\aa$ and $\bb$ type, given in Appendix \ref{sec:list_dla}, we find that the list
of symmetric subalgebras consists of
$\aa_k$ for $k = 0,2,4,6,7,14,16,20,22$, and $\bb_l$ for $l = 0,1,3$.
The resulting DLAs, generated by the terms of a Hamiltonian $H$ of the form \eqref{eq:hamiltonian}, will be denoted as $\aa_k^G$ and $\bb_l^G$.
Explicitly, we have:
\allowdisplaybreaks
\begin{align*}
\aa_0       =&	\Lie{XX},\\
\aa_2       =&	\Lie{XY, YX},	\\
\aa_4       =&	\Lie{XX, YY},\\
\aa_6   	=&	\Lie{XX, YZ, ZY}, \\
\aa_7   	=&	\Lie{XX, YY, ZZ}, \\
\aa_{14}	=&	\Lie{XX, YY, XY, YX},\\
\aa_{16}	=&	\Lie{XY, YX, YZ, ZY}, \\
\aa_{20}	=&	\Lie{XX, YY, YZ, ZY}, \\	
\aa_{22}    =&	\Lie{XX, XY, YX, XZ, ZX} = \su(4), \\
\bb_0       =&	\Lie{XI, IX},\\
\bb_1       =&	\Lie{XX, XI, IX},\\
\bb_3       =&	\Lie{XI, YI, IX, IY}.
\end{align*}
Note that some of Lie algebras $\aa_k$ can be generated by non-symmetric sets, but we have provided symmetric generators for each of them.
For example, $\aa_{14}$ can be generated by the minimal set $\{XX, YY, XY\}$, and also by the symmetric set $\{XX, YY, XY, YX\}$.

In~\cite{wiersema2023classification}, we classified the DLAs generated by 2-local spin Hamiltonians on line, cycle and complete graphs. 
The notation of~\cite{wiersema2023classification} compares to our present notation
%We denoted the resulting DLAs 
as follows (see \Cref{fig:graph_examples}):
\begin{enumerate}
    \item Line graph $L_n$: $\aa_k(n) = \aa_k^{L_n}$;
    \item Cycle graph $C_n$: $\aa_k^\circ(n) = \aa_k^{C_n}$;
    \item Complete graph $K_n$: $\aa_k^\pi(n) = \aa_k^{K_n}$.
\end{enumerate}

\begin{example}
For the Lie algebra $\aa_0=\Lie{XX}$, example Hamiltonians on $L_n$, $C_n$, $K_n$, and arbitrary graph $G$ are given respectively by:
\begin{align*}
H=&	\sum_{i=1}^{n-1} X_i X_{i+1},\\
H=&	\sum_{i=1}^n X_i X_{i+1}, \qquad X_{n+1}:=X_1, \\
H=&	\sum_{1 \le i<j \le n} X_i X_j, \\
H=&	\sum_{(i,j) \in G} X_i X_j.
\end{align*}
\end{example}

\section{Graph reduction}
From now on, we consider the values $k = 2,4,6,7,14,16,20,22$.
The approach of~\cite{wiersema2023classification} was to determine $\aa_k(n)= \aa_k^{L_n}$ inductively by increasing the number $n$ of sites in the system. The analogue of this procedure for a general graph $G$ will be to add new vertices and edges to the graph.
This motivates the following definition.

\begin{definition}\label{def:Gsub}
    We say that a graph $G$ is a \emph{subgraph} of a graph $G'$ if all the vertices and edges in $G$ are also in $G'$. We denote this by $G \subset G'$. 
\end{definition}

It is obvious from the definitions that
\begin{equation}\label{eq:order}
    G \subset G' \implies \aa_k^G \subseteq \aa_k^{G'}.
\end{equation}
In particular, $G\subset K_n$ for any graph $G$ with $n$ vertices, which gives the upper bound $\aa_k^G \subseteq \aa_k^{K_n}$.
Similarly, if $G$ is bipartite, with $l$ and $m$ vertices in each of the two colors, then $G\subset K_{l,m}$ and $\aa_k^G \subseteq \aa_k^{K_{l,m}}$.
Our goal will be to show that these upper bounds are in fact equalities, provided that $G$ has at least one vertex of degree $>2$.

It will be convenient to introduce the following notion.
\begin{definition}\label{def:Geq}
    Two graphs $G$ and $G'$ are called \emph{$k$-equivalent} if and only if $\aa_k^G = \aa_k^{G'}$. We denote this equivalence as $G \sim_k G'$. 
\end{definition}
Thus, we want to show that for all $k$ the graph $G$ is $k$-equivalent to either $K_n$ or $K_{l,m}$. 
An important observation is that the notion of equivalence can be applied locally to subgraphs. More precisely, we have:

\begin{lemma}\label{lem:loc}
Consider subgraphs $G_1\subset G_2$ and $G'_1\subset G'_2$, such that $G_i$ and $G'_i$ have the same set of vertices $V_i$ for $i=1,2$. Suppose that:
\begin{enumerate}
    \item All edges of\/ $G_1$ and $G_2$ connecting two vertices from $V_1$ are the same;
    \item All edges of\/ $G'_1$ and $G'_2$ connecting two vertices from $V_1$ are the same;
    \item All edges of\/ $G_2$ and $G'_2$ that have at least one vertex outside $V_1$ are the same.
\end{enumerate}
Then $G_1 \sim_k G'_1$ implies $G_2 \sim_k G'_2$.
\end{lemma}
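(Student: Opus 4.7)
The plan is to show that $\aa_k^{G_2}$ and $\aa_k^{G'_2}$ admit a \emph{common} generating set, obtained by replacing the generators coming from edges inside $V_1$ with the full Lie subalgebras they generate. The key fact I will invoke is the general observation that for any subsets $S, R$ of $\uu(2^N)$ and any $T \subseteq \Lie{S}$, one has $\Lie{S \cup R} = \Lie{S \cup T \cup R}$; in particular, $\Lie{S \cup R} = \Lie{\Lie{S} \cup R}$. Adjoining to a generating set elements already in its Lie closure does not enlarge the resulting Lie algebra.

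First, I would partition the edges of $G_2$ into two disjoint classes: those with both endpoints in $V_1$, and those with at least one endpoint outside $V_1$. By condition~1 of the lemma, the first class coincides with the edge set of $G_1$; write $\mcB_{\text{out}}$ for the collection of 2-local generators $a_i \otimes a'_j$ attached to edges of the second class. Applying the fact above yields
\begin{equation*}
\aa_k^{G_2} = \Lie{\aa_k^{G_1} \cup \mcB_{\text{out}}},
\end{equation*}
where $\aa_k^{G_1}$ is understood as embedded in $\uu(2^{|V_2|})$ by padding every element with identities on the qubits in $V_2 \setminus V_1$. The exact same reasoning, now invoking condition~2, gives
\begin{equation*}
\aa_k^{G'_2} = \Lie{\aa_k^{G'_1} \cup \mcB'_{\text{out}}},
\end{equation*}
with $\mcB'_{\text{out}}$ defined analogously from the edges of $G'_2$ outside $V_1$. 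Condition~3 forces $\mcB_{\text{out}} = \mcB'_{\text{out}}$, and the hypothesis $G_1 \sim_k G'_1$ is precisely $\aa_k^{G_1} = \aa_k^{G'_1}$. Hence the two right-hand sides coincide and $\aa_k^{G_2} = \aa_k^{G'_2}$, i.e.\ $G_2 \sim_k G'_2$.

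There is no real obstacle here; the proof is essentially bookkeeping around the definition of the DLA. The one point I would be careful to state precisely is the embedding step: the equality $\aa_k^{G_1} = \aa_k^{G'_1}$ lives naturally inside $\uu(2^{|V_1|})$ and must be lifted to $\uu(2^{|V_2|})$ before being combined with $\mcB_{\text{out}}$. Since padding with identities on the qubits of $V_2 \setminus V_1$ is an injective Lie algebra homomorphism, the equality is preserved under this embedding, and the argument closes.
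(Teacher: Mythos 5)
Your proposal is correct and follows essentially the same route as the paper's proof: split the generators of $\aa_k^{G_2}$ into those from edges inside $V_1$ and those from edges meeting $V_2\setminus V_1$ (common to $G_2$ and $G'_2$ by condition 3), then use $\Lie{\mcB_1\cup\mcB''}=\Lie{\Lie{\mcB_1}\cup\mcB''}$ together with $\Lie{\mcB_1}=\Lie{\mcB'_1}$ from $G_1\sim_k G'_1$. Your explicit remark about padding with identities on $V_2\setminus V_1$ makes precise a step the paper leaves implicit, but the argument is the same.
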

\begin{proof}
By construction, $\aa_k^{G}$ is generated as a Lie algebra by $a_i \otimes b_j$ for every $a\otimes b\in\mcA_2$ (the generating set of $\aa_k$) and every edge $(i,j)$ of $G$. Let us denote the generators of $\aa_k^{G_i}$ and $\aa_k^{G'_i}$ by $\mcB_i$ and $\mcB'_i$, respectively ($i=1,2$), so that $\aa_k^{G_i} = \Lie{\mcB_i}$ and $\aa_k^{G'_i} = \Lie{\mcB'_i}$.
By assumption, we have
\begin{align*}
\mcB_2 = \mcB_1 \cup \mcB'' , \quad \mcB'_2 = \mcB'_1 \cup \mcB'' ,
\end{align*}
where $\mcB''$ corresponds to edges that are not within $V_1$; such edges are common for $G_2$ and $G'_2$.
The claim now follows, because the process of generating a Lie subalgebra from a subset is transitive:
\begin{align*}
\Lie{\mcB_1 \cup \mcB''} &= \Lie{\Lie{\mcB_1}\cup\mcB''} \\
&= \Lie{\Lie{\mcB'_1}\cup\mcB''} = \Lie{\mcB'_1 \cup \mcB''},
\end{align*}
where we have used that $\Lie{\mcB_1} = \Lie{\mcB'_1}$ by the assumption $G_1 \sim_k G'_1$.
\end{proof}

In the following, we will consider separately two cases for~$k$.

\subsection{The Case $k=7,16,20,22$}
From~\cite[Theorem IV.3]{wiersema2023classification}, we know that for these values of $k$ and $n\geq 3$ we have $\aa_k^{L_n} = \aa_k^{K_n}$. 
Hence, if a graph $G$ with $n$ vertices has a Hamiltonian path (i.e., some $L_n \subset G$), then we obtain from Eq.\ \eqref{eq:order} that $\aa_k^{G} = \aa_k^{K_n}$. We will prove that the last equality holds without any assumptions.

\begin{theorem}\label{thm:complete}
    For $k = 7, 16, 20, 22$ and any connected graph $G$ with $n\geq 3$ vertices, we have $\aa_k^{G} = \aa_k^{K_n}$, i.e.,
    $G \sim_k K_n$.
\end{theorem}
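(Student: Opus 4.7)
The plan is to prove the theorem by induction on $n$, the number of vertices of $G$, using Lemma~\ref{lem:loc} to perform local graph substitutions. The base case $n=3$ is immediate: up to isomorphism the only connected graphs on three vertices are $L_3$ and $K_3$, and $\aa_k^{L_3} = \aa_k^{K_3}$ for $k = 7,16,20,22$ by Theorem IV.3 of~\cite{wiersema2023classification}.

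For the inductive step with $n\geq 4$, I would start by choosing a non-cut vertex $v$ of $G$, which exists because every connected graph on at least two vertices has such a vertex (e.g.\ any leaf of a spanning tree, or either endpoint of a longest path). Then $G-v$ is a connected graph on $n-1$ vertices, so the inductive hypothesis gives $\aa_k^{G-v} = \aa_k^{K_{n-1}}$. Viewing both $G-v$ and $K_{n-1}$ as graphs on the common vertex set $V_1 := V(G)\setminus\{v\}$, I apply Lemma~\ref{lem:loc} with $G_1 = G-v$, $G'_1 = K_{n-1}$, and $G_2 = G$, to conclude $G\sim_k G'$, where $G'$ consists of a copy of $K_{n-1}$ on $V_1$ together with $v$ and its original neighbors in $G$.

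The second phase is to upgrade $G'$ to $K_n$ by filling in the missing edges incident to $v$. Since $G$ is connected, $v$ has at least one neighbor $u\in V_1$ in $G'$. For any non-neighbor $w\in V_1$ of $v$, the induced subgraph on $V_1'=\{v,u,w\}$ in $G'$ has precisely the edges $(v,u)$ and $(u,w)$, i.e.\ it is $L_3$, whereas the full $K_3$ on $V_1'$ is what we would have if $(v,w)$ were also present. Invoking $L_3 \sim_k K_3$ together with Lemma~\ref{lem:loc} applied locally at $V_1'$, I can add the edge $(v,w)$ without changing the DLA. Iterating this over all non-neighbors of $v$ fills in every missing edge at $v$, so $G' \sim_k K_n$, and composing the two equivalences gives $G\sim_k K_n$.

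The main subtlety, and the place where care is required, is verifying the hypotheses of Lemma~\ref{lem:loc} at each application: both the edges strictly inside the distinguished vertex set and the edges with an endpoint outside it must match between the original and the modified graphs. In particular, it is essential to pick $v$ as a non-cut vertex (so that the inductive hypothesis applies to $G-v$) and to pick $w$ as a genuine non-neighbor of $v$ at each step (so that the induced triple really is $L_3$ rather than $K_3$, preventing any mismatch of edges inside $V_1'$). Once these choices are made, the remainder of the argument is direct application of the lemma.
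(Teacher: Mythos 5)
Your argument is correct: both applications of \Cref{lem:loc} are legitimate (the hypotheses on edges inside and outside the distinguished vertex sets are satisfied exactly as you describe), the non-cut vertex guarantees $G-v$ is connected with $n-1\ge 3$ vertices so the inductive hypothesis applies, and the reattachment phase only ever needs $L_3\sim_k K_3$. The route, however, is organized differently from the paper's. The paper does not delete vertices: it grows a complete subgraph inside the fixed graph $G$, first using $L_3\sim_k K_3$ and \Cref{lem:loc} to produce a $K_4$, and then invoking \Cref{lem:kn1kn}, which says that a $K_{m-1}$ attached by one edge to an extra vertex is $k$-equivalent to $K_m$, iterating until all of $G$ is complete. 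Your inductive step on the number of vertices replaces this clique-growing scheme: you complete $G-v$ by induction and then reattach $v$, and your second phase (filling in the edges from $v$ to the complete graph on $V_1$ via repeated $L_3\sim_k K_3$ moves) is precisely the content of the paper's \Cref{lem:kn1kn} in the case $m=n$. What your version buys is a cleaner single induction on $n$ that avoids the intermediate construction of a $K_4$; what it costs is the extra (standard, and correctly justified) graph-theoretic input that every connected graph on at least two vertices has a non-cut vertex, which the paper's bottom-up growth never needs. The paper's structure also sets up the template reused later for $k=2,4,6,14$ (\Cref{lem:kn1kn_again}), which is why it is phrased as clique growth rather than vertex deletion.
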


\begin{proof}
For $n=3$, the only connected graphs with $3$ vertices are $L_3$ and $K_3$. As a special case of~\cite[Theorem IV.3]{wiersema2023classification}, we have $L_3 \sim_k K_3$:

    \ifTikz
    \begin{align*}
    \begin{tikzpicture}[baseline=0mm, scale=2] % []
    \begin{scope}[every node/.style={circle,thick,draw,scale=0.5}]
        \node (A) at (0.0,0.0) {};
        \node (B) at (0.5,0.0) {};
        \node (C) at (1.0,0.0) {};
    \end{scope}
    \begin{scope}[every edge/.style={draw=black,very thick}]
        \path [-] (A) edge (B);
        \path [-] (B) edge (C);
    \end{scope}
    \end{tikzpicture} 
    \quad \adjustbox{scale=2}{$\sim_k$} \quad 
    \begin{tikzpicture}[baseline=0mm, scale=2]        \begin{scope}[every node/.style={circle,thick,draw,scale=0.5}]
        \node (A) at (0.0,0.0) {};
        \node (B) at (0.5,0.0) {};
        \node (C) at (1.0,0.0) {};
    \end{scope}
    \begin{scope}[every edge/.style={draw=black,very thick}]
        \path [-] (A) edge (B);
        \path [-] (B) edge (C);
        \path [-] (C) edge[bend right=90] (A);
    \end{scope}
    \end{tikzpicture}
    \end{align*}
    \fi
    For $n>3$, since $G$ is connected, we can find three vertices $x_1,x_2,x_3$ that form $L_3\subset G$, so that $G$ has edges $(x_1,x_2)$ and $(x_2,x_3)$. 
    Then Lemma \ref{lem:loc} and $L_3\sim_k K_3$ imply that, if $G$ does not already have an edge $(x_1,x_3)$, we can add it and obtain an equivalent graph:
    \ifTikz
    \begin{align*}
    \begin{tikzpicture}[baseline=3mm]
    \begin{scope}[every node/.style={circle,thick,draw,scale=0.5}]
        \node (A) at (0,0.) {};
        \node (B) at (1,0) {};
        \node (C) at (0,1) {};
        \node (D) at (1,1) {};
        \node (E) at (1.5,1.25) {};
        \node (F) at (-0.5,.75) {};
    \end{scope}
    \path [-] (A) edge[very thick, color=highlight] (B);
    \path [-] (C) edge[very thick] (D);
    \path [-] (D) edge[very thick] (B);
    \path [-] (E) edge[very thick] (B);
    \path [-] (F) edge[very thick] (C);
    \path [-] (C) edge[very thick, color=highlight] (B);
    \node[left] at (A) {$x_1$};
    \node[right] at (B) {$x_2$};
    \node[above] at (C) {$x_3$};
    \end{tikzpicture}
    \quad \adjustbox{scale=2}{$\sim_k$} \quad 
    \begin{tikzpicture}[baseline=3mm]
        \begin{scope}[every node/.style={circle,thick,draw,scale=0.5}]
            \node (A) at (0,0.) {};
            \node (B) at (1,0) {};
            \node (C) at (0,1) {};
            \node (D) at (1,1) {};
            \node (E) at (1.5,1.25) {};
            \node (F) at (-0.5,.75) {};
        \end{scope}
        \path [-] (A) edge[very thick, color=highlight] (B);
        \path [-] (C) edge[very thick] (D);
        \path [-] (D) edge[very thick] (B);
        \path [-] (E) edge[very thick] (B);
        \path [-] (F) edge[very thick] (C);
        \path [-] (C) edge[very thick, color=highlight] (B);
        \path [-] (C) edge[very thick, color=highlight] (A);
        \node[left] at (A) {$x_1$};
        \node[right] at (B) {$x_2$};
        \node[above] at (C) {$x_3$};
        \node[above] at (F) {};
        \end{tikzpicture}
    \end{align*}
    \fi

The idea of the proof is to keep doing this for any subgraph $L_3\subset G$, adding more and more edges to $G$ until it becomes a complete graph.
In more detail, as $G$ is connected,
    we can find a vertex $x_4$ in $G$ that is connected to at least one of $x_1,x_2,x_3$. 
    Without loss of generality, suppose that $x_4$ is connected to $x_3$. Then from the line $\{x_1,x_3,x_4\}$, we see that we can add the edge $(x_1,x_4)$ (if it is not already in $G$) to obtain an equivalent graph:
    \ifTikz
    \begin{align*}
    \begin{tikzpicture}[baseline=3mm]
        \begin{scope}[every node/.style={circle,thick,draw,scale=0.5}]
            \node (A) at (0,0.) {};
            \node (B) at (1,0) {};
            \node (C) at (0,1) {};
            \node (D) at (1,1) {};
            \node (E) at (1.5,1.25) {};
            \node (F) at (-0.5,.75) {};
        \end{scope}
        \path [-] (A) edge[very thick] (B);
        \path [-] (C) edge[very thick] (D);
        \path [-] (D) edge[very thick] (B);
        \path [-] (E) edge[very thick] (B);
        \path [-] (C) edge[very thick] (B);
        \path [-] (C) edge[very thick, color=highlight] (A);
        \path [-] (C) edge[very thick, color=highlight] (F);
        \node[left] at (A) {$x_1$};
        \node[right] at (B) {$x_2$};
        \node[above] at (C) {$x_3$};
        \node[above] at (F) {$x_4$};
    \end{tikzpicture}
    \quad \adjustbox{scale=2}{$\sim_k$} \quad 
    \begin{tikzpicture}[baseline=3mm]
        \begin{scope}[every node/.style={circle,thick,draw,scale=0.5}]
            \node (A) at (0,0.) {};
            \node (B) at (1,0) {};
            \node (C) at (0,1) {};
            \node (D) at (1,1) {};
            \node (E) at (1.5,1.25) {};
            \node (F) at (-0.5,.75) {};
        \end{scope}
        \path [-] (A) edge[very thick] (B);
        \path [-] (C) edge[very thick] (D);
        \path [-] (D) edge[very thick] (B);
        \path [-] (E) edge[very thick] (B);
        \path [-] (C) edge[very thick] (B);
        \path [-] (C) edge[very thick, color=highlight] (A);
        \path [-] (C) edge[very thick, color=highlight] (F);
        \path [-] (A) edge[very thick, color=highlight] (F);
        \node[left] at (A) {$x_1$};
        \node[right] at (B) {$x_2$};
        \node[above] at (C) {$x_3$};
        \node[above] at (F) {$x_4$};
    \end{tikzpicture}
    \end{align*}
    \fi
    Similarly, we can add the edge $(x_2,x_4)$. Now the four vertices $\{x_1,x_2,x_3,x_4\}$ form a complete subgraph $K_4\subset G$.
    We repeat this process until we reach $K_n$; the induction is formalized in Lemma \ref{lem:kn1kn} below. This completes the proof of the theorem.
\end{proof}

\begin{lemma}\label{lem:kn1kn}
    For $k = 7, 16, 20, 22$ and all $m\ge 2$, we have:
    \ifTikz
    \begin{align*}
    \begin{tikzpicture}[baseline=0mm]
        \begin{scope}[every node/.style={circle, thick,draw, scale=0.5}]
            \node[circle] (A) at (-2,0) {};
            \path [-] (A) edge[draw=black,very thick] (-1,0);
            % Draw the first part of the arc (0 to 45 degrees)
            \draw[line width=.5mm, dash pattern=on 3pt off 3pt] (1,0) arc (0:360:1);
            % Place nodes on the half circle
            \node[circle, fill=white] (B) at (-1,0) {};
            \node[draw=white] (C) at (0,0) {};
            \node[draw=white, scale=3] at (C) {$K_{m-1}$};
        \end{scope}
    \end{tikzpicture}
    \quad \adjustbox{scale=2}{$\sim_k$} 
    \begin{tikzpicture}[baseline=0mm]
        \begin{scope}[every node/.style={circle, thick,draw, scale=0.5}]
            % Draw the first part of the arc (0 to 45 degrees)
            \node[draw=white, scale=3] at (C) {$K_{m}.$};
        \end{scope}
    \end{tikzpicture}
	\end{align*}
    \fi
\end{lemma}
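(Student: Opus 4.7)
The plan is to derive the equivalence by iteratively applying Lemma~\ref{lem:loc} with the three-vertex base equivalence $L_3 \sim_k K_3$ established in the proof of Theorem~\ref{thm:complete}. For $m = 2$ the starting graph already equals $K_2$, so nothing needs to be shown. For $m \ge 3$, label the vertices of the complete subgraph as $v_1, \ldots, v_{m-1}$, and let $v_m$ denote the pendant vertex, which (after relabeling $K_{m-1}$ if necessary) may be taken to be adjacent only to $v_1$. The graph $K_m$ then differs from the starting graph only by the missing edges $(v_m, v_i)$ for $i = 2, \ldots, m-1$, and the plan is to insert these edges one at a time.

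For each $i \in \{2, \ldots, m-1\}$, processed in order, I apply Lemma~\ref{lem:loc} with $V_1 = \{v_m, v_1, v_i\}$, taking $G_1$ to be the path with edges $(v_m, v_1)$ and $(v_1, v_i)$, and $G'_1$ to be the triangle on $V_1$; the equivalence $G_1 \sim_k G'_1$ is precisely $L_3 \sim_k K_3$. Let $G_2$ denote the current graph at the start of step $i$: it consists of $K_{m-1}$ on $\{v_1, \ldots, v_{m-1}\}$ together with the edges $(v_m, v_1), (v_m, v_2), \ldots, (v_m, v_{i-1})$. The only edges of $G_2$ with both endpoints in $V_1$ are $(v_m, v_1)$ and $(v_1, v_i)$, because every previously added edge $(v_m, v_j)$ with $j < i$ has the vertex $v_j \notin V_1$. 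Setting $G'_2$ to be $G_2$ with the edge $(v_m, v_i)$ added, both endpoints of the new edge lie in $V_1$, so conditions~(1)--(3) of Lemma~\ref{lem:loc} hold, and one concludes $G_2 \sim_k G'_2$.

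Iterating from $i = 2$ to $i = m - 1$ introduces every missing edge incident to $v_m$ and yields the complete graph $K_m$ after $m - 2$ applications of Lemma~\ref{lem:loc}. The only subtle point is the bookkeeping in the second paragraph—verifying that earlier edge additions do not corrupt the three-vertex patch being manipulated in a later step—but this follows immediately from the observation that each previously added edge $(v_m, v_j)$ has one endpoint ($v_j$) outside the patch $V_1 = \{v_m, v_1, v_i\}$ used at step $i$. No induction on $m$ is actually required beyond this sequential application; the lemma essentially formalizes the informal argument already given in the proof of Theorem~\ref{thm:complete}.
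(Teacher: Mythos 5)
Your proof is correct and follows essentially the same route as the paper: iteratively applying Lemma~\ref{lem:loc} with the base equivalence $L_3 \sim_k K_3$ on the triple consisting of the pendant vertex, its neighbor, and one further vertex of $K_{m-1}$, adding the missing edges one at a time. The extra bookkeeping you supply (checking that previously added edges have an endpoint outside the three-vertex patch) is a correct and slightly more explicit verification of the hypotheses of Lemma~\ref{lem:loc} than the paper spells out.
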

\begin{proof}
    For $m = 2$, the two graphs are the same. 
    For $m\ge3$, let us label the leftmost vertex as $x$ and the vertex in $K_{m-1}$ connected to it as $y$.    
    Then for any other vertex $z$ in $K_{m-1}$, the vertices $x, y,z$ form a line $L_3$.
By Lemma \ref{lem:loc} and $L_3\sim_k K_3$, we can add the edge $(x,z)$ to get an equivalent graph:
    \ifTikz
    \begin{align*}
        \begin{tikzpicture}[baseline=0mm]
        \begin{scope}[every node/.style={circle, thick,draw, scale=0.5}]
            \node[circle] (A) at (-2,0) {};
            \path [-] (A) edge[very thick, color=highlight] (-1,0);\node[draw=white] (C) at (0,0) {};
			\node[draw=white, scale=2.5] at (C) {$K_{m-1}$};
            \draw[line width=.5mm, dash pattern=on 3pt off 3pt] (-150:1) arc (-150:120:1);
            \draw[line width=.5mm] (120:1) arc (120:210:1);
            \path [-] (60:1) edge[very thick, color=highlight] (180:1);
            \foreach \i in {60, 120, 150, 180, 210}{
                \draw (\i:1) node[circle, fill=white] {};
            }
        \end{scope}
        \node[above] at ($(120:1) + (-0.1, 0.05)$) {};
        \node[above] at ($(150:1) + (-0.1, 0.075)$) {};
        \node[below] at ($(180:1) - (0.125, 0.075)$) {$y$};
        \node[above] at ($(210:1) - (0.6, 0.4)$) {};
        \node[above] at ($(60:1) + (0.1, 0.1)$) {$z$};
        \node[below] at (A.south) {$x$};
        \end{tikzpicture}
        \quad \adjustbox{scale=2}{$\sim_k$} \quad 
        \begin{tikzpicture}[baseline=0mm]
        \begin{scope}[every node/.style={circle, thick,draw, scale=0.5}]
            \node[draw=white] (C) at (0,0) {};
			\node[draw=white, scale=2.5] at (C) {$K_{m-1}$};
            \path [-] (60:1) edge[very thick, color=highlight] (-2,0);
            \node[circle, fill=white] (A) at (-2,0) {};
            \path [-] (A) edge[very thick, color=highlight] (-1,0);
            \draw[line width=.5mm, dash pattern=on 3pt off 3pt] (-150:1) arc (-150:120:1);
            \draw[line width=.5mm] (120:1) arc (120:210:1);
            \path [-] (60:1) edge[very thick, color=highlight] (180:1);
            \foreach \i in {60, 120, 150, 180, 210}{
                \draw (\i:1) node[circle, fill=white] {};
            }
        \end{scope}
        \node[above] at ($(120:1) + (-0.1, 0.05)$) {};
        \node[above] at ($(150:1) + (-0.1, 0.075)$) {};
        \node[below] at ($(180:1) - (0.125, 0.075)$) {$y$};
        \node[above] at ($(210:1) - (0.6, 0.4)$) {};
        \node[above] at ($(60:1) + (0.1, 0.1)$) {$z$};
        \node[below] at (A.south) {$x$};
        \end{tikzpicture}   
    \end{align*}
    \fi
    In this way, we can connect $x$ to all vertices of $K_{m-1}$ and obtain $K_m$.
\end{proof}

From \Cref{thm:complete} and~\cite[Theorem IV.3]{wiersema2023classification}
(see also Appendix \ref{sec:list_dla}), we deduce the cases $k=7,16,20,22$ in \Cref{the:classification}.

\subsection{The Case \texorpdfstring{$k=2,4,6,14$}{}}
The second case for $k$ is more involved. We will further split it into two cases: $G$ bipartite or $G$ non-bipartite. 
Recall that a graph is \emph{bipartite} if its vertices can be colored in two colors
so that every edge connects vertices of different colors. 
Equivalently, a graph is bipartite if and only if it does not contains any odd cycles.
Since we have already analyzed line and cycle graphs in~\cite{wiersema2023classification},
from now on we will assume that $G$ has at least one vertex of degree $>2$.

\subsubsection{Non-bipartite case}
Here we assume that the interaction graph $G$ is not bipartite, i.e., it contains an odd cycle.
We present a sequence of lemmas, which will allow us to simplify certain subgraphs of $G$.

\begin{lemma}\label{lem:deltak23}
    For $k = 2, 4, 6, 14$, the following graphs are $k$-equivalent:
    \ifTikz
    \begin{align*}
    \adjustbox{scale=1.5}{$\Sigma:=$} \quad
    \begin{tikzpicture}[baseline=0mm, scale=1] % []
    \begin{scope}[every node/.style={circle,thick,draw,scale=0.5}]
        \node (A) at (1,1.) {};
        \node (B) at (0.0,0.5) {};
        \node (C) at (0,-0.5) {};
        \node (D) at (1, -1) {};
        \node (E) at (1, 0.) {};
    \end{scope}
    \node[right] at (A.east) {$1$};
    \node[above] at (B.north) {$2$};
    \node[above] at (C.north) {$4$};
    \node[right] at (D.east) {$3$};
    \node[right] at (E.east) {$5$};
    \begin{scope}[every edge/.style={draw=black,very thick}]
        \path [-] (A) edge (B);
        \path [-] (C) edge (D);
        \path [-] (B) edge (D);
        \path [-] (E) edge (B);
    \end{scope}
    \end{tikzpicture}
    \quad\adjustbox{scale=1.5}{$\sim_k$} \quad
    \begin{tikzpicture}[baseline=0mm, scale=1] % []
    \begin{scope}[every node/.style={circle,thick,draw,scale=0.5}]
        \node (A) at (1,1.) {};
        \node (B) at (0.0,0.5) {};
        \node (C) at (0,-0.5) {};
        \node (D) at (1, -1) {};
        \node (E) at (1, 0.) {};
    \end{scope}
    \node[right] at (A.east) {$1$};
    \node[above] at (B.north) {$2$};
    \node[above] at (C.north) {$4$};
    \node[right] at (D.east) {$3$};
    \node[right] at (E.east) {$5$};
    \begin{scope}[every edge/.style={draw=black,very thick}]
        \path [-] (A) edge (B);
        \path [-] (C) edge (D);
        \path [-] (B) edge (D);
        \path [-] (E) edge (B);
        \path [-] (A) edge (C);
         \path [-] (C) edge (E);
    \end{scope}
    \end{tikzpicture}
    \quad \adjustbox{scale=1.5}{$=K_{2,3}$}.
    \end{align*}
    \fi
\end{lemma}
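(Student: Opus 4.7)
Since $\Sigma$ is a subgraph of $K_{2,3}$, Eq.~\eqref{eq:order} immediately gives $\aa_k^\Sigma \subseteq \aa_k^{K_{2,3}}$; the substance of the lemma is the reverse inclusion. My plan is to show, case by case for each $k \in \{2,4,6,14\}$, that every generator of $\aa_k^{K_{2,3}}$ lying on one of the two missing edges $(1,4)$ or $(4,5)$ of $K_{2,3}$ can be expressed as a linear combination of nested commutators of generators on edges of $\Sigma$, and hence lies in $\aa_k^\Sigma$.

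For the edge $(1,4)$, I plan to trace the path $1{-}2{-}3{-}4$ that is already present in $\Sigma$, and for each generator $a\otimes b$ of $\aa_k$ I construct $a_1\otimes b_4$ in three steps. First, a commutator of generators on $(1,2)$ and $(2,3)$ produces a 3-local operator supported on $\{1,2,3\}$ carrying Pauli $Z$'s on the internal qubits (for instance, $[X_1 X_2, Y_2 Y_3] = 2i X_1 Z_2 Y_3$ for $\aa_4$ and $\aa_{14}$, and analogously $[X_1 Y_2, X_2 Y_3]$ for $\aa_2$, or $[X_1 X_2, Y_2 Z_3]$ for $\aa_6$). Second, commuting the result with a generator on $(3,4)$ extends the support to a 4-local operator on $\{1,2,3,4\}$. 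Third, I combine this 4-local operator with further commutators---crucially including ones involving the branch edge $(2,5)$---to cancel the intermediate support on qubits $2$ and $3$, isolating a 2-local operator on $\{1,4\}$. The analogous argument, with the roles of vertices $1$ and $5$ exchanged, produces the generators on edge $(4,5)$ by tracing the path $5{-}2{-}3{-}4$.

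I expect the main obstacle to be verifying the third step in each of the four cases: the precise combination of nested commutators that cancels the intermediate $Z$-support while preserving the endpoint operators on qubits $1$ and $4$ depends on the specific commutation relations within each $\aa_k$. The branch edge $(2,5)$ appears to be the key structural feature enabling this cancellation---it provides additional generators acting on qubit $2$ (in a part of the Pauli algebra orthogonal to the $Z_2$ that must be eliminated) without introducing new support on qubit $3$ or $4$. For $k=14$ and $k=6$ the richer generator sets make the bookkeeping more forgiving, while for $k=2$ and $k=4$ the calculations are tighter but proceed along the same template. Once the construction is established for one representative generator $a\otimes b$ of each $\aa_k$, the flip-symmetry of the generating sets handles the remaining cases automatically, and the lemma follows.
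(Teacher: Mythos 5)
Your overall strategy---show that the generators of $\aa_k^{K_{2,3}}$ on the two missing edges $(1,4)$ and $(4,5)$ already lie in $\aa_k^\Sigma$, using the branch edge $(2,5)$ in an essential way---is exactly the paper's strategy for $k=2$. But as written your argument has a genuine gap: the entire content of the lemma sits in your ``third step,'' the claim that after extending support along the path $1{-}2{-}3{-}4$ one can cancel the intermediate Pauli weight on qubits $2,3$ by further commutators involving the branch edge, and you explicitly leave this unverified. This is not a routine bookkeeping step. For the line $L_4$ (the same path without the branch) no such cancellation exists---$\aa_2^{L_4}$ does \emph{not} contain $X_1Y_4$, which is precisely why the one-dimensional classification is different---so the existence of a cancelling commutator sequence is a nontrivial fact that must be exhibited, not presumed. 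The paper settles it with one explicit nested-commutator identity, Eq.~\eqref{eq:X1Y4}, expressing $X_1Y_4$ as an eight-fold nested commutator of $\Sigma$-edge generators (with an alternative graphical derivation via frustration graphs in the appendix). Without at least one such explicit identity, your proposal is a plan rather than a proof. A secondary inaccuracy: producing ``one representative generator'' per $k$ and invoking flip symmetry does not suffice for $k=6$ (the generator $XX$ is fixed by the $Y\leftrightarrow Z$ symmetry that exchanges $YZ$ and $ZY$, so at least two independent derivations are needed) nor for $k=14$, where four edge generators must be obtained.

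It is also worth noting that the paper performs the hard computation only once. For $k=4$ it conjugates by a local basis change swapping $X\leftrightarrow Y$ on the color class $\{2,4\}$, which carries the $\aa_2$ problem to the $\aa_4$ problem on the same pair of graphs; for $k=6$ the swap $Y\leftrightarrow Z$ on $\{2,4\}$ reduces to $\aa_7$, which is already covered for \emph{all} connected graphs by \Cref{thm:complete}; and for $k=14$ it uses $\aa_{14}=\Lie{XY,YX,ZI,IZ}$ together with $\aa_2\subset\aa_{14}$, so the $k=2$ identity immediately supplies the needed edge generators. If you complete your direct approach you would instead be repeating a delicate commutator construction four times (with the caveats above about how many generators per edge are required); adopting one of these reduction tricks would both close your gap more economically and make the case analysis far shorter.
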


\begin{proof}

We consider each value of $k$ separately. 

\paragraph*{Case $k = 2$.} The generators of $\aa_2$ are $XY, YX$.
One checks that the following nested commutator of the generators of $\aa_2^\Sigma$ is equal to $X_1 Y_4$
up to a nonzero scalar:
\begin{equation}\label{eq:X1Y4}
\begin{split}
    X_1 Y_4 \equiv\,& \ad_{Y_1 X_2} \ad_{X_3Y_2} \ad_{X_5Y_2} \ad_{Y_1 X_2} \\
    &\ad_{X_2 Y_3} \ad_{X_5Y_2} \ad_{X_3 Y_4} \ad_{X_2 Y_3} (X_1 Y_2).
\end{split}
\end{equation}
By swapping $X \leftrightarrow Y$ on every vertex, we obtain the recipe of how to construct $X_4Y_1$ too. 
This means that $\aa_2^\Sigma$ contains the generators of $\aa_2$ corresponding to the edge $(1,4)$; hence, if add this edge we get the same DLA. Due to symmetry of the graph $\Sigma$, we can also add the edge $(4,5)$, which gives $\Sigma \sim_2 K_{2,3}$. See also Appendix \ref{frust-eq4} for an alternative derivation of Eq.~\eqref{eq:X1Y4} using frustration graphs.

\paragraph*{Case $k = 4$.} 
The generators of $\aa_4$ are $XX, YY$.
Both graphs $\Sigma$ and $K_{2,3}$ are bipartite, with the first color consisting of vertices $\{2,4\}$. If we swap 
$X_2 \leftrightarrow Y_2$ and $X_4 \leftrightarrow Y_4$, the generators of $\aa_2^\Sigma$ transform into the generators of $\aa_4^\Sigma$, and similarly with $K_{2,3}$ in place of $\Sigma$. Therefore, $\aa_4^\Sigma \cong \aa_2^\Sigma = \aa_2^{K_{2,3}} \cong \aa_4^{K_{2,3}}$.

\paragraph*{Case $k = 6$.} The generators of $\aa_6$ are $XX, YZ, ZY$, while the generators of $\aa_7$ are $XX, YY, ZZ$.
As above, if we swap $Y_2 \leftrightarrow Z_2$ and $Y_4 \leftrightarrow Z_4$, we can transform the generators of
$\aa_6^\Sigma$ into those of $\aa_7^\Sigma$, and similarly with $K_{2,3}$ in place of $\Sigma$. Therefore, $\aa_6^\Sigma \cong \aa_7^\Sigma$ and  $\aa_6^{K_{2,3}} \cong \aa_7^{K_{2,3}}$.
But $\aa_7^\Sigma = \aa_7^{K_{2,3}} = \aa_7^{K_5}$, by \Cref{thm:branch_complete}, which gives
$\aa_6^\Sigma = \aa_6^{K_{2,3}}$.

\paragraph*{Case $k = 14$.} It is easy to see that $\aa_{14} = \Lie{XY, YX, ZI, IZ}$. For any interaction graph $G$,
we can generate $\aa_{14}^G$ by placing $XY,YX$ on every edge and $Z$ on every vertex. 
Since $\aa_2\subset\aa_{14}$, from the $k=2$ case above we get that
$X_1 Y_4, X_4Y_1, X_4 Y_5, Y_4 X_5 \in  \aa^{K_{2,3}}_2 = \aa^\Sigma_2 \subset \aa^\Sigma_{14}$.
Hence, if we add the edges $(1,4)$ and $(4,5)$, we will get the same DLA:
$\aa^\Sigma_{14} = \aa^{K_{2,3}}_{14}$.
In Appendix \ref{frust-eq4}, we give an alternative proof using
frustration graphs that the edge $(1,4)$ can be added to the interaction graph $\Sigma$
without changing the DLA. 

This concludes the proof of the lemma.
\end{proof}

\begin{lemma}\label{lem:k3plusk4}
For $k = 2, 4, 6, 14$, the following graphs are $k$-equivalent:
\ifTikz
\begin{align*}
\adjustbox{scale=1.5}{$\Omega\: :=$}&\quad
\begin{tikzpicture}[baseline=0mm, scale=0.75]
	\begin{scope}[every node/.style={circle, thick,draw, scale=0.5}]
        \node[circle, fill=white] (A) at (-2,1) {};
        \node[circle, fill=white] (B) at (-1,-1) {};
		\node[circle, fill=white] (C) at (0.,1) {};
		\node[circle, fill=white] (D) at (1,-1) {};
        \path [-] (A) edge[draw=black,very thick] (B);
        \path [-] (B) edge[draw=black,very thick] (C);
        \path [-] (C) edge[draw=black,very thick] (D);
        \path [-] (D) edge[draw=black,very thick] (B);
	\end{scope}
    \node[left] at (A.west) {$3$};
    \node[left] at (B.west) {$2$};
    \node[right] at (C.east) {$1$};
    \node[right] at (D.east) {$4$};
\end{tikzpicture} \\
\adjustbox{scale=1.5}{$\sim_k$}&\quad
\begin{tikzpicture}[baseline=0mm, scale=0.75]
	\begin{scope}[every node/.style={circle, thick,draw, scale=0.5}]
        \node[circle, fill=white] (A) at (-2,1) {};
        \node[circle, fill=white] (B) at (-1,-1) {};
		\node[circle, fill=white] (C) at (0.,1) {};
		\node[circle, fill=white] (D) at (1,-1) {};
        \path [-] (A) edge[draw=black,very thick] (B);
        \path [-] (B) edge[draw=black,very thick] (C);
        \path [-] (C) edge[draw=black,very thick] (D);
        \path [-] (D) edge[draw=black,very thick] (B);
        \path [-] (D) edge[draw=black,very thick] (A);
        \path [-] (C) edge[draw=black,very thick] (A);
	\end{scope}
    \node[left] at (A.west) {$3$};
    \node[left] at (B.west) {$2$};
    \node[right] at (C.east) {$1$};
    \node[right] at (D.east) {$4$};
\end{tikzpicture} 
\quad\adjustbox{scale=1.5}{$=\:K_4$.}
\end{align*}
\fi
\end{lemma}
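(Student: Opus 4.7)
The plan is to follow the same template as Lemma~\ref{lem:deltak23}: for each $k \in \{2, 4, 6, 14\}$, I exhibit explicit nested commutators that produce the generators of $\aa_k$ on the two missing edges $(1,3)$ and $(3,4)$ of $K_4$, all inside $\aa_k^\Omega$. Lemma~\ref{lem:loc} then lets us add those edges without changing the DLA, so $\Omega \sim_k K_4$. The triangle $\{1,2,4\}$ together with the pendant edge $(2,3)$ of $\Omega$ provides enough connectivity for the constructions, and the graph automorphism exchanging $1$ and $4$ (which fixes the triangle and the pendant) together with the $X \leftrightarrow Y$ symmetry of each generating set halves the amount of bookkeeping.

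For $k = 2$, I would mimic Eq.~\eqref{eq:X1Y4}: start from a 3-local seed $\ad_{X_2Y_3}(X_1Y_2) \propto X_1Z_2Y_3$ built from the pendant, then alternately commute with $\aa_2$-generators on the triangle edges $(2,4)$, $(1,2)$ and $(1,4)$ in order to shuffle the middle letter off qubit~$2$ and collapse the operator back to weight two on $\{1,3\}$, producing $X_1Y_3$. The remaining generators $Y_1X_3$, $X_3Y_4$, $Y_3X_4$ then follow by the two symmetries above. For $k = 14$, I would piggyback on $k = 2$: since $\aa_2 \subset \aa_{14}$ and $\aa_{14}$ contains a single-qubit $Z$ on every vertex, once $X_iY_j$ sits in $\aa_2^\Omega \subset \aa_{14}^\Omega$ for each missing edge, the commutators $[Z_i,X_iY_j] \propto Y_iY_j$ and $[Z_j,X_iY_j] \propto X_iX_j$ close out the list of $\aa_{14}$-generators on those edges.

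For $k = 6$, the swap trick of Lemma~\ref{lem:deltak23} is unavailable since $\Omega$ is non-bipartite, but a short direct chain works: $[X_1X_2,Z_2Y_3] \propto X_1Y_2Y_3$, followed by $[X_1Y_2Y_3,Y_2Z_3] \propto X_1X_3$, already produces the first missing $\aa_6$-generator. Two analogous chains that additionally pass through vertex~$4$ (seeded by $[X_1X_2,Y_2Z_4] \propto X_1Z_2Z_4$ and reaching $Z_1X_2Z_3$ and then $Z_1Y_3$) supply $Y_1Z_3$ and $Z_1Y_3$, and the $1 \leftrightarrow 4$ automorphism handles edge $(3,4)$.

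I expect $k = 4$ to be the main obstacle. Here both the bipartite swap and the $\aa_2\subset\aa_4$ shortcut are unavailable, and naive commutators of $\aa_4$-generators tend to walk off to higher weight before collapsing back. My plan is to construct a sequence of roughly six nested commutators, alternating pendant generators with triangle generators and passing through intermediates of the form $X_1Z_2Y_3$, $X_1X_2Y_3Y_4$, $X_1Y_3Z_4$, $X_1X_2Z_3Z_4$, $X_2Z_3Y_4$, and ending at $Y_3Y_4$. The structural reason such a chain exists is that the odd cycle $\{1,2,4\}$ forces a net $X \leftrightarrow Y$ exchange around the loop, which is exactly what allows the support to collapse from four qubits back to two. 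With $Y_3Y_4$ in hand, the $X \leftrightarrow Y$ symmetry of the generating set supplies $X_3X_4$, and the $1 \leftrightarrow 4$ automorphism yields $X_1X_3$ and $Y_1Y_3$, so Lemma~\ref{lem:loc} can be invoked to add both missing edges and conclude $\Omega \sim_4 K_4$.
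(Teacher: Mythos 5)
Your proposal is correct and takes essentially the same route as the paper: exhibit explicit nested commutators of the edge generators of $\Omega$ that produce the $\aa_k$-generators on the missing edges $(1,3)$ and $(3,4)$, then finish with the $X\leftrightarrow Y$ (resp.\ $Y\leftrightarrow Z$) relabeling and the $1\leftrightarrow 4$ graph automorphism — the paper does the same, except that for $k=6$ it reuses the $k=2$ chains after an $X\leftrightarrow Z$ relabeling and for $k=4$ it has a shorter four-fold commutator giving $X_1X_3$ directly. I verified your specific chains (including the $k=4$ sequence through $X_1Z_2Y_3$, $X_1X_2Y_3Y_4$, $X_1Y_3Z_4$, $X_1X_2Z_3Z_4$, $X_2Z_3Y_4$ to $Y_3Y_4$ and the $k=6$ chains) and they all check out; the only cosmetic point is that adding an edge whose generators already lie in the DLA needs only transitivity of generation, as in the proof of Lemma~\ref{lem:deltak23}, rather than the full strength of Lemma~\ref{lem:loc}.
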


\begin{proof}
The proof is similar to that of Lemma \ref{lem:deltak23}.
For each value of $k$, we will show that we can add the edge $(1,3)$; then the edge $(3,4)$ can be added by symmetry.

\paragraph*{Case $k = 2$.} Generators of $\aa_2$ are $XY, YX$. 
One checks that, up to a nonzero scalar,
\begin{equation}\label{eq:Y1X3}
    Y_1 X_3 \equiv \ad_{X_1 Y_2} \ad_{X_2 Y_4} \ad_{X_1 Y_4} \ad_{Y_2 X_3} (Y_1 X_2 ).
\end{equation} 
Thus, $Y_1 X_3 \in \aa^\Omega_2$. Due to the $X \leftrightarrow Y$ symmetry of the generators, we also have $X_1 Y_3 \in \aa^\Omega_2$. 
Hence, $\Omega \sim_2 K_4$.

\paragraph*{Case $k = 4$.} The generators of $\aa_4$ are $XX, YY$. The following shows that $X_1 X_3 $ can be obtained from the generators on the edges of $\Omega$:
\begin{align}\label{eq:X1X3}
    X_1 X_3 \equiv \ad_{Y_1 Y_2} \ad_{X_2 X_4} \ad_{X_2 X_3} \ad_{X_1 X_4} (Y_1 Y_2).
\end{align}
%which is true up to a proportionality constant. 
Thus, $X_1 X_3 \in \aa_4^\Omega$, and by the $X \leftrightarrow Y$ symmetry, we also get $Y_1 Y_3 \in \aa_4^\Omega$. Hence, $\Omega \sim_4 K_4$.

\paragraph*{Case $k = 6$.} The generators of $\aa_6$ are $XX, YZ, ZY$, which are equivalent to $ZZ, XY, YX$ under $X \leftrightarrow Z$ exchange. Considering the latter generators, from the $k = 2$ case of the lemma, we know that 
$X_1 Y_3$, $Y_1 X_3$, $X_3 Y_4$, $X_4Y_3 \in \aa_2^\Omega \subset \aa_6^\Omega$.
Then we can generate $Z_1 Z_3$ as follows:
\begin{align}\label{eq:Z1Z3}
    Z_1 Z_3 \equiv \ad_{X_3 Y_4} \ad_{Z_2 Z_3} \ad_{Z_2 Z_4} \ad_{Z_1 Z_4} (X_3 Y_4).
\end{align}
This proves that we can add the edge $(1,3)$, and 
$\Omega \sim_6 K_4$.

\paragraph*{Case $k = 14$.} 
As $\aa_{14}$ is generated by $XY, YX, ZI, IZ$, from the $k=2$ case above we get that
$X_1 Y_3, Y_1 X_3 \in \aa_2^\Omega \subset \aa_{14}^\Omega$.
Thus, we can add the edge $(1,3)$, and conclude that $\Omega \sim_{14} K_4$. We again provide an alternative derivation of Eqs.\ \eqref{eq:Y1X3}, ~\eqref{eq:X1X3}, ~\eqref{eq:Z1Z3}, and the fact that $X_1 X_3 \in \aa^\Omega_{14}$, by using frustration graphs in Appendix~\ref{frust-eq567}.
\end{proof}

Now we can prove a result similar to Lemma \ref{lem:kn1kn}.

\begin{lemma}\label{lem:kn1kn_again}
    For $k = 2, 4, 6, 14$ and $m \geq 4$, we have:
    \ifTikz
    \begin{align*}
    \begin{tikzpicture}[baseline=0mm]
        \begin{scope}[every node/.style={circle, thick,draw, scale=0.5}]
            \node[circle] (A) at (-2,0) {};
            \path [-] (A) edge[draw=black,very thick] (-1,0);
            \draw[line width=.5mm, dash pattern=on 3pt off 3pt] (1,0) arc (0:360:1);
            \node[circle, fill=white] (B) at (-1,0) {};
            \node[draw=white] (C) at (0,0) {};
            \node[draw=white, scale=3] at (C) {$K_{m-1}$};
        \end{scope}
    \end{tikzpicture}
    \quad \adjustbox{scale=2}{$\sim_k$} \quad 
    \begin{tikzpicture}[baseline=0mm]
        \begin{scope}[every node/.style={circle, thick,draw, scale=0.5}]
            \node[draw=white, scale=3] at (C) {$K_{m}$.};
        \end{scope}
    \end{tikzpicture}
	\end{align*}   
    \fi

\end{lemma}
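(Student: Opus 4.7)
The strategy mirrors that of Lemma \ref{lem:kn1kn}, with the key equivalence $L_3 \sim_k K_3$ (which fails for $k = 2, 4, 6, 14$) replaced by the equivalence $\Omega \sim_k K_4$ established in Lemma \ref{lem:k3plusk4}. Label the pendant vertex as $x$ and its unique neighbor in $K_{m-1}$ as $y$. I would add the edges $(x, z)$ one at a time, for each $z \in K_{m-1} \setminus \{y\}$, until the graph becomes $K_m$.

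For each such $z$, choose an auxiliary vertex $w \in K_{m-1} \setminus \{y, z\}$, which exists because $m - 1 \geq 3$. Set $V_1 = \{x, y, z, w\}$ and let $G_1$ be the subgraph induced on $V_1$ in the current graph. Regardless of which edges have been added in previous iterations, the four edges $(x, y), (y, z), (y, w), (z, w)$ are always present, so $\Omega \subseteq G_1 \subseteq K_4$ under the relabeling $(3, 2, 1, 4) \mapsto (x, y, z, w)$ that matches the $\Omega$ of Lemma \ref{lem:k3plusk4}. Since $\Omega \sim_k K_4$ and the DLA is monotonic under $\subset$ by Eq.\ \eqref{eq:order}, one concludes $G_1 \sim_k K_4$. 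Applying Lemma \ref{lem:loc} with this $V_1$ lets me replace $G_1$ by $K_4$ on $V_1$ without changing the DLA, which in particular inserts the missing edge $(x, z)$.

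Iterating over all $z \in K_{m-1} \setminus \{y\}$ connects $x$ to every vertex of $K_{m-1}$, turning the graph into $K_m$. The hypothesis $m \geq 4$ enters precisely so that such an auxiliary vertex $w$ exists; for $m = 4$ the starting graph is already $\Omega$ and the conclusion is immediate from Lemma \ref{lem:k3plusk4}. I do not anticipate a substantive obstacle, since Lemmas \ref{lem:k3plusk4} and \ref{lem:loc} do the heavy lifting. The only bookkeeping is verifying the sandwich $\Omega \subseteq G_1 \subseteq K_4$ at every stage of the iteration, which holds because any previously added edge inside $V_1$ must be of the form $(x, w)$ (when $w$ coincides with an already-processed $z'$), and this only enriches $G_1$ toward $K_4$.
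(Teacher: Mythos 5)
Your proposal is correct and follows essentially the same route as the paper: repeatedly apply the equivalence $\Omega \sim_k K_4$ from Lemma~\ref{lem:k3plusk4} locally (via Lemma~\ref{lem:loc}) to connect the pendant vertex $x$ to every vertex of $K_{m-1}$, obtaining $K_m$. The only differences are cosmetic---you add one edge at a time with an auxiliary vertex and spell out the sandwich $\Omega \subseteq G_1 \subseteq K_4$, whereas the paper picks pairs $i,j$ and adds both edges $(x,i)$, $(x,j)$ at once.
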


\begin{proof}
    Let us label the vertices of $K_{m-1}$ clockwise as $1,2,\dots, m-1$, the extra vertex as $x$, and assume that $x$ is connected to $1$. Then, for any $1<i<j<m-1$, the vertices $x,1,i,j$ will form the subgraph $\Omega$ from Lemma \ref{lem:k3plusk4}: 
    \ifTikz
    \begin{align*}
        \begin{tikzpicture}[baseline=0mm]
        \begin{scope}[every node/.style={circle, thick,draw, scale=0.5}]
            \node[circle] (A) at (-2,0) {};
            \path [-] (A) edge[very thick] (-1,0);
            \draw[line width=.5mm, dash pattern=on 3pt off 3pt] (-150:1) arc (-150:120:1);
            \draw[line width=.5mm] (120:1) arc (120:210:1);
            \foreach \i in {120, 150, 180, 210}{
                \draw (\i:1) node[circle, fill=white] {};
            }
            \node[draw=white] (C) at (0,0) {};
			\node[draw=white, scale=2.5] at (C) {$K_{m-1}$};
        \end{scope}
        \node[above] at ($(120:1) + (-0.1, 0.05)$) {$3$};
        \node[above] at ($(150:1) + (-0.1, 0.075)$) {$2$};
        \node[above] at ($(180:1) + (-0.125, 0.1)$) {$1$};
        \node[above] at ($(210:1) - (0.6, 0.4)$) {$m-1$};
        \node[above] at (A.north) {$x$};
        \end{tikzpicture}
        \quad\adjustbox{scale=2}{$\supset$}\quad
        \begin{tikzpicture}[baseline=0mm]
        \begin{scope}[every node/.style={circle, thick,draw, scale=0.5}]
            \node[circle] (A) at (-2,0) {};
            \path [-] (A) edge[very thick, color=highlight] (-1,0);
            \node[draw=white] (C) at (0,0) {};
			\node[draw=white, scale=2.5] at (C) {$K_{m-1}$};
            \draw[line width=.5mm, dash pattern=on 3pt off 3pt] (-150:1) arc (-150:120:1);
            \draw[line width=.5mm] (120:1) arc (120:210:1);
            \path [-] (180:1) edge[draw=highlight, very thick, bend left=30] (60:1);
            \path [-] (180:1) edge[draw=highlight, very thick, bend right=30] (-60:1);
             \path [-] (60:1) edge[draw=highlight, very thick, bend left=30] (-60:1);
            \foreach \i in {60,-60, 120, 150, 180, 210}{
                \draw (\i:1) node[circle, fill=white] {};
            }  
        \end{scope}
        \node[above] at ($(120:1) + (-0.1, 0.05)$) {$3$};
        \node[above] at ($(150:1) + (-0.1, 0.075)$) {$2$};
        \node[above] at ($(180:1) + (-0.125, 0.1)$) {$1$};
        \node[above] at ($(210:1) - (0.6, 0.4)$) {$m-1$};
        \node[above] at ($(60:1) + (0.0, 0.1)$) {$i$};
        \node[below] at ($(-60:1) - (0.0, 0.1)$) {$j$};
        \node[above] at (A.north) {$x$};
    \end{tikzpicture}
    \end{align*}
    Since $\Omega\sim_k K_4$, we can connect $x$ to both $i$ and $j$:
    \begin{align*}
        \begin{tikzpicture}[baseline=0mm]
        \begin{scope}[every node/.style={circle, thick,draw, scale=0.5}]
            \node[circle] (A) at (-2,0) {};
            \path [-] (A) edge[very thick, color=highlight] (-1,0);
            \node[draw=white] (C) at (0,0) {};
			\node[draw=white, scale=2.5] at (C) {$K_{m-1}$};
            \draw[line width=.5mm, dash pattern=on 3pt off 3pt] (-150:1) arc (-150:120:1);
            \draw[line width=.5mm] (120:1) arc (120:210:1);
            \path [-] (180:1) edge[draw=highlight, very thick, bend left=30] (60:1);
            \path [-] (180:1) edge[draw=highlight, very thick, bend right=30] (-60:1);
             \path [-] (60:1) edge[draw=highlight, very thick, bend left=30] (-60:1);
            \foreach \i in {60,-60, 120, 150, 180, 210}{
                \draw (\i:1) node[circle, fill=white] {};
            }  
        \end{scope}
        \node[above] at ($(120:1) + (-0.1, 0.05)$) {};
        \node[above] at ($(150:1) + (-0.1, 0.075)$) {};
        \node[above] at ($(180:1) + (-0.125, 0.1)$) {$1$};
        \node[above] at ($(210:1) - (0.6, 0.4)$) {};
        \node[above] at ($(60:1) + (0.0, 0.1)$) {$i$};
        \node[below] at ($(-60:1) - (0.0, 0.1)$) {$j$};
        \node[above] at (A.north) {$x$};
        \end{tikzpicture}
        \quad\adjustbox{scale=2}{$\sim_k$}\quad
        \begin{tikzpicture}[baseline=0mm]
        \begin{scope}[every node/.style={circle, thick,draw, scale=0.5}]
            \node[draw=white] (C) at (0,0) {};
			\node[draw=white, scale=2.5] at (C) {$K_{m-1}$};
            \path [-] (180:1) edge[draw=highlight, very thick, bend left=30] (60:1);
            \path [-] (180:1) edge[draw=highlight, very thick, bend right=30] (-60:1);
             \path [-] (60:1) edge[draw=highlight, very thick, bend left=30] (-60:1);
            \path [-] (60:1) edge[draw=highlight, very thick, bend right=50] (-2,0);
            \path [-] (-60:1) edge[draw=highlight, very thick, bend left=50] (-2,0);
            \node[circle, fill=white] (A) at (-2,0) {};
            \path [-] (A) edge[very thick, color=highlight] (-1,0);
            \draw[line width=.5mm, dash pattern=on 3pt off 3pt] (-150:1) arc (-150:120:1);
            \draw[line width=.5mm] (120:1) arc (120:210:1);
            \foreach \i in {60,-60, 120, 150, 180, 210}{
                \draw (\i:1) node[circle, fill=white] {};
            }  
        \end{scope}
        \node[above] at ($(120:1) + (-0.1, 0.05)$) {};
        \node[above] at ($(150:1) + (-0.1, 0.075)$) {};
        \node[above] at ($(180:1) + (-0.125, 0.1)$) {$1$};
        \node[above] at ($(210:1) - (0.6, 0.4)$) {};
        \node[above] at ($(60:1) + (0.0, 0.1)$) {$i$};
        \node[below] at ($(-60:1) - (0.0, 0.1)$) {$j$};
        \node[above] at (A.north) {$x$};
    \end{tikzpicture}
    \end{align*}
    \fi
    Repeating this process for all $i,j$ allows us connect all vertices of $K_{m-1}$ to $x$, which gives $K_m$.
\end{proof}

Using the previous lemmas, we can conclude the non-bipartite case as follows.

\begin{theorem}\label{thm:branch_complete}
    If\/ $G$ is a connected non-bipartite graph with $n$ vertices that has at least one vertex of degree $>2$, then $G \sim_k K_n$ for $k = 2, 4, 6, 14$.
\end{theorem}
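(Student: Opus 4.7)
My plan is to prove $G \sim_k K_n$ by iteratively modifying $G$ via edge-additions, each justified by \Cref{lem:deltak23}, \Cref{lem:k3plusk4}, or \Cref{lem:kn1kn_again} together with the locality principle \Cref{lem:loc}, until $G$ becomes $K_n$. The argument splits naturally into two phases: first producing a triangle in $G$ (which requires work only when $G$ is triangle-free), then upgrading that triangle to a $K_4$ and finally growing the $K_4$ to $K_n$.

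For the first phase, assume $G$ is triangle-free, so that a shortest odd cycle $C = v_1 v_2 \cdots v_{2\ell+1} v_1$ has length at least $5$. Being shortest, $C$ admits no chord, since any chord would split $C$ into an even cycle plus a strictly shorter odd cycle. Since $G$ is connected and has a vertex of degree $>2$, some $v_i \in C$ must have a neighbor $x \notin C$: otherwise $V(C)$ would be a connected component, forcing $G = C$ and maximum degree $2$. On the five distinct vertices $V_1 = \{v_{i-2}, v_{i-1}, v_i, v_{i+1}, x\}$, the induced subgraph $H$ contains a copy of $\Sigma$ via the mapping $2\mapsto v_i$, $3\mapsto v_{i-1}$, $4\mapsto v_{i-2}$, $1\mapsto v_{i+1}$, $5\mapsto x$, and is contained in the corresponding $K_{2,3}$ with bipartition $\{v_i, v_{i-2}\}$ and $\{v_{i+1}, v_{i-1}, x\}$. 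The four potential forbidden edges are excluded: $v_i v_{i-2}$ is a chord of $C$, and each of $v_{i-1}v_{i+1}$, $v_{i-1}x$, $v_{i+1}x$ would form a triangle with $v_i$. Hence by \Cref{lem:deltak23}, Eq.~\eqref{eq:order}, and \Cref{lem:loc}, we may add the edges $v_{i-2}v_{i+1}$ and $v_{i-2}x$ to $G$ while preserving $\aa_k^G$. The edge $v_{i-2}v_{i+1}$ is a chord of $C$ producing an odd cycle of length $2\ell-1$, strictly decreasing the shortest odd cycle length. Iterating this reduction (while the graph remains triangle-free) eventually produces a triangle, since the shortest odd cycle length is a positive odd integer that can only decrease to $3$.

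In the second phase, $G$ contains a triangle $T$. Since $n \geq 4$ (necessary for a vertex of degree $>2$) and $G$ is connected, some $y \notin T$ is adjacent to $T$; the induced subgraph on $V(T)\cup\{y\}$ contains $\Omega$ and lies inside $K_4$, so \Cref{lem:k3plusk4}, Eq.~\eqref{eq:order}, and \Cref{lem:loc} turn it into $K_4$ inside $G$. Starting from this $K_4$ subgraph, we grow to $K_n$ exactly as in the inductive step of \Cref{thm:complete}: given a $K_m$ subgraph with $4 \leq m < n$, connectivity supplies a vertex $y$ outside with an edge into $K_m$, and the induced subgraph on $V(K_m)\cup\{y\}$ lies between the graph of \Cref{lem:kn1kn_again} and $K_{m+1}$, yielding $K_{m+1}$ after one more application of \Cref{lem:loc}. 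Iterating until $m = n$ gives $G \sim_k K_n$. The main obstacle is the iterative reduction in the triangle-free case: at each step one must re-verify that the five chosen vertices have an induced subgraph sandwiched between $\Sigma$ and $K_{2,3}$, which relies on the current graph still being triangle-free and the current cycle still being a shortest odd one; fortunately, any premature appearance of a triangle during the iteration simply allows us to jump ahead to the second phase, so termination is guaranteed.
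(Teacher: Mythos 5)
Your proposal is correct and follows essentially the same route as the paper: repeatedly shorten an odd cycle by two using $\Sigma \sim_k K_{2,3}$ (Lemma~\ref{lem:deltak23}) until a triangle with an attached vertex appears, then pass to $K_4$ via $\Omega \sim_k K_4$ (Lemma~\ref{lem:k3plusk4}) and grow to $K_n$ with Lemma~\ref{lem:kn1kn_again}, all glued into $G$ by Lemma~\ref{lem:loc}. The extra bookkeeping you add (shortest odd cycle, chordlessness, triangle-freeness, and jumping to the second phase if a triangle appears early) is a harmless refinement of the paper's looser argument, which instead relies on the monotonicity \eqref{eq:order}: whenever the induced subgraph merely \emph{contains} $\Sigma$, the generators on the new edges already lie in the DLA, so the sandwich conditions you verify are not strictly needed.
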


\begin{proof}
Since $G$ is not bipartite, it must contain a cycle $C_m$ with odd $m$.
Because $G$ is connected and has at least one vertex of degree $>2$, one of the vertices in this odd cycle should be attached to some other vertex, say $x$. Let us label the vertices of $C_m$ as $1,2,\dots,m$, and assume that $x$ is connected to $1$. Then if $m\ge5$, vertices $x,m,1,2,3$ form a subgraph $\Sigma$, and Lemma \ref{lem:deltak23} leads to the following $k$-equivalence:
\ifTikz
\begin{align*}
    \begin{tikzpicture}[baseline=0mm]
        \begin{scope}[every node/.style={circle, thick,draw, scale=0.5}]
            \node[circle] (A) at (-1.5,0) {};
            \path [-] (A) edge[very thick, color=highlight] (-1,0);
            \draw[line width=.5mm, dash pattern=on 3pt off 3pt] (-150:1) arc (-150:120:1);
            \draw[line width=.5mm, color=highlight] (120:1) arc (120:210:1);
            \foreach \i in {120, 150, 180, 210}{
                \draw (\i:1) node[circle, fill=white] {};
            }
            \node[draw=white] (C) at (0,0) {};
			\node[draw=white, scale=3] at (C) {$C_{m}$};
        \end{scope}
        \node[above] at ($(120:1) + (-0.1, 0.05)$) {$3$};
        \node[above] at ($(150:1) + (-0.1, 0.075)$) {$2$};
        \node[above] at ($(180:1) + (-0.125, 0.1)$) {$1$};
        \node[above] at ($(210:1) - (0.3, 0.3)$) {$m$};
        \node[above] at (A.north) {$x$};
    \end{tikzpicture}
    \quad\adjustbox{scale=2}{$\sim_k$}\quad
    \begin{tikzpicture}[baseline=0mm]
        \begin{scope}[every node/.style={circle, thick,draw, scale=0.5}]
            \node[circle] (A) at (-1.5,0) {};
            \path [-] (A) edge[very thick, color=highlight] (-1,0);
            \node[draw=white] (C) at (0,0) {};
			\node[draw=white, scale=3] at (C) {$\;\;C_{m-2}$};
            \draw[line width=.5mm, dash pattern=on 3pt off 3pt] (-150:1) arc (-150:120:1);
            \draw[line width=.5mm, color=highlight] (120:1) arc (120:210:1);
            \path [-] (A) edge[draw=highlight, very thick, bend left=30] (120:1);
            \path [-] (120:1) edge[draw=highlight, very thick, bend left=20] (210:1);
            \foreach \i in {120, 150, 180, 210}{
                \draw (\i:1) node[circle, fill=white] {};
            }
        \end{scope}
        \node[above] at ($(120:1) + (-0.1, 0.05)$) {$3$};
        \node[above] at ($(150:1) + (-0.1, 0.075)$) {$2$};
        \node[above] at ($(180:1) + (-0.125, 0.1)$) {$1$};
        \node[above] at ($(210:1) - (0.3, 0.3)$) {$m$};
        \node[above] at (A.north) {$x$};
    \end{tikzpicture}
    .
\end{align*}
\fi

Hence, if $G$ contains a $C_m$ with $m\ge5$, then $G$ is $k$-equivalent to another graph that contains a $C_{m-2}$.   
Repeating this process, we conclude that $G \sim_k G'$ for some connected graph $G'$ containing a $C_3$ attached to some other vertex, i.e., $\Omega\subset G'$ (cf.\ Lemma \ref{lem:k3plusk4}).

Now, by Lemma \ref{lem:k3plusk4}, $G' \sim_k G''$ for some connected graph $G''$ such that $K_4 \subset G''$. If $n>4$, since $G''$ is connected, its $K_4$ subgraph is attached to some other vertex. By Lemma \ref{lem:kn1kn_again}, we have a $k$-equivalent $K_5$ subgraph. Repeating this process, we eventually obtain the complete graph $K_n$.
\end{proof}

\subsubsection{Bipartite case}

Now we suppose that the interaction graph $G$ is bipartite. This means that its set of vertices is a disjoint union of two subsets, which we will denote as $U$ and $V$, so that all edges connect a vertex from $U$ to a vertex from $V$.
As before, we further assume that $G$ has at least one vertex of degree $>2$.

We start with the following obvious lemma about such graphs.

\begin{lemma}\label{lem:treesubG}
    Let $G$ be a connected bipartite graph with $|U|>1$ and $|V|>1$,
    which has at least one vertex of degree $>2$. Then $G$ contains $\Sigma$ as a subgraph:
        \ifTikz
    \begin{align*}
    \adjustbox{scale=1.5}{$\Sigma \: :=$} \quad
    \begin{tikzpicture}[baseline=0mm, scale=1]
    \begin{scope}[every node/.style={circle,thick,draw,scale=0.5}]
        \node (A) at (1,1.) {};
        \node (B) at (0.0,0.5) {};
        \node (C) at (0,-0.5) {};
        \node (D) at (1, -1) {};
        \node (E) at (1, 0.) {};
    \end{scope}
    \begin{scope}[every edge/.style={draw=black,very thick}]
        \path [-] (A) edge (B);
        \path [-] (C) edge (D);
        \path [-] (B) edge (D);
        \path [-] (E) edge (B);
    \end{scope}
    \end{tikzpicture}
    \quad\adjustbox{scale=1.5}{$\subset G$.} \quad
    \end{align*}
    \fi
\end{lemma}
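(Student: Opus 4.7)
The plan is to build $\Sigma$ inside $G$ by first locating a high-degree vertex to serve as the central ``spider'' vertex (vertex $2$ in the picture), and then using connectivity together with the assumption that both color classes contain more than one vertex to find the pendant tail of length two hanging off one leg.

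More precisely, I would start by fixing a vertex $v$ of degree $>2$ in $G$. Without loss of generality $v\in U$ (the $v\in V$ case is symmetric, and this is where the assumptions $|U|>1$ and $|V|>1$ come in: we will always need ``another vertex on the same side as $v$''). Since $G$ is bipartite, all neighbors of $v$ lie in $V$, so $v$ has at least three distinct neighbors $w_1,w_2,w_3\in V$. These will eventually play the roles of the three leaves $1,3,5$ attached to vertex $2$ in $\Sigma$, except that one of them (to become vertex $3$) must carry an additional neighbor in $U\setminus\{v\}$ to become vertex $4$.

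Next, I would use connectivity plus $|U|>1$ to produce that additional neighbor. Pick any $u\in U$ with $u\neq v$, and take a shortest path in $G$ from $v$ to $u$. Since $G$ is bipartite and $u,v\in U$, this path alternates sides and ends at $u\in U$; hence its first two edges are of the form $v\to x_1\to x_2$ with $x_1\in V$ and $x_2\in U\setminus\{v\}$. Thus $x_1$ is a neighbor of $v$ that itself has a neighbor $x_2\neq v$. Among the at least three neighbors of $v$ I then select $x_1$ together with two further neighbors $y_1,y_2\in V\setminus\{x_1\}$ of $v$, which is possible because $\deg(v)\ge3$.

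Finally, I would verify that the five vertices $v,\,x_1,\,x_2,\,y_1,\,y_2$ are pairwise distinct and that the four edges $(v,x_1),(v,y_1),(v,y_2),(x_1,x_2)$ all belong to $G$: distinctness of $x_1,y_1,y_2$ holds by construction, $v\neq x_2$ by choice of the path, and $x_2\in U$ while $y_1,y_2\in V$ forces $x_2\notin\{y_1,y_2\}$; the four edges are exactly the ones already exhibited. Identifying $v\leftrightarrow 2$, $x_1\leftrightarrow 3$, $x_2\leftrightarrow 4$, $y_1\leftrightarrow 1$, $y_2\leftrightarrow 5$ then realizes $\Sigma$ as a subgraph of $G$. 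I do not anticipate a real obstacle here; the only subtlety is making sure the path argument yields a neighbor of $v$ with a further neighbor \emph{off} $v$, which is precisely why we chose $u$ on the same side as $v$ and invoked bipartiteness to force the parity of the path.
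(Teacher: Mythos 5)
Your proof is correct and follows essentially the same route as the paper: the degree-$>2$ vertex serves as the center of $\Sigma$, and connectivity together with the existence of a second vertex in the same color class supplies the neighbor-of-a-neighbor that extends one leg into the pendant edge. Your shortest-path parity argument merely makes explicit the step the paper states more briefly (``since $G$ is connected, there exists another vertex $v_2\in V$ connected to at least one of $u_1,u_2,u_3$''), so the two arguments coincide in substance.
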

\begin{proof}
By assumption, $G$ has a vertex $v_1$ connected to three other vertices $u_1, u_2, u_3$. Without loss of generality, we assume that $v_1\in V$.
Since $G$ is connected, there exists another vertex $v_2 \in V$ that is connected to at least one of the vertices $u_1, u_2, u_3$.
After relabeling, let $v_2$ be connected to $u_3$. Then we have the following subgraph of $G$:
\ifTikz
\begin{align*}
    \adjustbox{scale=1.5}{$\Sigma$} \quad
    \adjustbox{scale=1.5}{$=$} \quad
    \begin{tikzpicture}[baseline=0mm, scale=1] % []
    \begin{scope}[every node/.style={circle,thick,draw,scale=0.5}]
        \node (A) at (1,1.) {};
        \node (B) at (0.0,0.5) {};
        \node (C) at (0,-0.5) {};
        \node (D) at (1, -1) {};
        \node (E) at (1, 0.) {};
    \end{scope}
    \node[right] at (A.east) {$u_1$};
    \node[above] at (B.north) {$v_1$};
    \node[above] at (C.north) {$v_2$};
    \node[right] at (D.east) {$u_3$};
    \node[right] at (E.east) {$u_2$};
    \begin{scope}[every edge/.style={draw=black,very thick}]
        \path [-] (A) edge (B);
        \path [-] (C) edge (D);
        \path [-] (B) edge (D);
        \path [-] (E) edge (B);
    \end{scope}
    \end{tikzpicture}
\end{align*}
\fi
thus completing the proof. 
    
\end{proof}

Recall that the \emph{complete bipartite graph} $K_{l,m}$ is a bipartite graph with $|U|=l$, $|V|=m$, where every vertex in $U$ is connected to every vertex in $V$.
We have shown in Lemma \ref{lem:deltak23} that $\Sigma \sim_k K_{2,3}$.
Next, we have an analogue of Lemma \ref{lem:kn1kn_again}.

\begin{lemma}\label{lem:knm1kn1m}
    For $k = 2,4,6,14$ and $l>1$, $m>1$, we have:
    \ifTikz
    \begin{align*}
    \begin{tikzpicture}[baseline=0mm]
        \begin{scope}[every node/.style={circle, thick,draw, scale=0.5}]
            \node[circle] (A) at (-2,0) {};
            \path [-] (A) edge[draw=black,very thick] (-1,0);
            \draw[line width=.5mm, dash pattern=on 3pt off 3pt] (1,0) arc (0:360:1);
            \node[circle, fill=white] (B) at (-1,0) {};
            \node[draw=white] (C) at (0,0) {};
            \node[draw=white, scale=3] at (C) {$K_{l,m}$};
        \end{scope}
    \end{tikzpicture}
    \quad \adjustbox{scale=2}{$\sim_k$} \quad 
    \begin{tikzpicture}[baseline=0mm]
        \begin{scope}[every node/.style={circle, thick,draw, scale=0.5}]
            \node[draw=white, scale=3] at (C) {$K_{l,m+1}$};
        \end{scope}
    \end{tikzpicture}
	\end{align*}  
    \fi
    where the extra vertex is attached to the $l$ vertices of $K_{l,m}$.
\end{lemma}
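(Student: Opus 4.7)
The plan is to use Lemma~\ref{lem:deltak23} together with Lemma~\ref{lem:loc} to insert the required edges one at a time, in the spirit of Lemma~\ref{lem:kn1kn_again}. Let $x$ denote the extra vertex and let $U$, $V$ be the $l$- and $m$-sides of $K_{l,m}$, respectively. Since $K_{l,m+1}$ is obtained by placing a new vertex on the $V$-side and joining it to every vertex of $U$, the single initial edge from $x$ must go to some $u_0\in U$, and it suffices to show that each edge $(x,u)$ with $u\in U\setminus\{u_0\}$ can be added without changing $\aa_k^{\,\cdot}$.

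Fix such a $u$ and, using $m>1$, choose two distinct vertices $v_1,v_2\in V$. Set $V_1:=\{x,u_0,u,v_1,v_2\}$. By complete bipartiteness of $K_{l,m}$, the subgraph induced on $V_1$ has exactly the edges $(x,u_0),(u_0,v_1),(u_0,v_2),(u,v_1),(u,v_2)$. Under the identification $1\mapsto x$, $2\mapsto u_0$, $3\mapsto v_1$, $4\mapsto u$, $5\mapsto v_2$, this induced subgraph contains the graph $\Sigma$ of Lemma~\ref{lem:deltak23} and is itself contained in $K_{2,3}$ with bipartition $\{u_0,u\}\sqcup\{x,v_1,v_2\}$; the only edge of this $K_{2,3}$ absent from the induced subgraph is precisely $(x,u)$. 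Since $\aa_k^{\Sigma}=\aa_k^{K_{2,3}}$ by Lemma~\ref{lem:deltak23} and DLAs are monotone in the edge set by \eqref{eq:order}, the induced subgraph on $V_1$ is $k$-equivalent to $K_{2,3}$.

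Applying Lemma~\ref{lem:loc} with $V_1$ as above, $G_1$ the induced subgraph, $G_1'=K_{2,3}$, $G_2$ the current ambient graph, and $G_2'=G_2\cup\{(x,u)\}$, this local equivalence upgrades to an equivalence between $G_2$ and $G_2'$. Iterating over the $l-1$ remaining vertices of $U$ yields $K_{l,m+1}$: at each stage all edges of $K_{l,m}$ and the edge $(x,u_0)$ are preserved, so the required $\Sigma$-subgraph remains available for the next step. I do not anticipate a real obstacle; the whole argument is a careful local application of the already-established equivalence $\Sigma\sim_k K_{2,3}$, with bookkeeping parallel to that of Lemma~\ref{lem:kn1kn_again}.
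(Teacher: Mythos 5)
Your proof is correct and follows essentially the same route as the paper: locate the five-vertex configuration $\{x,u_0,u,v_1,v_2\}$ containing $\Sigma$, invoke $\Sigma\sim_k K_{2,3}$ from Lemma~\ref{lem:deltak23} together with Lemma~\ref{lem:loc} to add the edge $(x,u)$, and iterate over the remaining vertices of $U$. Your explicit sandwich $\Sigma\subset G_1\subset K_{2,3}$ combined with Eq.~\eqref{eq:order} is a careful way of handling the induced-subgraph requirement in Lemma~\ref{lem:loc}, which the paper treats more tersely, but the underlying argument is the same.
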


\begin{proof}
Let us label the vertices of $K_{l,m}$ as $U=\{u_1,\ldots,u_l\}$ and $V=\{v_1,\ldots,v_m\}$, and the extra vertex as $x$. 
    The graph formed by $x, u_1, u_2, v_1, v_2$ contains $\Sigma$ as a subgraph. By Lemma \ref{lem:deltak23}, we can then connect $x$ to $u_2$:
   
    \ifTikz
    \begin{align*}
        \begin{tikzpicture}[baseline=7mm, scale=2] % []
        \begin{scope}[every node/.style={circle,thick,draw,scale=0.5}]
            \node (A) at (0,0.) {};
            \node (B) at (0.5,0.0) {};
            \node (C) at (1,0) {};
            \node (D) at (0.33,1) {};
            \node (E) at (0.66,1) {};
            \node (F) at (1,1) {};
            \node (G) at (0,1) {};
            \node (X) at (-0.5,0.5) {};
            \node[coordinate] (U) at (1.1, 1.2) {};
            \node[coordinate] (V) at (1.1, -0.2) {};
        \end{scope}
        \node at (U.center) {$U$};
        \node at (V.center) {$V$};
        \node[below=1.5mm] at (A.south) {$v_1$};
        \node[below=1.5mm] at (B.south) {$v_2$};
        \node[above=1.5mm] at (G.north) {$u_1$};
        \node[above=1.5mm] at (D.north) {$u_2$};
        \node[below=1.5mm] at (X) {$x$};
        \draw[dashed] (0.5,1) ellipse (8mm and 1.3mm);
        \draw[dashed] (0.5,0) ellipse (7mm and 1.3mm);
        \begin{scope}[every edge/.style={draw=black,very thick}]
            \path [-] (A) edge (F);
            \path [-] (D) edge (C);
            \path [-] (B) edge (E);
            \path [-] (B) edge (F);
            \path [-] (B) edge (D);
            \path [-] (B) edge[color=highlight] (G);
            \path [-] (C) edge (E);
            \path [-] (C) edge (F);
            \path [-] (C) edge (G);
            \path [-] (A) edge[color=highlight] (D);
            \path [-] (A) edge[color=highlight] (G);
            \path [-] (G) edge[color=highlight] (X);
            \path [-] (A) edge (E);
        \end{scope}
        \end{tikzpicture}
        \:\adjustbox{scale=2}{$\sim_k$}\:
        \begin{tikzpicture}[baseline=7mm, scale=2]
        \begin{scope}[every node/.style={circle,thick,draw,scale=0.5}]
            \node (A) at (0,0.) {};
            \node (B) at (0.5,0.0) {};
            \node (C) at (1,0) {};
            \node (D) at (0.33,1) {};
            \node (E) at (0.66,1) {};
            \node (F) at (1,1) {};
            \node (G) at (0,1) {};
            \node (X) at (-0.5,0.) {};
            \node[coordinate] (U) at (1.1, 1.2) {};
            \node[coordinate] (V) at (1.1, -0.2) {};
        \end{scope}
        \node at (U.center) {$U$};
        \node at (V.center) {$V$};
        \node[below=1.5mm] at (A.south) {$v_1$};
        \node[below=1.5mm] at (B.south) {$v_2$};
        \node[above=1.5mm] at (G.north) {$u_1$};
        \node[above=1.5mm] at (D.north) {$u_2$};
        \node[below=1.5mm] at (X) {$x$};
        \draw[dashed] (0.5,1) ellipse (8mm and 1.3mm);
        \draw[dashed] (0.25,0) ellipse (9mm and 1.3mm);
        \begin{scope}[every edge/.style={draw=black,very thick}]
            \path [-] (A) edge (E);
            \path [-] (A) edge[color=highlight] (D);
            \path [-] (A) edge (F);
            \path [-] (A) edge[color=highlight] (G);
            \path [-] (D) edge (C);
            \path [-] (B) edge (E);
            \path [-] (B) edge (F);
            \path [-] (B) edge (D);
            \path [-] (B) edge[color=highlight] (G);
            \path [-] (C) edge (E);
            \path [-] (C) edge (F);
            \path [-] (C) edge (G);
            \path [-] (X) edge[color=highlight] (D);
            \path [-] (X) edge[color=highlight] (G);
        \end{scope}
        \end{tikzpicture}
    \end{align*}
    \fi
By symmetry,  we can connect $x$ to all vertices in $U$, which gives $K_{l,m+1}$.
\end{proof}

Combining these lemmas gives us the following theorem:

\begin{theorem}\label{thm:klm}
Let\/ $G$ be a connected bipartite graph with $l$ and $m$ vertices in each color, which has at least one vertex of degree $>2$. Then $G \sim_k K_{l,m}$ for $k = 2,4,6,14$.
\end{theorem}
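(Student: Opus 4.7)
The plan is to mirror the strategy of \Cref{thm:branch_complete}: first extract a complete bipartite subgraph from $G$ using $k$-equivalences, then grow it to $K_{l,m}$ one vertex at a time. All of the required tools are already in place from the preceding lemmas.

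First I would dispose of the degenerate case $l=1$ or $m=1$. If $l=1$, then for $G$ to have a vertex of degree $>2$ one must have $m\ge 3$, and since $G$ is connected every vertex of $V$ must be adjacent to the unique vertex of $U$; thus $G=K_{1,m}$ and there is nothing to prove. The case $m=1$ is symmetric.

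Assume now $l,m\ge 2$. By Lemma \ref{lem:treesubG}, $G$ contains $\Sigma$ as a subgraph on five vertices $u_1,u_2,u_3\in U$, $v_1,v_2\in V$. Applying Lemma \ref{lem:deltak23} together with the locality principle (Lemma \ref{lem:loc}) to these five vertices produces a graph $G_1\sim_k G$, with the same vertex set as $G$, such that $K_{2,3}\subseteq G_1$ on the vertices above. This yields the base complete bipartite subgraph to grow from.

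Next I would run an induction, maintaining as invariant that $G_t\sim_k G$ contains a subgraph $K_{l_t,m_t}$ on vertices $U_t\cup V_t$ with $U_t\subseteq U$ and $V_t\subseteq V$. If $U_t=U$ and $V_t=V$, then $G_t\supseteq K_{l,m}$; but $G_t$ is bipartite with the same $2$-colouring as $G$, hence also $G_t\subseteq K_{l,m}$, so $G_t=K_{l,m}$ and the theorem follows. Otherwise $G_t$ is connected (only edges have been added) and $U_t\cup V_t$ is a proper vertex subset, so there must exist an edge $(x,y)$ of $G_t$ with $x\notin U_t\cup V_t$ and $y\in U_t\cup V_t$. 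Bipartiteness forces $x$ and $y$ to lie on opposite sides of the colouring; WLOG $x\in V\setminus V_t$ and $y\in U_t$. On the vertex set $U_t\cup V_t\cup\{x\}$ the graph $G_t$ then contains exactly the configuration of Lemma \ref{lem:knm1kn1m} (a copy of $K_{l_t,m_t}$ with a pendant attached to its $l_t$-side), and applying that lemma together with Lemma \ref{lem:loc} yields a $k$-equivalent $G_{t+1}$ containing $K_{l_t,m_t+1}$ on $U_t\cup(V_t\cup\{x\})$. Iterating terminates in finitely many steps with $G_s=K_{l,m}$, giving $G\sim_k K_{l,m}$.

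The main obstacle, as in the non-bipartite case, is essentially bookkeeping: one must ensure that the $2$-colouring of $G$ is respected throughout, so that each newly absorbed vertex $x$ is attached to the appropriate side of the growing complete bipartite subgraph. This is automatic from the bipartiteness of $G$ (and hence of each $G_t$, since edges are only ever added between opposite colours), so no new algebraic computations are needed beyond those already supplied by Lemmas \ref{lem:deltak23} and \ref{lem:knm1kn1m}.
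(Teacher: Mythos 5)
Your proposal is correct and follows essentially the same route as the paper: dispose of the $l=1$ (or $m=1$) case directly, use Lemma~\ref{lem:treesubG} and Lemma~\ref{lem:deltak23} to obtain a $k$-equivalent graph containing $K_{2,3}$, and then grow it to $K_{l,m}$ by repeated application of Lemma~\ref{lem:knm1kn1m} together with Lemma~\ref{lem:loc}. The only difference is that you spell out the induction (connectedness providing a crossing edge, bipartiteness dictating which side absorbs the new vertex) that the paper leaves implicit.
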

\begin{proof}
If $l=1$, then since $G$ is connected, the single vertex of the first color must be connected to each of the other $m$ vertices; hence $G=K_{1,m}$.
Suppose now that $l>1$ and $m>1$.
By Lemma \ref{lem:treesubG}, we know that $G$ must contain $\Sigma$ as a subgraph. By Lemma \ref{lem:deltak23}, $\Sigma \sim_k K_{2,3}$. 
Using Lemma \ref{lem:knm1kn1m}, we can grow $K_{2,3}$ to all of $G$, which eventually gives $G \sim_k K_{l,m}$. 
\end{proof}

\section{Identifying the DLAs\label{sec:identify}}
In this section, we finish the proof of \Cref{the:classification}.

\subsection{Initial Observations}

First, as already mentioned before, the case $k=0$ trivially gives the Abelian Lie algebra 
\begin{align}\label{eq:aa0}
\aa_0^G = i\,\Span_{\mathbb R} \{ X_i X_j \,|\, (i,j) \in G\},
\end{align}
spanned by $XX$ placed on every edge $(i,j)$ of the interaction graph $G$.

Second, for $k=7,16,20,22$, \Cref{thm:complete} tells us that
\begin{align}\label{eq:aakpi}
\aa_k^G = \aa_k^{K_n} = \aa_k^\pi(n),
\end{align}
where $n$ is the number of vertices of $G$ and in the right-hand side we used the notation of~\cite{wiersema2023classification}.
Now~\cite[Theorem IV.3]{wiersema2023classification} provides the answer for $\aa_k^G$ (see also Appendix \ref{sec:list_dla}).

Third, for $k=2,4,6,14$, Eq.\ \eqref{eq:aakpi} still holds when the graph $G$ is not bipartite (and not a cycle), by \Cref{thm:branch_complete}.
Thus, we are only left to consider the case where $k=2,4,6,14$ and $G$ is bipartite. Provided that $G$ is not a line or circle,
which were already considered in~\cite{wiersema2023classification}, we can apply \Cref{thm:klm} to conclude that
$\aa_k^G = \aa_k^{K_{l,m}}$. For the remainder of this section, we will assume that $k=2,4,6,14$ and $G=K_{l,m}$.

The number of cases can be reduced due to the following result, which is a straightforward generalization of 
\cite[Lemmas C.3, C.4]{wiersema2023classification}.

\begin{proposition}\label{prop:iso}
For every bipartite interaction graph $G$, we have\/ $\aa_2^G \cong \aa^G_4$ and\/ $\aa_6^G \cong \aa^G_7$.
\end{proposition}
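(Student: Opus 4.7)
The plan is to exhibit explicit Lie algebra isomorphisms $\Phi\colon \aa_2^G \to \aa_4^G$ and $\Psi\colon \aa_6^G \to \aa_7^G$ by exploiting the 2-coloring of $G$. The idea is that since $G$ is bipartite, every edge has exactly one endpoint in each color class $U,V$, so we can apply one local Pauli-algebra automorphism at the vertices of $V$ while leaving $U$ untouched, and the mismatch between generators like $XY$ vs.\ $XX$ will be cured edge by edge. Because each local map is a Lie algebra automorphism of $\su(2)$, the induced tensor-product map is a Lie algebra automorphism of $\uu(2^n)$, which restricts to an isomorphism between the two DLAs as soon as it maps one generating set onto the other (up to signs, which are irrelevant for the $\mathbb R$-linear span).

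For $\aa_2^G \cong \aa_4^G$, I would define the local map $\phi$ on $\su(2)$ by $\phi(X)=Y$, $\phi(Y)=X$, $\phi(Z)=-Z$, and check that the three Pauli commutation relations $[X,Y]=2iZ$, $[Y,Z]=2iX$, $[Z,X]=2iY$ are preserved (the sign on $Z$ is exactly what makes this work). Then set $\Phi := \bigotimes_{v\in V}\phi_v \otimes \bigotimes_{u\in U} \mathrm{id}_u$. For an edge $(u,v)\in G$ with $u\in U$, $v\in V$, one computes $\Phi(X_u Y_v) = X_u X_v$ and $\Phi(Y_u X_v) = Y_u Y_v$, so $\Phi$ sends the symmetric generating set $\{XY,YX\}$ on each edge to $\{XX,YY\}$. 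Since $\Phi$ is a Lie algebra automorphism of the ambient $\uu(2^n)$, it maps $\aa_2^G = \Lie{\Phi^{-1}(\text{generators of }\aa_4^G)}$ isomorphically onto $\aa_4^G$.

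For $\aa_6^G \cong \aa_7^G$, the recipe is analogous with the local map $\psi$ on $\su(2)$ given by $\psi(X)=-X$, $\psi(Y)=Z$, $\psi(Z)=Y$ on $v\in V$, which I would again verify is a Lie algebra automorphism by checking the three Pauli brackets. Setting $\Psi := \bigotimes_{v\in V}\psi_v$, for any edge $(u,v)$ with $u\in U,\, v\in V$ we get $\Psi(X_u X_v) = -X_u X_v$, $\Psi(Y_u Z_v) = Y_u Y_v$, and $\Psi(Z_u Y_v) = Z_u Z_v$, so the generating set $\{XX,YZ,ZY\}$ of $\aa_6$ on each edge maps to $\{XX,YY,ZZ\}$ (the leading sign is absorbed into the $\mathbb R$-linear span). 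Hence $\Psi$ restricts to a Lie algebra isomorphism $\aa_6^G \to \aa_7^G$.

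There is no serious obstacle: the only thing one must check is that the local permutations of Pauli letters chosen above really preserve the commutator structure of $\su(2)$, which is a three-line verification in each case, and that the bipartite assumption on $G$ is genuinely needed, because only then does every edge cross between $U$ and $V$ so that each generator on an edge picks up exactly one factor of $\phi_v$ (or $\psi_v$). If $G$ were not bipartite, some edges would have both endpoints in $V$ and both factors would transform, sending, e.g., $X_uY_v\mapsto Y_uX_v$ (for the first map) and destroying the correspondence with $\aa_4$.
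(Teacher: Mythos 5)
Your proof is correct and follows essentially the same route as the paper: apply a local relabeling of Pauli letters ($X\leftrightarrow Y$, resp.\ $Y\leftrightarrow Z$) at the vertices of one color class, which sends the edge generators of $\aa_2^G$ (resp.\ $\aa_6^G$) to those of $\aa_4^G$ (resp.\ $\aa_7^G$). Your extra care in fixing the sign of the third Pauli so that the local map is a genuine automorphism of $\su(2)$ (equivalently, a unitary conjugation, so the tensor-product map is an automorphism of $\uu(2^n)$) just makes explicit what the paper leaves implicit in the word ``swap.''
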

\begin{proof}
Recall that $\aa_2=\Lie{XY, YX}$ and $\aa_4=\Lie{XX, YY}$. 
By construction, the DLA $\aa_2^G$ is generated by $X_iY_j$ for every edge $(i,j)\in G$.
As before, let us denote by $U$ and $V$ the sets of vertices in $G$ for each of the two colors, so that every edge connects a vertex from $U$ to a vertex from $V$. Thus, the generators of $\aa_2^G$ have the form $X_uY_v, Y_uX_v$ for edges $(u,v)$ with $u\in U$, $v\in V$.
If we swap $X_v \leftrightarrow Y_v$ for all $v\in V$, these generators will transform into the generators $X_uX_v, Y_uY_v$ of $\aa_4^G$.
Therefore, $\aa_2^G \cong \aa^G_4$.

The proof of $\aa_6^G \cong \aa^G_7$ is similar, by swapping $Y_v \leftrightarrow Z_v$ for all $v\in V$.
\end{proof}

\subsection{Upper Bounds for $\aa^{K_{l,m}}_4$ and $\aa^{K_{l,m}}_{14}$}\label{sec:upb}

Due to Proposition \ref{prop:iso}, because $\aa^G_7$ is determined by Eq.\ \eqref{eq:aakpi}, 
we are only left with the cases $k = 4$ and $k = 14$, with the interaction graph
$G=K_{l,m}$ a complete bipartite graph.
The first step in determining the corresponding DLAs is to find upper bounds for them.
For convenience, recall that
\begin{equation}\label{eq:gen414}
\begin{split}
\aa_4       =&	\Lie{XX, YY},\\
\aa_{14}	=&	%\Lie{XX, YY, XY, YX} = 
\Lie{XX, YY, ZI, IZ}.
\end{split}
\end{equation}

Let us denote the vertices of the first color of $K_{l,m}$ as $v_1, \dots, v_l$, and those of the second color as $u_1, \dots, u_m$.
We introduce the operator
\begin{align}\label{eq:q}
Q = \prod_{i=1}^l Y_{v_i} \prod_{j=1}^m X_{u_j} = Y^{\otimes l} \otimes X^{\otimes m}.
\end{align}
Since $Q$ is a Pauli string, it satisfies $Q^\dagger=Q$ and $Q^T=(-1)^l Q$ (see Appendix \ref{secpauli1}).
Then the formula
\begin{align}\label{eq:theta}
\theta_{l,m}(g) = -Q g^T Q, \quad g\in\su(2^n), \;\; n=l+m,
\end{align}
defines an involution of $\su(2^n)$ (cf.\ \cite[Corollary A.3]{wiersema2023classification}).
Recall that an \emph{involution} of a Lie algebra $\g$ is a linear operator $\theta$ such that
$\theta^2=I$ and $\theta([a,b]) = [\theta(a), \theta(b)]$; then the set of \emph{fixed points},
\begin{align}\label{eq:fixed}
\g^\theta = \{ g\in\g \,|\, \theta(g)=g \},
\end{align}
is a subalgebra of $\g$.

\begin{lemma}\label{thm:upperbounds}
    The Lie algebras $\aa^{K_{l,m}}_4$ and $\aa^{K_{l,m}}_{14}$ are invariant under the involution $\theta_{l,m}$ defined by \eqref{eq:theta}, i.e.,
    \begin{align}\label{eq:upperbounds}
        \aa^{K_{l,m}}_k \subseteq \left( \aa^{K_{l+m}}_k\right)^{\theta_{l,m}}, \qquad k=4,14.
    \end{align}
\end{lemma}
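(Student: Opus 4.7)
\medskip

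\noindent\textbf{Proof plan.}
The inclusion $\aa_k^{K_{l,m}} \subseteq \aa_k^{K_{l+m}}$ is immediate from \eqref{eq:order}, since every edge of $K_{l,m}$ (between vertices of different colors) is also an edge of $K_{l+m}$. The content of the lemma is therefore that every element of $\aa_k^{K_{l,m}}$ is fixed by $\theta_{l,m}$. My plan is to reduce this to a finite check on generators and then perform that check by tracking sign changes under conjugation by the Pauli string $Q$.

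\medskip

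\noindent\textbf{Reduction to generators.}
First I would invoke (as the paper already mentions, via~\cite[Corollary A.3]{wiersema2023classification}) the fact that the map $\theta_{l,m}(g) = -Q g^T Q$ is a well-defined involution of $\su(2^n)$ with $n=l+m$: the identities $Q^\dagger = Q$ and $Q^T = (-1)^l Q$ (both immediate from $Q$ being a Hermitian Pauli string with exactly $l$ factors of $Y$) give $\theta_{l,m}^2 = I$ and the bracket-preserving property. Consequently, the fixed-point set $(\aa_k^{K_{l+m}})^{\theta_{l,m}}$ in \eqref{eq:fixed} is a Lie subalgebra. Since $\aa_k^{K_{l,m}}$ is generated as a Lie algebra by the images in $\aa_k^{K_{l+m}}$ of the generators of $\aa_k$ placed on the edges of $K_{l,m}$, it suffices to check that each of these generators is individually fixed by $\theta_{l,m}$.

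\medskip

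\noindent\textbf{The generator check.}
Labeling the first color class by $V = \{v_1,\dots,v_l\}$ and the second by $U = \{u_1,\dots,u_m\}$, the generators are $iX_{v_i}X_{u_j}$ and $iY_{v_i}Y_{u_j}$ for $k=4$, together with $iZ_{v_i}$ and $iZ_{u_j}$ for $k=14$ (cf.\ \eqref{eq:gen414}). For any Pauli string $P$, I would compute $\theta_{l,m}(iP) = -i Q P^T Q$ using two elementary facts: (i) $P^T = (-1)^{\#Y(P)} P$, and (ii) $Q P Q = \epsilon(P)\, P$, where $\epsilon(P) \in \{\pm 1\}$ is determined by counting the number of sites at which the single-qubit factor of $P$ anticommutes with the corresponding factor of $Q$ (this uses $Q^2 = I$). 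Since $Q$ has $Y$ on each $v_i$ and $X$ on each $u_j$, the relevant sign rules are:
\begin{itemize}
\item at a vertex $v_i\in V$: $X$ and $Z$ anticommute with $Y_{v_i}$, while $Y$ commutes;
\item at a vertex $u_j\in U$: $Y$ and $Z$ anticommute with $X_{u_j}$, while $X$ commutes.
\end{itemize}
Running this through the four generator types gives: $Q\, X_{v_i} X_{u_j}\, Q = -X_{v_i} X_{u_j}$, $Q\, Y_{v_i} Y_{u_j}\, Q = -Y_{v_i} Y_{u_j}$, $Q\, Z_{v_i}\, Q = -Z_{v_i}$, $Q\, Z_{u_j}\, Q = -Z_{u_j}$. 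Combined with $\#Y(P) \in \{0,2,0,0\}$ respectively, so that $P^T=P$ in all four cases, each generator satisfies $\theta_{l,m}(iP) = -i(-P) = iP$ and is therefore fixed.

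\medskip

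\noindent\textbf{Conclusion and main obstacle.}
Since $\theta_{l,m}$ is a Lie algebra automorphism and all generators of $\aa_k^{K_{l,m}}$ (for $k=4,14$) lie in its fixed-point set, every nested commutator of these generators is also fixed, which gives the desired containment \eqref{eq:upperbounds}. The only real difficulty is bookkeeping: one must carefully align the transpose sign $(-1)^{\#Y(P)}$ with the conjugation sign $\epsilon(P)$ coming from $Q$, and verify that the overall sign compensates against the leading minus sign in the definition of $\theta_{l,m}$. Once the two sign conventions are tabulated, the four generator checks are essentially mechanical, and no induction or deeper structural input is needed.
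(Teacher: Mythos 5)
Your proposal is correct and follows essentially the same route as the paper: both arguments reduce the containment to checking that each generator (a transpose-invariant Pauli string) anti-commutes with $Q$, so that the leading minus sign in $\theta_{l,m}$ cancels and the generator is fixed, after which the subalgebra property of the fixed-point set finishes the proof. Your per-vertex sign bookkeeping is just a more explicit version of the paper's one-line observation that $XX,YY,ZI,IZ$ anti-commute with $YX$ on every edge $(v_i,u_j)$.
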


\begin{proof}
The generators \eqref{eq:gen414} of both $\aa_4$ and $\aa_{14}$ are
invariant under transpose. Recall that for any interaction graph $G$, the DLA $\aa_4^G$ is generated by $X_iX_j, Y_iY_j$ for every edge $(i,j)$ of $G$,
while $\aa_{14}^G$ is generated by $X_iX_j, Z_i$ for every vertex $i$ and edge $(i,j)$ of $G$.
Because any two Pauli strings either commute or anti-commute (see \eqref{lemp1}), 
any generator $g$ of $\aa_k^{K_{l+m}}$ satisfies $\theta_{l,m}(g) = \pm g$. Hence, $\theta_{l,m}$ restricts to an involution of $\aa_k^{K_{l+m}}$.

For $G=K_{l,m}$, note that all generators of $\aa^{K_{l,m}}_k$
anti-commute with $Q$, because $XX,YY,ZI,IZ$ anti-commute with $YX$ and every edge has the form $(v_i,u_j)$ where $1\le i\le l$, $1\le j\le m$.
Thus, all generators are fixed under the involution $\theta_{l,m}$.
\end{proof}

In the next two lemmas, we identify the upper bounds from Eq.\ \eqref{eq:upperbounds}.
It is easier to start with $k=14$.

\begin{lemma}\label{lem:upb1}
We have
    \begin{align}
        \left( \aa^{K_{n}}_{14}\right)^{\theta_{l,m}} \cong \begin{cases}
            \sp(2^{n-2})^{\oplus 2}, &\text{$l,m$ odd,} \\
            \so(2^{n-1})^{\oplus 2}, &\text{$l,m$ even,} \\
            \su(2^{n-1}), &\text{$n=l+m$ odd.}
        \end{cases}
    \end{align}
\end{lemma}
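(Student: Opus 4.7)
The plan is to realize $\aa_{14}^{K_n}$ concretely via a central element and then apply the Cartan classification of involutions to each simple ideal. From \cite{wiersema2023classification} (see Appendix \ref{sec:list_dla}) we have $\aa_{14}^{K_n} \cong \su(2^{n-1})^{\oplus 2}$ for $n\geq 3$. I would exhibit this decomposition using the parity operator $\Pi := \prod_{i=1}^n Z_i$: every generator $X_iX_j$, $Y_iY_j$, $Z_i$ of $\aa_{14}^{K_n}$ has an even number of non-$Z$ Pauli factors and hence commutes with $\Pi$, so the whole DLA commutes with $\Pi$ as matrices. Letting $P_\pm := \tfrac{1}{2}(I \pm \Pi)$ project onto the $\pm 1$-eigenspaces $V_\pm$ of $\Pi$, a dimension count (traceless span of Pauli strings of even non-$Z$ weight, excluding $I$ and $\Pi$) shows that $\aa_{14}^{K_n} = \aa_+ \oplus \aa_-$ with $\aa_\pm := P_\pm \aa_{14}^{K_n} P_\pm \cong \su(V_\pm) \cong \su(2^{n-1})$.

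Next I would track how $\theta_{l,m}$ acts on this decomposition. Using $\Pi^T = \Pi$ and $Q Z_i Q = -Z_i$ at every site (since $YZY = XZX = -Z$) one obtains $Q \Pi Q = (-1)^n \Pi$, hence $Q P_\pm Q = P_{\pm(-1)^n}$. Combining with $(P_+ g P_+)^T = P_+ g^T P_+$ gives
\begin{align*}
    \theta_{l,m}(P_+ g P_+) = P_{(-1)^n}\, \theta_{l,m}(g)\, P_{(-1)^n},
\end{align*}
so $\theta_{l,m}$ preserves the two summands when $n$ is even and swaps them when $n$ is odd. In the $n$-odd case, the fixed subalgebra is the diagonal copy $\{(g,\phi(g)):g\in\aa_+\} \cong \su(2^{n-1})$, which is the third case of the lemma.

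When $n$ is even, $\theta_{l,m}$ restricts to an involution of each $\aa_\pm \cong \su(2^{n-1})$. Since $n$ even forces $Q$ to commute with $\Pi$, $Q$ preserves $V_\pm$; let $Q_\pm := Q|_{V_\pm}$, which is a unitary on $V_\pm \cong \mathbb{C}^{2^{n-1}}$ with $Q_\pm^2 = I$. Choosing coordinate embeddings (valid since $\Pi$ is diagonal), I would verify that the restriction $\theta_{l,m}|_{\aa_\pm}$ corresponds under the identification $\aa_\pm \cong \su(V_\pm)$ to the involution $g \mapsto -Q_\pm g^T Q_\pm^{-1}$. Since $Q^T = (-1)^l Q$ at the level of full matrices, passing to the restriction gives $Q_\pm^T = (-1)^l Q_\pm$. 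For $l,m$ both even, $Q_\pm$ is symmetric and nondegenerate, so up to change of basis the involution is $g \mapsto -g^T$, whose fixed subalgebra in $\su(2^{n-1})$ is $\so(2^{n-1})$; this gives $\so(2^{n-1})^{\oplus 2}$. For $l,m$ both odd, $Q_\pm$ is a nondegenerate skew-symmetric unitary, i.e., a symplectic form on $V_\pm$, and the fixed subalgebra in $\su(2^{n-1})$ is $\sp(2^{n-2})$, yielding $\sp(2^{n-2})^{\oplus 2}$.

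The main technical obstacle I foresee is the identification of $\theta_{l,m}|_{\aa_\pm}$ with the standard Cartan involution $g \mapsto -Q_\pm g^T Q_\pm^{-1}$ on $\su(2^{n-1})$: this requires choosing an isometry $V : \mathbb{C}^{2^{n-1}} \hookrightarrow V_+$ with $V^T = V^\dagger$ and carefully pushing the transpose through the coordinate inclusion. Once that is done, the three cases of the lemma follow directly from the standard symmetric-pair identifications $(\su(N), \so(N))$, $(\su(2k), \sp(k))$, and the diagonal embedding $(\su(N)\oplus\su(N), \su(N))$.
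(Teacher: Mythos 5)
Your argument is correct and reaches the lemma by the same basic strategy as the paper: split $\aa_{14}^{K_n}\cong\su(2^{n-1})^{\oplus 2}$ into the two blocks attached to $P_Z=\Pi$, decide from the parity of $n$ whether $\theta_{l,m}$ preserves or swaps the blocks, and then identify the fixed points of the restricted involution from the symmetry type of the form. The implementation differs in a way worth noting. The paper first conjugates by an explicit Clifford-type unitary $U$ so that $P_Z\mapsto Z_1$, computes $\Bar{Q}=UQU^T$ explicitly in each parity case, and then quotes the earlier work's Lemmas A.3, A.4 and Corollary A.3; you instead stay in the computational eigenbasis of $\Pi$, use $Q\Pi Q=(-1)^n\Pi$ to settle preserve-versus-swap, and restrict $Q$ and the transpose to the coordinate eigenblocks, reading off the type from $Q_\pm^T=(-1)^l Q_\pm$ (equivalent to the paper's $(-1)^m$, since $l\equiv m \pmod 2$ when $n$ is even). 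Your route avoids the explicit unitary computation, at the cost of the bookkeeping you correctly flag: the transpose commutes with restriction to $V_\pm$ precisely because $\Pi$ is diagonal, so $V_\pm$ are coordinate subspaces and a real (hence $V^T=V^\dagger$) inclusion isometry exists; the paper's conjugation instead makes $\Bar{Q}$ a manifest string of $Z$'s and identities so the cited corollary applies verbatim. One small point: your ``dimension count'' establishing $\aa_{14}^{K_n}=\su(V_+)\oplus\su(V_-)$ implicitly uses that $iP_Z\notin\aa_{14}^{K_n}$ (block-tracelessness), i.e., exactly the quotient description \eqref{eq:Kn14} from the prior classification---the same external input the paper uses---so this is fine but should be cited rather than treated as a mere count. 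With that, the three cases follow from the standard $(\su,\so)$, $(\su,\sp)$, and diagonal symmetric-pair identifications, as you say.
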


\begin{proof}
First, let us describe more explicitly the Lie algebra $\aa^{K_n}_{14}$.
By Theorem C.1 and Lemma C.38 in \cite{wiersema2023classification}, we have
\begin{align}\label{eq:Kn14}
    \aa^{K_{n}}_{14} = \aa^\pi_{14}(n) = \su(2^{n})^{P_Z}/i \mathbb{R} P_Z,
\end{align}
where $P_Z=Z^{\otimes n}$ is the product of $Z$'s acting on every qubit.
As in \eqref{eq:fixed}, $\su(2^{n})^{P_Z}$ denotes the subalgebra of $\su(2^{n})$ fixed under $P_Z$;
in this case, commuting with $P_Z$.

In \cite[Lemma C.26]{wiersema2023classification}, we showed that
\begin{align*}
\su(2^{n})^{P_Z}/i \mathbb{R} P_Z \cong \su(2^{n-1}) \oplus \su(2^{n-1}).
\end{align*}
As in the proof of that lemma, consider the isomorphism $\varphi(g) = UgU^\dagger$, with the unitary matrix
\begin{align*}
U = e^{i\frac{\pi}{4}X \otimes Z^{\otimes n-1}} e^{-i\frac{\pi}{4}X_1} .
\end{align*}
Since $U P_Z U^\dagger = Z_1$ by \eqref{equab}, $\varphi$ sends $\su(2^{n})^{P_Z}/i \mathbb{R} P_Z$ onto
$\su(2^{n})^{Z_1}/i \mathbb{R} Z_1$. 
The latter has a basis consisting of all Pauli strings that commute with $Z_1$ excluding $Z_1$ itself, and is easily identified with
\begin{align*}
\{I&,Z \} \otimes \su(2^{n-1}) \\
&= \left(\frac{I+Z}{2} \otimes \su(2^{n-1})\right) \oplus \left(\frac{I-Z}{2} \otimes \su(2^{n-1})\right) \\
&\cong \su(2^{n-1}) \oplus \su(2^{n-1}).
\end{align*}  
In the last line above, the two $\su(2^{n-1})$ summands correspond to the eigenspaces of the operator $Z_1$.

According to \cite[Lemmas A.3, A.4]{wiersema2023classification}, 
$\varphi$ sends the fixed points $(\aa^{K_{n}}_{14})^{\theta_{l,m}}$ to the fixed points of
$\varphi(\aa^{K_{n}}_{14}) \cong \su(2^{n-1}) \oplus \su(2^{n-1})$ under the involution
$\Bar{\theta}_{l,m}(g) = -\Bar{Q}g^T\Bar{Q}^\dagger$ where $\Bar{Q}=UQU^T$.
Assuming that vertex 1 is $v_1$, we compute:
\begin{align*}
    \Bar{Q}&=UQU^T \\
    &= e^{i\frac{\pi}{4}X \otimes Z^{\otimes n-1}} e^{-i\frac{\pi}{4}X_1} \cdot Q \cdot e^{-i\frac{\pi}{4}X_1} e^{i\frac{\pi}{4}X \otimes Z^{\otimes n-1}} \\
    &= e^{i\frac{\pi}{4}X \otimes Z^{\otimes n-1}} \cdot Q \cdot e^{i\frac{\pi}{4}X \otimes Z^{\otimes n-1}}.
\end{align*}

\paragraph*{Case $n = l+m$ even.} 
In this case, $Q = Y^{\otimes l} \otimes X^{\otimes m}$ and $X \otimes Z^{\otimes n-1}$ commute, which leads to 
\begin{align*}
    \Bar{Q} &= i(X \otimes Z^{\otimes n-1}) \cdot (Y^{\otimes l} \otimes X^{\otimes m}) \\
    &\equiv Z \otimes X^{\otimes l-1} \otimes Y^{\otimes m}.
\end{align*}
The last equality is up to a phase, which has no effect on the involution itself. Since $[\Bar{Q},Z_1] = 0$, the involution $\Bar{\theta}_{l,m}$ will not mix the eigenspaces of $Z_1$ and will effect them separately. Taking the transpose, we find
\begin{align*}
    \Bar{Q}^T = (Z \otimes X^{\otimes l-1} \otimes Y^{\otimes m})^T = (-1)^m \Bar{Q}.
\end{align*}
By \cite[Corollary A.3]{wiersema2023classification}, we obtain
\begin{align*}
    \left( \su(2^{n-1})\right)^{\Bar{\theta}_{l,m}} \cong \begin{cases}
        \sp(2^{n-2}), &\text{$m$ odd,} \\
        \so(2^{n-1}), &\text{$m$ even.}
    \end{cases}
\end{align*}

\paragraph*{Case $n = l+m$ odd.} In this case, $Q = Y^{\otimes l} \otimes X^{\otimes m}$ and $X \otimes Z^{\otimes n-1}$ anti-commute, which gives $\Bar{Q} = Q$ and $\Bar{\theta}_{l,m} = \theta_{l,m}$.
As $Q$ and $Z_1$ do not commute, the involution mixes the eigenspaces of $Z_1$ together. 
When restricted to $I\otimes \su(2^{n-1})$, $\theta_{l,m}$ acts as $I \otimes \theta_{l-1,m}$,
while its restriction to $Z\otimes \su(2^{n-1})$ acts as $-I \otimes \theta_{l-1,m}$.
Hence, the following map is an isomorphism from $\su(2^{n-1})$ onto $(\aa^{K_{n}}_{14})^{\theta_{l,m}}$:
\begin{align*}
    g \mapsto I \otimes \frac{g + \theta_{l-1,m}(g)}{2} + Z \otimes \frac{g - \theta_{l-1,m}(g)}{2}.
\end{align*}
Therefore, 
\begin{align}\label{eq:a_14_odd}
    \left( \aa^{K_{n}}_{14}\right)^{\theta_{l,m}} \cong \su(2^{n-1}), \quad n=l+m \;\text{ odd,}
\end{align}
completing the proof of the lemma.
\end{proof}

\begin{lemma}\label{lem:upb2}
We have
    \begin{align}
        \left( \aa^{K_{n}}_4\right)^{\theta_{l,m}} \cong \begin{cases}
            \su(2^{n-2})^{\oplus 2}, &\text{$l,m$ odd,} \\
            \so(2^{n-2})^{\oplus 4}, &\text{$l,m$ even,} \\
            \so(2^{n-1}), &\text{$n=l+m$ odd.}
        \end{cases}
    \end{align}
\end{lemma}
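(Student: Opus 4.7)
The plan is to adapt the strategy of Lemma~\ref{lem:upb1} to the more restrictive setting of $\aa_4^{K_n}$. The first step is to identify $\aa_4^{K_n}$ explicitly. On $K_n$ for $n\ge 3$, iterated commutators of the $XX,YY$ generators produce $Z_iZ_j$ on every edge (for instance $[[X_1X_2,Y_1Y_3],Y_2Y_3]\propto Z_1Z_2$), so $\aa_4^{K_n}\supseteq\aa_7^{K_n}$, and the reverse inclusion $\aa_4\subseteq\aa_7$ is trivial. By an analog of Eq.~\eqref{eq:Kn14} from \cite{wiersema2023classification}, the algebra $\aa_4^{K_n}=\aa_7^{K_n}$ is the subalgebra of $\su(2^n)$ commuting with $P_X,P_Y,P_Z$ (modulo, for $n$ even, the central span over $i\mathbb{R}$ of those Pauli strings). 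The relevant behavior of $Q$ is controlled by the three signs $QP_X=(-1)^l P_XQ$, $QP_Y=(-1)^m P_YQ$, $QP_Z=(-1)^n P_ZQ$, whose patterns distinguish the three branches of the formula.

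For $n=l+m$ odd, $P_XP_YP_Z=\pm iI$, so $\{I,P_X,P_Y,P_Z\}$ spans a copy of $M_2(\mathbb{C})$ and $\mathbb{C}^{2^n}\cong\mathbb{C}^{2^{n-1}}\otimes\mathbb{C}^2$. Choosing a unitary $U$ with $UP_XU^\dagger=X_n$, $UP_YU^\dagger=Y_n$, $UP_ZU^\dagger=\pm Z_n$ (in the spirit of Lemma~\ref{lem:upb1}) identifies the conjugated $\aa_4^{K_n}$ with $\su(2^{n-1})\otimes I$. Since $Q$ commutes or anticommutes with each of $P_X,P_Y,P_Z$, the image $\bar Q=UQU^T$ factors as $A\otimes M$ with $M\in\{X,Y\}$ (the unique Pauli that commutes with $Q$) and $A\in M_{2^{n-1}}$. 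Combining $\bar Q^T=(-1)^l\bar Q$ with Hermiticity $\bar Q^\dagger=\bar Q$ and $\bar Q^2=I$ forces $A$ to be real symmetric with $A^2=I$, so the induced involution on $\su(2^{n-1})$ takes the form $g\mapsto -Ag^TA$ with $A$ symmetric, whose fixed-point set is $\so(2^{n-1})$ by \cite[Corollary A.3]{wiersema2023classification}.

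For $n$ even, $P_X,P_Y,P_Z$ commute and diagonalize simultaneously with four joint eigenspaces $V_{s,t}$ ($s,t\in\{\pm\}$) of dimension $2^{n-2}$, yielding $\aa_4^{K_n}\cong\su(2^{n-2})^{\oplus 4}$ after quotienting by the central span. When $l,m$ are both even, $Q$ commutes with each $P_\bullet$, so $\bar Q$ is block-diagonal with symmetric blocks $\bar Q_{s,t}$ (using $Q^T=Q$ and Hermiticity); each of the four $\su(2^{n-2})$ summands inherits a standard $\so$-type involution, producing $\so(2^{n-2})^{\oplus 4}$. When $l,m$ are both odd, $Q$ anticommutes with both $P_X$ and $P_Y$, so $\bar Q$ swaps the pairs $V_{+,+}\leftrightarrow V_{-,-}$ and $V_{+,-}\leftrightarrow V_{-,+}$. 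Writing $\bar Q$ on one such paired block as $\bigl(\begin{smallmatrix}0&B\\A&0\end{smallmatrix}\bigr)$ and combining Hermiticity ($B=A^\dagger$), $\bar Q^T=-\bar Q$ ($A^T=-A^\dagger$), and $\bar Q^2=I$ ($A^\dagger A=I$), the fixed-point equation $\bar\theta(g)=g$ collapses each pair to $\{(g_{++},Ag_{++}^TA^T):g_{++}\in\su(2^{n-2})\}$, a diagonal copy of $\su(2^{n-2})$. Summing over the two paired blocks yields $\su(2^{n-2})^{\oplus 2}$.

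The main technical obstacle will be this final sub-case: a direct bracket computation is needed to certify that the diagonal map $g_{++}\mapsto(g_{++},Ag_{++}^TA^T)$ is a Lie algebra isomorphism onto the fixed-point subalgebra, ensuring that each pair contributes a genuine $\su(2^{n-2})$ rather than an $\so$ or $\sp$ variant. The $n$ odd cases reduce cleanly once the factorization $\bar Q=A\otimes M$ is justified, and the $l,m$ both even sub-case amounts to four independent applications of the construction used in Lemma~\ref{lem:upb1}, restricted to the individual blocks.
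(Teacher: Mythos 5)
Your overall strategy is essentially the paper's: identify $\aa_4^{K_n}$ with the commutant of $P_X,P_Y,P_Z$ in $\su(2^n)$ (modulo the central span when $n$ is even), track how the involution interacts with the joint eigenspaces through the signs $QP_X=(-1)^lP_XQ$, $QP_Y=(-1)^mP_YQ$, $QP_Z=(-1)^nP_ZQ$, and read off the type of the fixed-point algebra from the symmetry of the transformed $\bar Q$ via \cite[Corollary A.3]{wiersema2023classification}. The only real difference is that the paper chooses explicit Clifford unitaries and computes $\bar Q=UQU^T$ to be, up to phase, a concrete Pauli string, whereas you work with an abstract intertwiner and structural constraints; your case split and all three answers are correct. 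Also, the ``main technical obstacle'' you flag in the $l,m$ odd sub-case is not one: an involutive automorphism that swaps two simple ideals of $\su(2^{n-2})\oplus\su(2^{n-2})$ always has fixed points equal to the graph of an automorphism of one ideal, hence a full diagonal copy of $\su(2^{n-2})$, with no further bracket computation needed.

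Two of your intermediate assertions do need repair, though both are repairable. First, since $\bar Q=UQU^T$ involves $U^T$ rather than $U^\dagger$, the (anti)commutation of $Q$ with $P_X,P_Y,P_Z$ does not transfer verbatim to (anti)commutation of $\bar Q$ with their images; you must use that these Pauli strings are transpose-eigenvectors. For instance, from $\theta_{l,m}(iP_X)=-(-1)^l\,iP_X$ (using $P_X^T=P_X$) and $X_n^T=X_n$ one gets $\bar Q X_n\bar Q^\dagger=(-1)^l X_n$, and similarly for $P_Y,P_Z$ with the extra transpose signs cancelling; only then does the Pauli expansion on the last qubit force $\bar Q=A\otimes M$ (and, in the even case, block-diagonality of $\bar Q$). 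Second, the claims $\bar Q^\dagger=\bar Q$ and $\bar Q^2=I$ are not automatic for an arbitrary intertwining $U$ (they hold for the paper's explicit choice, where $\bar Q$ is a Pauli string), so your conclusion that $A$ is \emph{real} symmetric is unjustified; fortunately it is also unnecessary. The only invariant needed for \cite[Corollary A.3]{wiersema2023classification} is $\bar Q^T=(-1)^l\bar Q$, which does transfer from $Q^T=(-1)^lQ$, and which (combined with $M^T=\pm M$) makes $A$ a symmetric unitary in the $n$ odd case and makes each diagonal block $\bar Q_{s,t}$ a symmetric unitary when $l,m$ are even, yielding $\so(2^{n-1})$ and $\so(2^{n-2})^{\oplus 4}$ exactly as claimed.
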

\begin{proof}
We will continue to use the same notation as in the proof of Lemma \ref{lem:upb1}.
By Theorem C.1 and Lemma C.39 in \cite{wiersema2023classification}, we have for odd $n$:
\begin{align*}
\aa^{K_{n}}_4 = \aa_4^\pi(n) = \su(2^{n})^{\{P_X,P_Y,P_Z\}},
\end{align*}
which is the set of all matrices from $\su(2^{n})$ that commute with
$P_X=X^{\otimes n}$, $P_Y=Y^{\otimes n}$ and $P_Z=Z^{\otimes n}$.
When $n$ is even, $P_X,P_Y,P_Z$ commute with each other, and we have to quotient by them:
\begin{align}\label{eq:Kn4}
\aa^{K_{n}}_4 = \su(2^{n})^{\{P_X,P_Y,P_Z\}} / i\,\Span_{\mathbb R} \{P_X,P_Y,P_Z\}.
\end{align}
In \cite[Lemma C.28]{wiersema2023classification}, we identified these Lie algebras as
\begin{align*}
    \aa^{K_{n}}_4 \cong \begin{cases}
            \su(2^{n-2})^{\oplus 4}, &\text{$n$ \, even,} \\
            \su(2^{n-1}), &\text{$n$ \, odd.} 
        \end{cases}
\end{align*}

\paragraph*{Case $n=l+m$ even.} In this case, the 4 copies of $\su(2^{n-2})$ correspond to the mutual eigenspaces of $P_X$ and $P_Y$. 
As in the proof of \cite[Lemma C.28]{wiersema2023classification}, we apply the isomorphism $\varphi(g) = UgU^\dagger$, with the unitary matrix
\begin{align*}
    U = e^{i\frac{\pi}{4}X_2} e^{i\frac{\pi}{4} I\otimes X\otimes Z^{\otimes n-2}} e^{i\frac{\pi}{4}Y\otimes X^{\otimes n-1}}.
\end{align*}
It has the property that
\begin{align*}
    U P_X U^\dagger &= Z_1, \\
    U P_Y U^\dagger &= (-1)^{(n+2)/2} Z_2, \\
    U P_Z U^\dagger &= - Z_1 Z_2,
\end{align*}
hence, 
\begin{align*}
\varphi(\aa^{K_{n}}_4) = \{I,Z\} \otimes \{I,Z\} \otimes \su(2^{n-2}).
\end{align*}

\paragraph*{Case $l,m$ even.} 
Now $[Q, P_X] = [Q, P_Y] = 0$; therefore, $\theta_{l,m}$ does not mix the 4 different eigenspaces and acts on each of them separately. 
Due to \cite[Lemmas A.3, A.4]{wiersema2023classification}, under the isomorphism $\varphi$, the involution $\theta_{l,m}$ transforms into
$\Bar{\theta}_{l,m}(g) = -\Bar{Q}g^T\Bar{Q}^\dagger$ where $\Bar{Q}=UQU^T$.

By symmetry, without loss of generality, we suppose that the first two vertices of $K_{l,m}$ are $v_1,v_2$.
Then $Q$ anti-commutes with $X_2$, and we find using \eqref{equab}:
\begin{align*}
    \Bar{Q} &= UQU^T = e^{i\frac{\pi}{4}X_2} e^{i\frac{\pi}{4} I\otimes X\otimes Z^{\otimes n-2}} e^{i\frac{\pi}{4}Y\otimes X^{\otimes n-1}} Q \\
    &\;\;\cdot e^{-i\frac{\pi}{4}Y\otimes X^{\otimes n-1}}
    e^{i\frac{\pi}{4} I\otimes X\otimes Z^{\otimes n-2}} 
    e^{i\frac{\pi}{4}X_2} \\
    &= e^{i\frac{\pi}{4}X_2} e^{i\frac{\pi}{4} I\otimes X\otimes Z^{\otimes n-2}} \cdot i (Y\otimes X^{\otimes n-1}) \cdot Q\\
    &\;\;\cdot e^{i\frac{\pi}{4} I\otimes X\otimes Z^{\otimes n-2}} 
    e^{i\frac{\pi}{4}X_2}\\
    &= i (Y\otimes X^{\otimes n-1}) \cdot e^{i\frac{\pi}{4}X_2} e^{i\frac{\pi}{4} I\otimes X\otimes Z^{\otimes n-2}} Q \\
    &\;\;\cdot e^{i\frac{\pi}{4} I\otimes X\otimes Z^{\otimes n-2}} 
    e^{i\frac{\pi}{4}X_2} \\
    &= i (Y\otimes X^{\otimes n-1}) \cdot e^{i\frac{\pi}{4}X_2}  Q 
    e^{i\frac{\pi}{4}X_2} \\
    &= i (Y\otimes X^{\otimes n-1}) \cdot Q \\
    &\equiv I \otimes Z^{\otimes l-1} \otimes I^{\otimes m},
\end{align*}
where the last equality is up to a phase.
We see that $\Bar{\theta}_{l,m}$ restricts on each $\su(2^{n-2})$ summand as the involution
$g\mapsto -\tilde{Q}g^T\tilde{Q}$ where $\tilde{Q}=Z^{\otimes l-2} \otimes I^{\otimes m}$.
Its fixed points are $\so(2^{n-2})$, by \cite[Corollary A.3]{wiersema2023classification}, since $\tilde{Q}^T = \tilde{Q}$.

\paragraph*{Case $l,m$ odd.} In this case, $Q$ does not commute with $P_X$ and $P_Y$, and it commutes with $P_Z \equiv P_X P_Y$. 
Thus, $\theta_{l,m}$ will not mix the eigenspaces of $P_Z$, but it will mix the eigenspaces of $P_X$, as in the odd $n$ case for $\aa_{14}^{K_{l,m}}$ (see Eq.\ \eqref{eq:a_14_odd}). The same reasoning yields
\begin{align*}
    \left(\aa^{K_{n}}_4\right)^{\theta_{l,m}} \cong \su(2^{n-2})^{\oplus 2},
\end{align*}
where each $\su(2^{n-2})$ summand corresponds to an eigenspace of $P_Z$ that is invariant under $\theta_{l,m}$.

\paragraph*{Case $n =l+ m$ odd.} In this case, $P_X$ and $P_Y$ do not commute. 
As in the proof of \cite[Lemma C.28]{wiersema2023classification}, we apply the isomorphism $\varphi(g) = UgU^\dagger$, with
\begin{equation*}
U = e^{i\frac{\pi}{4}Z\otimes Y^{\otimes n-1}} e^{i\frac{\pi}{4}Y\otimes X^{\otimes n-1}},
\end{equation*}
which satisfies
\begin{align*}
U P_X U^\dagger = Z_1, \quad U P_Y U^\dagger = X_1. 
\end{align*}
Hence,
\begin{equation*}
\varphi(\aa_4^{K_N}) = \su(2^n)^{\{X_1,Z_1\}} = I \otimes \su(2^{n-1}) \cong \su(2^{n-1}).
\end{equation*}

By symmetry, without loss of generality, we can assume that $l$ is even and the first two vertices of $K_{l,m}$ are $v_1,v_2$.
Then $Q = Y^{\otimes l} \otimes X^{\otimes m}$ anti-commutes with $Y\otimes X^{\otimes n-1}$ and commutes with $Z\otimes Y^{\otimes n-1}$.
We find using \eqref{equab}:
\begin{align*}
    \Bar{Q} &= UQU^T = e^{i\frac{\pi}{4}Z\otimes Y^{\otimes n-1}} e^{i\frac{\pi}{4}Y\otimes X^{\otimes n-1}} Q \\
    &\;\;\cdot e^{-i\frac{\pi}{4}Y\otimes X^{\otimes n-1}} e^{i\frac{\pi}{4}Z\otimes Y^{\otimes n-1}} \\
    &= e^{i\frac{\pi}{4}Z\otimes Y^{\otimes n-1}} \cdot i (Y\otimes X^{\otimes n-1}) \cdot Q e^{i\frac{\pi}{4}Z\otimes Y^{\otimes n-1}} \\
    &= i (Y\otimes X^{\otimes n-1}) \cdot e^{-i\frac{\pi}{4}Z\otimes Y^{\otimes n-1}} Q e^{i\frac{\pi}{4}Z\otimes Y^{\otimes n-1}} \\
    &= i (Y\otimes X^{\otimes n-1}) \cdot Q \\
    &\equiv I \otimes Z^{\otimes l-1} \otimes I^{\otimes m}.
\end{align*}
We see that $\Bar{\theta}_{l,m}$ restricts on $\su(2^{n-1})$ as the involution
$g\mapsto -\tilde{Q}g^T\tilde{Q}$ where $\tilde{Q}=Z^{\otimes l-1} \otimes I^{\otimes m}$.
Its fixed points are $\so(2^{n-1})$, by \cite[Corollary A.3]{wiersema2023classification}.

\end{proof}

\subsection{Tightness of the Upper Bounds}
Here we continue to use the notation of Section \ref{sec:upb}.

\begin{lemma}\label{lem:upbt}
    The upper bounds in \eqref{eq:upperbounds} are tight, i.e.,
    \begin{align}
        \aa_k^{K_{l,m}} &= \left( \aa_k^{K_{l+m}} \right)^{\theta_{l,m}}, \qquad k=4,14.
    \end{align}
\end{lemma}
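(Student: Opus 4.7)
The upper bound inclusion is already settled by Lemma~\ref{thm:upperbounds}, so the task is the reverse inclusion
$\left(\aa_k^{K_n}\right)^{\theta_{l,m}} \subseteq \aa_k^{K_{l,m}}$ for $k=4,14$. My plan is to reduce the problem to a statement about Pauli strings: both sides admit a basis over $\mathbb{R}$ of the form $\{iP\}$ with $P$ a Pauli string, and on the right-hand side these are exactly the Pauli strings in $\aa_k^{K_n}$ for which the signs $\epsilon_P,\sigma_P\in\{\pm 1\}$ defined by $P^T=\epsilon_P P$ and $QPQ=\sigma_P P$ satisfy $\epsilon_P\sigma_P = -1$. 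It therefore suffices to generate every such fixed Pauli string as a nested commutator of the edge-Hamiltonians of $K_{l,m}$.

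The approach I would take is induction on $n=l+m$, with base cases supplied by small bipartite graphs such as $K_{1,m}$, $K_{2,2}$ and $K_{2,3}$. These are either checked by direct computation, or handled by Lemma~\ref{lem:deltak23} together with the classification on cycle and line graphs from \cite{wiersema2023classification}. For the inductive step, I would remove a vertex $v$ from one of the colors to obtain $K_{l-1,m}$, invoke the induction hypothesis to realize $\aa_k^{K_{l-1,m}}$ as the corresponding fixed-point subalgebra, and then reintroduce $v$ together with its edges to all $m$ vertices of the opposite color. The new generators $X_vX_{u_j},\, Y_vY_{u_j}$ (plus $Z_v$ for $k=14$), commuted with the Pauli strings already present on the remaining vertices, should produce precisely the Pauli strings involving $v$ that satisfy the fixed-point condition $\epsilon_P\sigma_P=-1$.

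An attractive alternative to the explicit-commutator bookkeeping is a dimension count. The upper bound is already known to be semisimple of explicit isomorphism type ($\sp(2^{n-2})^{\oplus 2}$, $\so(2^{n-1})^{\oplus 2}$, $\su(2^{n-1})$ for $k=14$; $\su(2^{n-2})^{\oplus 2}$, $\so(2^{n-2})^{\oplus 4}$, $\so(2^{n-1})$ for $k=4$), and it decomposes via the mutual eigenspaces of the central elements $P_Z$ (for $k=14$) or $P_X,P_Y,P_Z$ (for $k=4$). Since $\aa_k^{K_{l,m}}$ is a Lie subalgebra of this semisimple algebra and contains the chosen generators, it is enough to show that the number of Pauli strings in $\aa_k^{K_{l,m}}$ equals the number of fixed Pauli strings in $\aa_k^{K_n}$, in each parity case. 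Semisimplicity of the upper bound together with matching dimensions then forces equality.

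The main obstacle I anticipate is the case analysis by the parities of $l$ and $m$: when the upper bound splits as two or four simple summands, one must verify that $\aa_k^{K_{l,m}}$ intersects each summand nontrivially, not merely that its total dimension is correct. The most delicate case is $k=4$ with $l,m$ both even, where four $\so(2^{n-2})$ summands correspond to the joint eigenspaces of $P_X$ and $P_Y$, and one must exhibit generators projecting onto each of the four eigencomponents. Here the computations in the proof of Lemma~\ref{lem:upb2} — in particular the explicit form of $\bar{Q}$ after conjugation by the Clifford element $U$ — should dictate which elementary commutators of the $X_vX_u, Y_vY_u$ land in which summand, and the rest is verifying that the resulting collection generates a full copy of $\so(2^{n-2})$ in each.
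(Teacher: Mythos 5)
Your overall skeleton coincides with the paper's: reduce to Pauli strings (both sides are spanned by them), characterize the fixed strings, and induct on $n=l+m$, with strings containing an identity factor handled by the induction hypothesis on $K_{l-1,m}$ or $K_{l,m-1}$. But the heart of the lemma is exactly the step you leave as ``the new generators \dots should produce precisely the Pauli strings involving $v$ that satisfy the fixed-point condition,'' and you give no mechanism for it. The paper does \emph{not} try to generate such strings upward by commuting the new edge terms into the smaller algebra; it argues in the opposite direction: given a fixed Pauli string $\sigma$ with no identity factors, it exhibits $\gamma_1,\dots,\gamma_r \in \aa_k^{K_{l,m}}$ (e.g.\ $X_1X_n$, $X_lX_{l+1}$, $Z_2$) such that $\ad_{\gamma_1}\cdots\ad_{\gamma_r}(\sigma)$ acquires an $I$ factor, and then invokes \cite[Lemma C.16]{wiersema2023classification} --- essentially that $\ad_\gamma$ is, up to scalar, invertible on anticommuting Pauli strings --- to conclude $\sigma$ itself lies in $\aa_k^{K_{l,m}}$. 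Without this ``reduce to an $I$ factor and pull back'' device (or an equally concrete generation argument), your inductive step is an assertion of the lemma rather than a proof of it. You also do not address the exceptional strings: one must rule out $\sigma=P_Z$ for $k=14$ and $\sigma=P_X,P_Y,P_Z$ for $k=4$ using the explicit descriptions \eqref{eq:Kn14} and \eqref{eq:Kn4} (these strings are killed by the quotient), and for $k=14$ the string $P_X$ needs a separate explicit commutator.

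Your proposed alternative via dimension counting is circular as stated: the number of Pauli strings in $\aa_k^{K_{l,m}}$ \emph{is} the unknown quantity, and you offer no independent way to compute it; all that Lemmas \ref{lem:upb1} and \ref{lem:upb2} give you is the dimension of the upper bound. (Incidentally, once the dimensions of a subalgebra and its ambient algebra agree, equality follows by linear algebra alone, so the worry about hitting each simple summand, and semisimplicity itself, plays no role in that step.) So the dimension-count route does not bypass the combinatorial work; some version of the string-by-string argument is still required.
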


\begin{proof}
For convenience, let us denote the Lie algebra $(\aa_{k}^{K_{l+m}})^{\theta_{l,m}}$ by $\g^{l,m}_k$.
We have to prove that $\g^{l,m}_k \subseteq \aa_k^{K_{l,m}}$.
Notice that, as in \cite[Corollary C.1]{wiersema2023classification}, both of these Lie algebras are linearly spanned by Pauli strings.
Hence, it will be enough to show that every Pauli string $\sigma \in \g^{l,m}_k$ lies in $\aa_k^{K_{l,m}}$.

We prove this statement by induction on $n=l+m$. 
The base $n=2$ holds because $K_{1,1} = K_2$, and hence 
$\g^{1,1}_k \subseteq \aa_k^{K_2} = \aa_k^{K_{1,1}}$.
Suppose, by induction, that every Pauli string from $\g^{l',m'}_k$ lies in $\aa_k^{K_{l',m'}}$ whenever $l'+m'<n=l+m$. 
Now consider an arbitrary Pauli string $\sigma \in \g^{l,m}_k$.
If any of the factors in $\sigma$ is $I$, then by symmetry we can assume that
either $\sigma \in I \otimes \g^{l-1,m}_k$ or $\sigma \in \g^{l,m-1}_k \otimes I$.
By the inductive assumption, we get
\begin{align*}
I \otimes \g^{l-1,m}_k \subseteq I \otimes \aa_k^{K_{l-1,m}} \subseteq \aa_k^{K_{l,m}},
\end{align*}
and the second case is similar.

When $\sigma$ does not have any factors $I$, our strategy will be to find Pauli strings $\gamma_1,\dots,\gamma_r \in \aa_k^{K_{l,m}}$ such that
\begin{align*}
\ad_{\gamma_1} \cdots \ad_{\gamma_r}(\sigma)
\end{align*}
has an $I$ factor. Then \cite[Lemma C.16]{wiersema2023classification} will ensure that $\sigma\in \aa_k^{K_{l,m}}$.
Now let us consider the cases $k = 14$ and $k=4$ separately.

\paragraph*{Case $k = 14$.} 
If any of the factors of $\sigma$ is a $Y$, we can transform it into an $X$, since all $Z_i$ are in the generating set of $\aa_{14}^{K_{l,m}}$. 
Thus, by symmetry, we can assume that
\begin{align*}
\sigma = X^{\otimes a} \otimes Z^{\otimes l-a} \otimes X^{\otimes m-b} \otimes Z^{\otimes b},
\end{align*}
for some $0\le a\le l$ and $0\le b\le m$.
If $a>0$ and $b > 0$, then $\ad_{\gamma} \sigma$ will have $I$ in the first factor for $\gamma=X_1 X_n$.
Similarly, if $a < l$ and $b < m$, we can take $\gamma=X_l X_{l+1}$.

The only cases left are $\sigma = X^{\otimes n} = P_X$ or $\sigma = Z^{\otimes n} = P_Z$. The second is impossible, because
$P_Z \notin \g^{l,m}_{14}$ by Eq.\ \eqref{eq:Kn14}. When $\sigma = P_X$, we observe that
\begin{align*}
\ad_{X_1 X_2} \ad_{Z_2}(P_X) = -I\otimes Z \otimes X^{\otimes n-2}.
\end{align*}
%$\gamma_1=X_1 X_2$, $\gamma_2=Z_2$
This proves the lemma for $k = 14$.

\paragraph*{Case $k = 4$.} 
If $\sigma$ has a $Y$ in the first $l$ factors and an $X$ or $Z$ in the last $m$ factors, by symmetry we can assume that these are $Y_1$ and $X_n$ or $Z_n$. Then $\ad_{Y_1 Y_n} \sigma$ has $I$ in the first factor. The case where $\sigma$ has a $Y$ in the last $m$ factors is treated similarly.
Therefore, if $\sigma$ has any $Y$ factors, we are done unless $\sigma=P_Y$. However, $P_Y \notin \g^{l,m}_{4}$ by Eq.\ \eqref{eq:Kn4}.

Similarly, if $\sigma$ has an $X$ in the first $l$ factors and a $Z$ in the last $m$ factors, by symmetry we can assume that these are $X_1$ and $Z_n$.
Then $\ad_{X_1 X_n} \sigma$ has $I$ in the first factor. The only cases left are $\sigma=P_X$ or $\sigma=P_Z$, but these are again impossible by 
Eq.\ \eqref{eq:Kn4}.
\end{proof}

Combining Lemmas \ref{lem:upb1}, \ref{lem:upb2} and \ref{lem:upbt} gives us the answer for the DLAs $\aa^{K_{l,m}}_4$ and $\aa^{K_{l,m}}_{14}$,
thus concluding the proof of \Cref{the:classification}.

\section{Conclusion}

In conclusion, we have extended the classification of dynamical Lie algebras generated by $2$-local spin interaction Hamiltonians from one-dimensional spin chains \cite{wiersema2023classification} to the more general context of undirected graphs. Our analysis reveals that the one-dimensional case is unique, since for all other graphs, the structure of the dynamical Lie algebra is determined by whether the graph is bipartite or non-bipartite. We find that the cases where the dynamical Lie algebra has a polynomial size are exceptional and limited to one-dimensional systems. Given the consequences of our results, we have to perhaps review how much the dynamical Lie algebra tells us about trainablity, since a direct consequence of our work is that any quantum circuit consisting of $2$-local gates (except the one dimensional transverse-field Ising model) is not trainable as a variational quantum circuit due to barren plateau issues~\cite{ragone2023,fontana2023theadjoint}. 

We believe that our results, together with the work of \cite{aguilar2024full} provide two complementary techniques for analyzing dynamical Lie algebras generated by Pauli strings. An important future direction would be to extend these results to linear combinations of Paulis strings, since this will enable one to connect DLAs more directly with physical models. For example, right now we can only determine the DLA on a graph $G$ of a generating set of the form $\{XX,YY,ZZ\}$. However, this does not tell us about the DLA generated by $\{XX + YY + ZZ\}$ or $\{XX+YY, ZZ\}$. 
Choosing the generators as linear combinations rather than single Pauli strings may reduce the dimension of the dynamical Lie algebra significantly if the linear combinations are invariant under a symmetry. This is studied for permutation invariant circuits \cite{mansky2023permutation, mansky2024scaling}, and translationally invariant quantum approximate optimization algorithm (QAOA) circuits, which correspond to the generators of $\aa_{14}$ \cite{dalessandro2021dynamical, allcock2024dynamical}.

\section*{Acknowledgements}

AFK acknowledges financial support from the National Science Foundation under award No. 1818914: PFCQC: STAQ: Software-Tailored Architecture for Quantum co-design and No. 2325080:
PIF: Software-Tailored Architecture for Quantum Co-Design.
BNB was supported in part by a Simons Foundation grant No. 584741.
Resources used in preparing this research were provided, in part, by the Province of Ontario, the Government of Canada through CIFAR, and companies sponsoring the Vector Institute \url{https://vectorinstitute.ai/#partners}.

\section*{Author Contributions}
Theoretical results were derived by EK and BNB, while computer calculations were performed by RW.
The manuscript was written mostly by EK and BNB. The figures were made by RW. All authors contributed to reviewing and editing the manuscript. 

\bibliography{literature}

\renewcommand{\appendixtocname}{Supplemental Material}
\onecolumngrid

\renewcommand\thefigure{S\arabic{figure}}  
\renewcommand\thetable{S.\Roman{table}}  
\setcounter{figure}{0}
\setcounter{table}{0}

\renewcommand{\thesection}{\Alph{section}}
\renewcommand{\thesubsection}{\Roman{subsection}}

\renewcommand{\theequation}{\Alph{section}\arabic{equation}}
\counterwithin*{equation}{section}
\setcounter{section}{0}
\pagebreak
\begin{center}
\textbf{\large Supplemental Material}
\end{center}
%%%%%%%%%% Merge with supplemental materials %%%%%%%%%%
%%%%%%%%%% Prefix a "S" to all equations, figures, tables and reset the counter %%%%%%%%%%
\setcounter{equation}{0}
\setcounter{page}{1}
\makeatletter
\section{Pauli strings}\label{secpauli1}

Throughout the paper, we work with the \emph{Pauli matrices}
\begin{equation*}
\sigma_0=I=\begin{pmatrix} 1 & 0 \\ 0 & 1 \end{pmatrix}, \qquad
\sigma_1=X=\begin{pmatrix} 0 & 1 \\ 1 & 0 \end{pmatrix}, \qquad
\sigma_2=Y=\begin{pmatrix} 0 & -i \\ i & 0 \end{pmatrix}, \qquad
\sigma_3=Z=\begin{pmatrix} 1 & 0 \\ 0 & -1 \end{pmatrix},
\end{equation*}
%where $i=\sqrt{-1}$ is the imaginary unit.
including the identity matrix $I$,
% \begin{equation*}
% I=\begin{pmatrix} 1 & 0 \\ 0 & 1 \end{pmatrix},
% \end{equation*}
which form a basis for the real vector space of $2\times 2$ Hermitian matrices.
We will denote by $A^T$ the transpose of a matrix, and by $A^\dagger$ its Hermitian conjugate
(which is obtained from $A^T$ by taking complex conjugates of all entries).
Thus, $A^\dagger = A$ for all $A\in\mcP_1 := \{I,X,Y,Z\}$. On the other hand, we have
\begin{equation*}
Y^T=-Y, \qquad A^T=A \quad\text{for}\quad A=I,X,Z.
\end{equation*}

%Fix a positive number $n$. 
Length-$n$ \emph{Pauli strings} are tensor products of $n$ Pauli matrices of the form
\begin{equation}\label{pauli2}
a = A^1 \otimes A^2 \otimes \dots \otimes A^n, \qquad A^j \in\mcP_1
\end{equation}
(where the superscripts are indices not powers).
We denote the set of all such Pauli strings by $\mcP_n := \{I,X,Y,Z\}^{\otimes n}$.
Every $a\in\mcP_n$ is a linear operator on the Hilbert space $(\mathbb{C}^2)^{\otimes n}$ of $n$ qubits,
so $a$ can be represented as a matrix of size $2^n \times 2^n$ (by the Kronecker product).
In particular, $I^{\otimes n}$ is the $2^n \times 2^n$ identity matrix.
The Hermitian conjugate and transpose of a Pauli string are done componentwise:
\begin{align*}
a^\dagger &= (A^1)^\dagger \otimes (A^2)^\dagger \otimes \dots \otimes (A^n)^\dagger = a, \\
a^T &= (A^1)^T \otimes (A^2)^T \otimes \dots \otimes (A^n)^T = (-1)^{\# \{ A^j=Y \}} a.
\end{align*}
All Pauli strings are Hermitian, and $\mcP_n$ is a basis (over $\mathbb R$) of the vector space
of $2^n \times 2^n$ Hermitian matrices.

To shorten the notation, we will often omit the tensor product signs in Pauli strings, so \eqref{pauli2}
will be written as $a = A^1 A^2 \cdots A^n$. For example, we will write
\begin{equation*}
XX = X \otimes X, \qquad XY = X \otimes Y, \qquad Z \cdots Z = Z^{\otimes n}, \quad\text{etc.}
\end{equation*}
For $A\in\mcP_1$ and $1\le j\le n$, we will denote by
\begin{equation*}%\label{pauli3}
A_j := I^{\otimes (j-1)} \otimes A \otimes I^{\otimes (n-j)}
\end{equation*}
the linear operator $A$ acting on the $j$-th qubit. For example, for $n=3$,
\begin{equation*}
X_1 = XII = X \otimes I \otimes I, \qquad Z_2 = IZI = I \otimes Z \otimes I, \qquad X_1Z_2Y_3 = XZY = X \otimes Z \otimes Y, \quad\text{etc.}
\end{equation*}
With this notation, we distinguish
\begin{equation*}
A_1 A_2 \cdots A_n = AA \cdots A = A \otimes A \otimes\cdots\otimes A = A^{\otimes n}
\end{equation*}
from \eqref{pauli2}, where in the latter the tensor factors $A^1,\dots,A^n$ are allowed to be different.

When there is a danger to confuse the tensor product and the matrix product, we will use $\cdot$ for the product of matrices. We have:
\begin{equation}\label{eq:multiplication_table}
X \cdot Y = i Z = -Y \cdot X, \qquad
Y \cdot Z = i X = -Z \cdot Y, \qquad
Z \cdot X = i Y = -X \cdot Z,
\end{equation}
and each Pauli matrix squares to the identity:
\begin{equation*}
X \cdot X = Y \cdot Y = Z \cdot Z = I.
\end{equation*}
The matrix product of Pauli strings is done componentwise:
\begin{equation}\label{eq:componenwise}
(A^1 \otimes \dots \otimes A^n) \cdot (B^1 \otimes \dots \otimes B^n)
= (A^1 \cdot B^1) \otimes \dots \otimes (A^n \cdot B^n).
\end{equation}
From here, it is easy to deduce the following important property of Pauli strings:
\begin{equation}\label{lemp1}
a \cdot a = I^{\otimes n}, \quad a \cdot b = \pm b \cdot a, \qquad a,b\in \mcP_n.
\end{equation}
%For any $a,b\in \mcP_n$, we have $a \cdot a = I^{\otimes n}$ and $a \cdot b = \pm b \cdot a$.
Hence, any two Pauli strings either commute or anti-commute, which leads to
\begin{align}\label{eq:2ab}
[a,b] = 2 a\cdot b, \qquad\text{if }\; a,b\in \mcP_n, \;\; [a,b]\ne 0.
\end{align}
Another useful identity, which follows from Euler's formula, is
\begin{equation}\label{equab}
e^{i\frac{\pi}{4} a} \, b \, e^{-i \frac{\pi}{4} a} 
= e^{i\frac{\pi}{2} a} \, b = i a \cdot b,
\qquad\text{if }\; a,b\in \mcP_n, \;\; [a,b]\ne 0.
\end{equation}

\section{Dynamical Lie algebras of one-dimensional spin chains}\label{sec:list_dla}

In~\cite{wiersema2023classification}, we classified all dynamical Lie algebras (DLAs) of 1-dimensional 2-local spin chains. At the core of this classification was the identification of all unique generators on 2 spins and their Lie algebras, which we copy here for reference (see \Cref{tab:algebras_with_names}).

\begin{table}[ht]
\begin{center}
\begin{tabular}{|c|l|l|} 
\hline
\textbf{Label}   &   \textbf{Generating Set}  & \textbf{Basis} 
\\	\hline
$\aa_0$       &	$XX$	& $XX$ 
\\	\hline
$\aa_1$       &	$XY$	& $XY$
\\	\hline
$\aa_2$       &	$XY, YX$	& $XY, YX$
\\ \hline
$\aa_3$       &	$XX, YZ$	& $XX, YZ$
\\	\hline
$\aa_4$       &	$XX, YY$	& $XX, YY$
\\	 \hline
$\aa_5$       &	$XY, YZ$	& $XY, YZ$
\\	\hline
$\aa_6$   	&	$XX, YZ, ZY$	& $XX, YZ, ZY$
\\	\hline
$\aa_7$   	&	$XX, YY, ZZ$	& $XX, YY, ZZ$
\\	\hline
$\aa_8$   	&	$XX, XZ$	& $XX, XZ, IY$
\\	\hline
$\aa_9$   	&	$XY, XZ$	& $XY, XZ, IX$
\\	\hline
$\aa_{10}$    &	$XY, YZ, ZX$	& $XY, YZ, ZX$
\\	\hline
$\aa_{11}$	&	$XY, YX, YZ$	& $XY, YX, YZ, IY$
\\	\hline
$\aa_{12}$	&	$XX, XY, YZ$	& $XX, XY, YZ, IZ$
\\	\hline
$\aa_{13}$	&	$XX, YY, YZ$	& $XX, YY, YZ, IX$
\\	\hline
$\aa_{14}$	&	$XX, XY, YX$	& $XX, YY, XY, YX, ZI, IZ$
\\	\hline
$\aa_{15}$	&	$XX, XY, XZ$	& $XX, XY, XZ, IX, IY, IZ$
\\	\hline
$\aa_{16}$	&	$XY, YX, YZ, ZY$	& $XY, YX, YZ, ZY, YI, IY$
\\	\hline
$\aa_{17}$	&	$XX, XY, ZX$	& $XX, XY, ZX, ZY, YI, IZ$
\\	\hline
$\aa_{18}$	&	$XX, XZ, YY, ZY$	& $XX, YY, XZ, ZY, XI, IY$
\\	\hline
$\aa_{19}$	&	$XX, XY, ZX, YZ$ & $XX, XY, ZX, ZY, YZ, YI, IZ$
\\	\hline
$\aa_{20}$	&	$XX, YY, YZ, ZY$	& $XX, YY, ZZ, YZ, ZY, XI, IX$
 \\	\hline
$\aa_{21}$	&	$XX, YY, XY, ZX$	& $XX, YY, XY, YX, ZX, ZY, XI, YI, ZI, IZ$
\\	\hline
%$\aa_{22}$	&	$XX, XY, XZ, YX, ZX$ & $\su(4)$\\ \hline
$\bb_0$	    &	$XI, IX$	& $XI, IX$ 
\\	\hline
$\bb_1$   	&	$XX, XI, IX$	& $XX, XI, IX$
\\	\hline
$\bb_2$   	&	$XY, XI, IX$	&  $XY, XZ, XI, IX$
\\	\hline
$\bb_3$   	&	$XI, YI, IX, IY$	& $XI, YI, ZI, IX, IY, IZ$
\\	\hline
$\bb_4$   	&	$XX, XY, XI,IX$ &	$XX, XY, XZ, XI, IX, IY, IZ$
\\ \hline
\end{tabular}
\end{center}
\caption{List of unique proper Lie subalgebras of $\su(4)$ generated by 2-local Pauli strings. Reproduced from Ref.~\cite{wiersema2023classification}.}
\label{tab:algebras_with_names}
\end{table}

For convenience, we also reproduce here the classification of dynamical Lie algebras on complete graphs from \cite[Theorem IV.3]{wiersema2023classification},
where $n\ge3$:
\allowdisplaybreaks

\begin{align*}
\aa_0^\pi(n) &\cong \uu(1)^{\oplus n(n-1)/2}, \\
\aa_2^\pi(n) &\cong \so(2^{n-1})^{\oplus2}, \\
\aa_4^\pi(n) &= \aa_7^\pi(n) \cong \begin{cases} 
\su(2^{n-1}), & n \;\;\mathrm{odd}, \\
\su(2^{n-2})^{\oplus 4}, & n \;\;\mathrm{even},
\end{cases} \\
\aa_6^\pi(n) &= \aa_{20}^\pi(n) \cong \aa_{14}^\pi(n) \cong \su(2^{n-1})^{\oplus2}, \\
\aa_{16}^\pi(n) &= \so(2^n), \\
\aa_{22}^\pi(n) &= \su(2^n).
\end{align*}

\clearpage
\section{Frustration graphs}
\label{sec:frustr}

In this section, we introduce the concept of a frustration graph of a set of Pauli strings $\mcA$, and illustrate its use for producing elements of the dynamical Lie algebra $\Lie{\mcA}$ generated by $\mcA$. 

\subsection{Colored Frustration Graphs and Operations on Them}

In the main text, we introduced the notion of equivalence for interaction graphs, which means that if we place the generators of the DLA on the edges of two graphs, we would obtain the same Lie algebra. We were concerned with the equivalence of an interaction graph with another graph obtained by adding more edges to it. Hence, the DLA of the first interaction graph is a subalgebra of the DLA of the second. To prove that the two DLAs are equal, we had to show that every generator of the second DLA, corresponding to new edges of the second interaction graph, is also contained in the first DLA. To this end, we had to check that particular elements lie in that DLA.

For example, in Lemma \ref{lem:deltak23}, we had to check that $X_1 Y_4$ (corresponding to the new edge $(1,4)$ in the interaction graph) can be obtained from the generators of the DLA. We verified this by presenting an explicit expression \eqref{eq:X1Y4}:
\begin{equation}\label{eq:X1Y4-app}
    X_1 Y_4 \equiv \ad_{Y_1 X_2} \ad_{X_3Y_2} \ad_{X_5Y_2} \ad_{Y_1 X_2}\ad_{X_2 Y_3} \ad_{X_5Y_2} \ad_{X_3 Y_4} \ad_{X_1 Y_2} (X_2 Y_3).
\end{equation}
Here and below, we use the notation $\equiv$ to indicate that the two sides are equal up to a non-zero scalar multiple.
Instead of checking Eq.\ \eqref{eq:X1Y4-app} by a tedious calculation,
in this appendix we develop a graphical calculus for easily manipulating commutators of Pauli strings.
It is based on the notion of frustration graph, originally introduced in \cite{chapman2020characterization} and already successfully employed for studying DLAs in \cite{wiersema2023classification, aguilar2024full}.

\begin{definition}[Frustration graph]
    The \emph{frustration graph} of a set of $n$-qubit Pauli strings $\mcA = \{a_1, \dots, a_M\} \subseteq \mcP_n$ is a graph with $\mcA$ as the set of vertices
    and edges $(a_i, a_j)$ for all pairs $i,j$ such that $[a_i, a_j] \neq 0$.
    We denote the frustration graph of $\mcA$ as $\Gamma(\mcA)$.
\end{definition}

For example, for the Pauli matrices $X,Y,Z$,
we have the frustration graph
\begin{align*}
\Gamma(\{X,Y,Z\}) =\:
\begin{tikzpicture}[baseline=0mm, scale=1.0] 
\begin{scope}[every node/.style={circle,thick,draw,scale=0.5, color=fgraph}]
    \node (A) at (0,-0.35) {};
    \node (B) at (1,-0.35) {};
    \node (C) at (0.5,0.35) {};
\end{scope}
\node[below] at (A.south) {$X$};
\node[below] at (B.south) {$Y$};
\node[above] at (C.north) {$Z$};
\begin{scope}[every edge/.style={draw=black,very thick, color=fgraph}]
    \path [-] (A) edge (B);
    \path [-] (A) edge (C);
    \path [-] (C) edge (B);
\end{scope}
\end{tikzpicture}.
\end{align*}
Each edge is connected since $X$, $Y$ and $Z$ do not commute with each other.
As another example, consider the generators $\{X_1 X_2, X_1 Y_2, Y_1 X_2\}$ of $\aa_{14}^{L_2}$ for the interaction graph $L_2$ (the line graph with 2 vertices). The frustration graph is then given by
\begin{align*}
\Gamma(\aa_{14}^{L_2}) = 
\begin{tikzpicture}[scale=1.0]
\begin{scope}[every node/.style={circle,thick,draw,scale=0.5, color=fgraph}]
    \node (A) at (0,0) {};
    \node (B) at (1,0) {};
    \node (C) at (2,0) {};
\end{scope}
\node[above] at (A.north) {$Y_1 X_2 $};
\node[above] at (B.north) {$X_1 X_2$};
\node[above] at (C.north) {$X_1 Y_2$};
\begin{scope}[every edge/.style={draw=black,very thick, color=fgraph}]
    \path [-] (A) edge (B);
    \path [-] (B) edge (C);
\end{scope}
\end{tikzpicture}.
\end{align*}
In the following, whenever we refer to the frustration graph of one of the DLAs $\aa_k^G$, we will always use the generators given in Table \ref{tab:algebras_with_names}, with these generators applied on every edge of the interaction graph $G$, as explained in Sec.~\ref{sec:preliminaries}.  

When computing nested commutators like in Eq.\ \eqref{eq:X1Y4-app}, we want each commutator to be non-zero, i.e., to correspond to an edge in the frustration graph. 
In this case, Eq.\ \eqref{eq:2ab} allows us to express the commutator as the product (up to a non-zero scalar multiple): $[a,b] \equiv a \cdot b$ whenever $[a,b] \ne 0$.

\begin{lemma}\label{lem:generator_product}
    Given a set of Pauli strings $\mcA = \{a_1,\dots,a_M\}$, every Pauli string in the dynamical Lie algebra $\Lie{\mcA}$ can be written as a product $a_1^{k_1}\cdot a_2^{k_2} \cdots a_M^{k_M}$ up to a non-zero scalar factor, where $k_i \in \{0,1\}$. 
\end{lemma}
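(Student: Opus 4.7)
The plan is to combine two observations about Pauli strings: first, that every non-zero nested commutator reduces to a scalar multiple of an ordered product of generators, and second, that any such product reduces via $a_i^2 = I$ and the commute/anti-commute dichotomy to the canonical form $a_1^{k_1} \cdots a_M^{k_M}$ with $k_i \in \{0,1\}$.

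First, I would use Eq.~\eqref{eq:2ab}, which says that for Pauli strings $a,b$ with $[a,b]\neq 0$ we have $[a,b] = 2\, a\cdot b$, together with the fact (from Eqs.~\eqref{eq:multiplication_table}--\eqref{eq:componenwise}) that the matrix product of two Pauli strings is again a Pauli string up to a phase in $\{\pm 1,\pm i\}$. A straightforward induction on the nesting depth $r$ then shows that every non-zero nested commutator of the form in Eq.~\eqref{eq:adj} equals a non-zero scalar multiple of the ordered product $a_{i_1}\cdot a_{i_2}\cdots a_{i_{r+1}}$, and in particular is a scalar multiple of a single Pauli string.

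Next, since $\Lie{\mcA}$ is by definition the real span of all such nested commutators, it follows that $\Lie{\mcA}$ is spanned by a set of scalar multiples of Pauli strings. Because the set $\mcP_n$ of all Pauli strings is linearly independent (over $\mathbb{C}$), any Pauli string $\sigma\in\Lie{\mcA}$ must itself be a scalar multiple of one of those nested commutators. Thus $\sigma \equiv a_{i_1}\cdot a_{i_2}\cdots a_{i_r}$ for some sequence of indices $i_1,\ldots,i_r\in\{1,\ldots,M\}$.

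Finally, I would bring this product into canonical form. By Eq.~\eqref{lemp1}, any two Pauli strings either commute or anti-commute, so transposing two adjacent factors in the product changes it by at most a sign; hence the product is independent of the order of its factors up to a non-zero scalar. Grouping repeated factors and using $a_i\cdot a_i = I^{\otimes n}$ lets us replace each factor $a_i^{k_i}$ by $a_i^{k_i \bmod 2}$. Reordering so that indices appear in increasing order and discarding the trivial factors produces $\sigma \equiv a_1^{k_1}\cdot a_2^{k_2}\cdots a_M^{k_M}$ with $k_i\in\{0,1\}$, as required. The proof is essentially mechanical, with no serious obstacle: the only point that needs care is noting that the reduction to canonical form preserves non-vanishing, which is automatic since products of Pauli strings are always non-zero (they are unitary up to a phase).
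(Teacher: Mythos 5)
Your proposal is correct and follows essentially the same route as the paper's proof: use Eq.~\eqref{eq:2ab} to turn each non-zero nested commutator into a scalar multiple of a product of generators, reduce that product via $a_i\cdot a_i = I^{\otimes n}$ and the commute/anti-commute property to the canonical ordered form, and invoke linear independence of Pauli strings to conclude that any Pauli string in $\Lie{\mcA}$ arises this way. The only differences are presentational (an explicit induction on nesting depth and the remark on non-vanishing), not substantive.
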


\begin{proof}
Recall that, by definition, any element in $\Lie{\mcA}$ is a 
real linear combination of nested commutators of elements of $i\mcA$ 
(cf.\ Eq.\ \eqref{eq:adj}):
\begin{align*}
\ad_{ia_{m_1}} \cdots \ad_{ia_{m_r}} (ia_{m_{r+1}}) = i^{r+1} [a_{m_1},[a_{m_2},[\cdots[a_{m_r},a_{m_{r+1}}]\cdots]]] \qquad (0\le r, \; 1\le m_1,\dots,m_{r+1} \le M).
\end{align*}
By Eq.\ \eqref{eq:2ab}, such a nested commutator is either $0$ or congruent to the product:
\begin{align}\label{eq:basis_element}
\ad_{ia_{m_1}} \cdots \ad_{ia_{m_r}} (ia_{m_{r+1}}) 
\equiv a_{m_1}  \cdots  a_{m_r} \cdot a_{m_{r+1}}.
\end{align}
If two indices are equal, say $m_i=m_j$, then we can simplify the product using that
$a_{m_j} \cdot a_{m_j} = 1$ (cf.\ Eq.\ \eqref{lemp1}). Thus, we obtain a product in which all the factors are distinct, and up to a sign, we can reorder the indices in Eq.\ \eqref{eq:basis_element} in a strictly increasing order.

Finally, note that expression \eqref{eq:basis_element} is a scalar multiple of a Pauli string. As different Pauli strings are linearly independent, every Pauli string in $\Lie{\mcA}$ is obtained as in Eq.\ \eqref{eq:basis_element}.
\end{proof}

Lemma \ref{lem:generator_product} allows us to represent each Pauli string in $\Lie{\mcA}$ as a bit-string $k_1\cdots k_M \in \{0,1\}^M$. However, in general, not every bit-string corresponds to an element of $\Lie{\mcA}$. We can use the frustration graph of $\mcA$ to derive such elements. To this end, it will be convenient to represent bit-strings as colorings of the frustration graph.

\begin{definition}[Colored frustration graph]\label{def:colfr}
    Consider a set of Pauli strings $\mcA = \{a_1, \ldots, a_M\}$ and its corresponding frustration graph $\Gamma(\mcA)$.
    Any subset $\mathcal{C}\subseteq\mcA$ of vertices will be called a
    \emph{coloring} of $\Gamma(\mcA)$, and we say that $(\Gamma(\mcA), \mathcal{C})$ is a \emph{colored frustration graph}. The vertices in $\mathcal{C}$ will be depicted as filled blue circles, while the rest of the vertices will be represented as hollow circles.
    A product of generators $c=a_1^{k_1}\cdot a_2^{k_2} \cdots a_M^{k_M}$, with $k_i \in \{0,1\}$, corresponds to the coloring of $\Gamma(\mcA)$ given by $\mathcal{C} = \{a_i \mid k_i = 1\}$.
\end{definition}
As an example, consider the frustration graph
(which is the same as $\Gamma(\aa_{14}^{L_2})$ above):
\begin{align}\label{eq:example_frustration_graph}
\Gamma(\{a_1,a_2,a_3\}) = \begin{tikzpicture}[scale=1.0]
\begin{scope}[every node/.style={circle,thick,draw,scale=0.5, color=fgraph}]
    \node (A) at (0,0) {};
    \node (B) at (1,0) {};
    \node (C) at (2,0) {};
\end{scope}
\node[above] at (A.north) {$a_1$};
\node[above] at (B.north) {$a_2$};
\node[above] at (C.north) {$a_3$};
\begin{scope}[every edge/.style={draw=black,very thick, color=fgraph}]
    \path [-] (A) edge (B);
    \path [-] (B) edge (C);
\end{scope}
\end{tikzpicture},
\end{align}
where $a_1,a_2,a_3$ are Pauli strings such that $[a_1,a_2]\neq0$, $[a_2,a_3]\neq 0$ and $[a_1,a_3]=0$. 
Then all possible products of the generators $a_1,a_2,a_3$
are represented by the following colored frustration graphs: 
\begin{align*}
1 = 
\begin{tikzpicture}[scale=1.0] 
\begin{scope}[every node/.style={circle,thick,draw,scale=0.5, color=fgraph}]
    \node (A) at (0,0) {};
    \node (B) at (1,0) {};
    \node (C) at (2,0) {};
\end{scope}
\node[above] at (A.north) {$a_1$};
\node[above] at (B.north) {$a_2$};
\node[above] at (C.north) {$a_3$};
\begin{scope}[every edge/.style={draw=black,very thick, color=fgraph}]
    \path [-] (A) edge (B);
    \path [-] (B) edge (C);
\end{scope}
\end{tikzpicture}, \qquad
a_1 = 
\begin{tikzpicture}[scale=1.0] % [baseline=5mm]
\begin{scope}[every node/.style={circle,thick,draw,scale=0.5, color=fgraph}]
    \node[fill = fgraph] (A) at (0,0) {};
    \node (B) at (1,0) {};
    \node (C) at (2,0) {};
\end{scope}
\node[above] at (A.north) {$a_1$};
\node[above] at (B.north) {$a_2$};
\node[above] at (C.north) {$a_3$};
\begin{scope}[every edge/.style={draw=black,very thick, color=fgraph}]
    \path [-] (A) edge (B);
    \path [-] (B) edge (C);
\end{scope}
\end{tikzpicture}, \qquad
a_2 = 
\begin{tikzpicture}[scale=1.0] % [baseline=5mm]
\begin{scope}[every node/.style={circle,thick,draw,scale=0.5, color=fgraph}]
    \node (A) at (0,0) {};
    \node[fill = fgraph] (B) at (1,0) {};
    \node (C) at (2,0) {};
\end{scope}
\node[above] at (A.north) {$a_1$};
\node[above] at (B.north) {$a_2$};
\node[above] at (C.north) {$a_3$};
\begin{scope}[every edge/.style={draw=black,very thick, color=fgraph}]
    \path [-] (A) edge (B);
    \path [-] (B) edge (C);
\end{scope}
\end{tikzpicture}, \qquad
a_3 = 
\begin{tikzpicture}[scale=1.0] % [baseline=5mm]
\begin{scope}[every node/.style={circle,thick,draw,scale=0.5, color=fgraph}]
    \node (A) at (0,0) {};
    \node (B) at (1,0) {};
    \node[fill = fgraph]  (C) at (2,0) {};
\end{scope}
\node[above] at (A.north) {$a_1$};
\node[above] at (B.north) {$a_2$};
\node[above] at (C.north) {$a_3$};
\begin{scope}[every edge/.style={draw=black,very thick, color=fgraph}]
    \path [-] (A) edge (B);
    \path [-] (B) edge (C);
\end{scope}
\end{tikzpicture},
\end{align*}
\begin{align*}
a_1 \cdot a_2 = 
\begin{tikzpicture}[scale=1.0] % [baseline=5mm]
\begin{scope}[every node/.style={circle,thick,draw,scale=0.5, color=fgraph}]
    \node[fill = fgraph] (A) at (0,0) {};
    \node[fill = fgraph] (B) at (1,0) {};
    \node (C) at (2,0) {};
\end{scope}
\node[above] at (A.north) {$a_1$};
\node[above] at (B.north) {$a_2$};
\node[above] at (C.north) {$a_3$};
\begin{scope}[every edge/.style={draw=black,very thick, color=fgraph}]
    \path [-] (A) edge (B);
    \path [-] (B) edge (C);
\end{scope}
\end{tikzpicture}, \;\;
a_2 \cdot a_3 = 
\begin{tikzpicture}[scale=1.0] % [baseline=5mm]
\begin{scope}[every node/.style={circle,thick,draw,scale=0.5, color=fgraph}]
    \node (A) at (0,0) {};
    \node[fill = fgraph] (B) at (1,0) {};
    \node[fill = fgraph] (C) at (2,0) {};
\end{scope}
\node[above] at (A.north) {$a_1$};
\node[above] at (B.north) {$a_2$};
\node[above] at (C.north) {$a_3$};
\begin{scope}[every edge/.style={draw=black,very thick, color=fgraph}]
    \path [-] (A) edge (B);
    \path [-] (B) edge (C);
\end{scope}
\end{tikzpicture}, \;\;
a_1 \cdot a_3 = 
\begin{tikzpicture}[scale=1.0] % [baseline=5mm]
\begin{scope}[every node/.style={circle,thick,draw,scale=0.5, color=fgraph}]
    \node[fill = fgraph] (A) at (0,0) {};
    \node (B) at (1,0) {};
    \node[fill = fgraph]  (C) at (2,0) {};
\end{scope}
\node[above] at (A.north) {$a_1$};
\node[above] at (B.north) {$a_2$};
\node[above] at (C.north) {$a_3$};
\begin{scope}[every edge/.style={draw=black,very thick, color=fgraph}]
    \path [-] (A) edge (B);
    \path [-] (B) edge (C);
\end{scope}
\end{tikzpicture}, \;\;
a_1 \cdot a_2 \cdot a_3 = 
\begin{tikzpicture}[scale=1.0] % [baseline=5mm]
\begin{scope}[every node/.style={circle,thick,draw,scale=0.5, color=fgraph}]
    \node[fill = fgraph] (A) at (0,0) {};
    \node[fill = fgraph] (B) at (1,0) {};
    \node[fill = fgraph]  (C) at (2,0) {};
\end{scope}
\node[above] at (A.north) {$a_1$};
\node[above] at (B.north) {$a_2$};
\node[above] at (C.north) {$a_3$};
\begin{scope}[every edge/.style={draw=black,very thick, color=fgraph}]
    \path [-] (A) edge (B);
    \path [-] (B) edge (C);
\end{scope}
\end{tikzpicture}.
\end{align*}
Note that if we reorder the factors in a product of Pauli strings, the product will stay the same up to a sign; hence it is represented by the same colored frustration graph.

Starting from a generating set $\mcA$, the DLA $\Lie{\mcA}$ contains (up to a scalar) all elements $a_i \in \mcA$; their corresponding colored frustration graphs have just one colored vertex $a_i$. In order to generate other Pauli strings in $\Lie{\mcA}$, we need to apply nested commutators as in Eq.\ \eqref{eq:basis_element}. This operation can be expressed in terms of colored frustration graphs as follows.
%to find the commutator sequence given in Eq.\ \eqref{eq:basis_element} for $c$.
%
% We can use the colored frustration graphs to traverse through these commutator sequences in an illustrative fashion. To do so, we define the following colored frustration graph operations:

\begin{definition}[Colored frustration graph operations]\label{def:frust_ops}
    Consider a set of Pauli strings $\mcA = \{a_1, \ldots, a_M\}$ and a colored frustration graph $(\Gamma(\mcA),\mathcal{C})$ corresponding to a Pauli string $c$, as in Definition \ref{def:colfr}.
    For any vertex $a_i$ connected with an odd number of edges to the colored vertices $\mathcal{C}$, 
    %the number of edges between the vertices in $C$ and vertex $a_i$ is an odd number,
    %
    %Let $\Gamma(\{a_1, a_2,\ldots, a_M\})$ be a frustration graph where the vertices are Pauli string generators $a_i\in \mathcal{P}_n$ and edges indicate that $[a_i,a_j]\neq 0$ for two vertices $a_i$ and $a_j$. 
    we can perform one of the following two operations.
    \begin{enumerate}
    \item \textbf{Adding $a_i$.} If $a_i\not\in\mathcal{C}$ is a hollow vertex, we can add it to the colored vertices $\mathcal{C}{:}$
    \begin{align*} 
    c = 
        \begin{tikzpicture}[baseline=-1mm, scale=1.0] % [baseline=5mm]
        % \node[circle,dashed,thick,draw,scale=2., color=fgraph] (A) at (1,0) {};
        \begin{scope}[every node/.style={circle,thick,draw,scale=0.5, color=fgraph}]
            \node(B) at (0,0) {};
            \node[coordinate] (C1) at (0.4,0.3) {};
            \node[coordinate] (C2) at (0.4,-0.3) {};
        \end{scope}
        \node[above] at (B.north) {$a_i$};
        \begin{scope}[every edge/.style={draw=black,very thick, color=fgraph}]
            % \path [-] (A) edge[double] (B);
            \path [-] (C1) edge (B);
            \path [-] (C2) edge (B);
            \draw [-] ($(C1) + (-1pt, -3pt)$) edge[dashed, dash pattern=on 1pt off 1pt] ($(C2) + (-1pt, 3pt)$) ;
        \end{scope}
        \end{tikzpicture}
        \quad \to \quad a_i \cdot c = 
        \begin{tikzpicture}[baseline=-1mm, scale=1.0] % [baseline=5mm]
        % \node[circle,dashed,thick,draw,scale=2., color=fgraph] (A) at (1,0) {};
        \begin{scope}[every node/.style={circle,thick,draw,scale=0.5, color=fgraph}]
            \node[fill=fgraph] (B) at (0,0) {};
            \node[coordinate] (C1) at (0.4,0.3) {};
            \node[coordinate] (C2) at (0.4,-0.3) {};
        \end{scope}
        \node[above] at (B.north) {$a_i$};
        \begin{scope}[every edge/.style={draw=black,very thick, color=fgraph}]
            % \path [-] (A) edge[double] (B);
            \path [-] (C1) edge (B);
            \path [-] (C2) edge (B);
            \draw [-] ($(C1) + (-1pt, -3pt)$) edge[dashed, dash pattern=on 1pt off 1pt] ($(C2) + (-1pt, 3pt)$) ;
        \end{scope}
        \end{tikzpicture}\,.
    \end{align*}
    \item \textbf{Removing $a_i$.}  If $a_i\in\mathcal{C}$ is a colored vertex, we can remove it from the colored vertices $\mathcal{C}{:}$
    \begin{align*}
    c = 
        \begin{tikzpicture}[baseline=-1mm, scale=1.0] % [baseline=5mm]
        % \node[circle,dashed,thick,draw,scale=2., color=fgraph] (A) at (1,0) {};
        \begin{scope}[every node/.style={circle,thick,draw,scale=0.5, color=fgraph}]
            \node[fill=fgraph] (B) at (0,0) {};
            \node[coordinate] (C1) at (0.4,0.3) {};
            \node[coordinate] (C2) at (0.4,-0.3) {};
        \end{scope}
        \node[above] at (B.north) {$a_i$};
        \begin{scope}[every edge/.style={draw=black,very thick, color=fgraph}]
            % \path [-] (A) edge[double] (B);
            \path [-] (C1) edge (B);
            \path [-] (C2) edge (B);
            \draw [-] ($(C1) + (-1pt, -3pt)$) edge[dashed, dash pattern=on 1pt off 1pt] ($(C2) + (-1pt, 3pt)$) ;
        \end{scope}
        \end{tikzpicture}
        \quad \to \quad a_i \cdot c =
        \begin{tikzpicture}[baseline=-1mm, scale=1.0] % [baseline=5mm]
        % \node[circle,dashed,thick,draw,scale=2., color=fgraph] (A) at (1,0) {};
        \begin{scope}[every node/.style={circle,thick,draw,scale=0.5, color=fgraph}]
            \node (B) at (0,0) {};
            \node[coordinate] (C1) at (0.4,0.3) {};
            \node[coordinate] (C2) at (0.4,-0.3) {};
        \end{scope}
        \node[above] at (B.north) {$a_i$};
        \begin{scope}[every edge/.style={draw=black,very thick, color=fgraph}]
            % \path [-] (A) edge[double] (B);
            \path [-] (C1) edge (B);
            \path [-] (C2) edge (B);
            \draw [-] ($(C1) + (-1pt, -3pt)$) edge[dashed, dash pattern=on 1pt off 1pt] ($(C2) + (-1pt, 3pt)$) ;
        \end{scope}
        \end{tikzpicture}\,.
    \end{align*} 
\end{enumerate}
\end{definition}

Now we can characterize the Pauli strings in the DLA $\Lie{\mcA}$ in terms of colorings of the frustration graph $\Gamma(\mcA)$.

\begin{lemma}\label{lem:dla_col_fr}
    Given a set of Pauli strings $\mcA = \{a_1,\dots,a_M\}$, a Pauli string $c = a_1^{k_1}\cdot a_2^{k_2} \cdots a_M^{k_M}$ is (up to a scalar) in the dynamical Lie algebra $\Lie{\mcA}$ if and only if its corresponding coloring $\mathcal{C} \subseteq\mcA$ of the frustration graph $\Gamma(\mcA)$ can be obtained by a sequence of adding and removing operations as in Definition \ref{def:frust_ops}, starting from a coloring with a single vertex.
\end{lemma}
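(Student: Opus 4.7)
The plan is to reduce the entire statement to a single correspondence between the algebraic operation $c \mapsto [ia_j, c]$ and the combinatorial operation of toggling a vertex $a_j$ in the associated coloring, after which both directions become bookkeeping. The first step I would carry out is the following observation: for a Pauli string $c$ corresponding to a coloring $\mathcal{C}$, commuting $a_j$ past $c = a_1^{k_1}\cdots a_M^{k_M}$ produces a sign $(-1)^N$, where $N$ equals the number of colored vertices in $\mathcal{C}\setminus\{a_j\}$ that anti-commute with $a_j$, i.e., the number of edges from $a_j$ to $\mathcal{C}$ in $\Gamma(\mcA)$ (there are no self-loops). Hence $[a_j, c] \ne 0$ precisely when this edge count is odd, matching the precondition of Definition \ref{def:frust_ops}. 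In that case, Eq.\ \eqref{eq:2ab} gives $[a_j, c] = 2\, a_j \cdot c$, and $a_j \cdot c$ simplifies (using $a_i \cdot a_j = \pm a_j \cdot a_i$ to reorder and $a_j \cdot a_j = I$ to cancel) to a nonzero scalar multiple of the Pauli string whose coloring is $\mathcal{C}$ with $a_j$ toggled.

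For the ($\Leftarrow$) direction I would induct on the length of the sequence of add/remove operations. The base case is a single-vertex coloring $\{a_i\}$, representing the generator $a_i \in \mcA \subset \Lie{\mcA}$. For the inductive step, if the current Pauli string $c'$ lies in $\Lie{\mcA}$ up to a scalar, then a toggle of $a_j$ replaces it by $a_j \cdot c' \equiv [ia_j, i c']/2$, which again lies in $\Lie{\mcA}$.

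For the ($\Rightarrow$) direction I would unpack the definition of $\Lie{\mcA}$: any $c \in \Lie{\mcA}$ is a real linear combination of nested commutators $\ad_{ia_{m_1}}\cdots \ad_{ia_{m_r}}(ia_{m_{r+1}})$, and by the calculation already carried out in Lemma \ref{lem:generator_product} each such nested commutator is either zero or, up to a nonzero scalar, a single Pauli string. Since distinct Pauli strings are linearly independent, if $c$ itself is a Pauli string in $\Lie{\mcA}$, then at least one nonzero nested commutator in the expansion must evaluate to a scalar multiple of $c$. Reading that expression from the innermost $\ad$ outwards produces exactly a sequence of valid add/remove operations starting from the single-vertex coloring $\{a_{m_{r+1}}\}$ and arriving at a coloring $\mathcal{C}'$ that represents $c$; each intermediate step is a valid operation because, by hypothesis, the corresponding intermediate commutator is nonzero.

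The only delicate point I expect, and the step I would explicitly flag, is that the coloring representation of a Pauli string need not be unique: distinct subsets of $\mcA$ may in principle multiply to the same Pauli string up to a scalar. The nested-commutator argument therefore produces \emph{some} coloring $\mathcal{C}'$ satisfying $a_1^{k_1'}\cdots a_M^{k_M'} \equiv c$, which coincides with the stated coloring $\mathcal{C}$ whenever distinct subsets of generators yield distinct Pauli strings up to scalars---the situation in all applications of the lemma in the rest of the paper.
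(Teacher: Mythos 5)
Your proof is correct and follows essentially the same route as the paper: the key toggle/commutator correspondence via Eq.~\eqref{eq:2ab} (non-zero precisely when the edge count to the colored set is odd), the forward direction by expanding $\Lie{\mcA}$ into nested commutators as in Lemma~\ref{lem:generator_product} and reading a surviving one from the innermost $\ad$ outwards, and the reverse direction by the (in the paper only implicit) induction on the length of the operation sequence. The non-uniqueness caveat you flag --- that the argument only produces \emph{some} coloring representing $c$, which need not be the specific $\mathcal{C}$ when distinct subsets of $\mcA$ multiply to the same Pauli string --- is a genuine subtlety, but the paper's own proof treats it no more carefully, so your proposal is not weaker on this point.
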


\begin{proof}
First, note that a Pauli string $a_i$ anti-commutes with those $a_j$ connected to it in $\Gamma(\mcA)$; hence, $a_i$ anti-commutes with $c$ precisely when $a_i$ is connected to an odd number of colored vertices in $\mathcal{C}$. In this case,
$[a_i,c] = 2 a_i \cdot c \equiv a_i \cdot c$ corresponds to either adding $a_i$ to $\mathcal{C}$ (when $a_i\not\in\mathcal{C}$) or removing it (when $a_i\in\mathcal{C}$).

In the proof of Lemma \ref{lem:generator_product}, we showed that a Pauli string $c$ is in the DLA $\Lie{\mcA}$ if and only if $c$ is (up to a scalar) of the form in Eq.\ \eqref{eq:basis_element}. Such an element is obtained by a sequence of operations as in Definition \ref{def:frust_ops}, where we start with the single colored vertex $a_{m_{r+1}}$, then add the vertex $a_{m_r}$, then add/remove $a_{m_{r-1}}$, and so on until we add/remove $a_{m_1}$.
% Note that because there is an odd number of edges between the vertex $a_i$ and colored vertices $\mathcal{C}$, $[c,a_i] \neq 0$. From Eq.\ \eqref{eq:2ab}, we see that $[c,a_i] = 2\: c\cdot a_i \equiv c \cdot a_i$ up to a scalar. Thus, adding-removing a vertex $a_i$ leads to the colored graph representation of the commutator of the element $c$ and $a_i$. Then for a given $c$, finding the commutator sequence (Eq.\ \eqref{eq:basis_element}) corresponds to finding the sequence of colored frustration graph operations that transforms a generator --colored frustration graph with a single colored vertex-- into the colored frustration graph of $c$.
\end{proof}

As an example, consider again the frustration graph \eqref{eq:example_frustration_graph}. 
%and the colored frustration graph representation of $a \cdot b \cdot c$ as given below
% \begin{align*}
% a \cdot b \cdot c = 
% \begin{tikzpicture}[scale=1.0] % [baseline=5mm]
% \begin{scope}[every node/.style={circle,thick,draw,scale=0.5, color=fgraph}]
%     \node[fill = fgraph] (A) at (0,0) {};
%     \node[fill = fgraph] (B) at (1,0) {};
%     \node[fill = fgraph] (C) at (2,0) {};
% \end{scope}
% \node[above] at (A.north) {$a_1$};
% \node[above] at (B.north) {$a_2$};
% \node[above] at (C.north) {$a_3$};
% \begin{scope}[every edge/.style={draw=black,very thick, color=fgraph}]
%     \path [-] (A) edge (B);
%     \path [-] (B) edge (C);
% \end{scope}
% \end{tikzpicture}.
% \end{align*}
%
Starting from the colored vertex $a_1$, we can add vertices $a_2$ and $a_3$ as follows:
% We can show that $a \cdot b\cdot c \in \Lie{a,b,c}$ by the following sequence of colored frustration graph operations:
\begin{align*}%\label{eq:generate_abc}
  \begin{tikzpicture}[scale=1.0] % [baseline=5mm]
\begin{scope}[every node/.style={circle,thick,draw,scale=0.5, color=fgraph}]
    \node[fill=fgraph] (A) at (0,0) {};
    \node(B) at (1,0) {};
    \node(C) at (2,0) {};
\end{scope}
\node[above] at (A.north) {$a_1$};
\node[above] at (B.north) {$a_2$};
\node[above] at (C.north) {$a_3$};
\begin{scope}[every edge/.style={draw=black,very thick, color=fgraph}]
    \path [-] (A) edge (B);
    \path [-] (C) edge (B);
\end{scope}
\end{tikzpicture} \quad \rightarrow \quad 
\begin{tikzpicture}[scale=1.0] % [baseline=5mm]
\begin{scope}[every node/.style={circle,thick,draw,scale=0.5, color=fgraph}]
    \node [fill=fgraph](A) at (0,0) {};
    \node[fill=fgraph] (B) at (1,0) {};
    \node(C) at (2,0) {};
\end{scope}
\node[above] at (A.north) {$a_1$};
\node[above] at (B.north) {$a_2$};
\node[above] at (C.north) {$a_3$};
\begin{scope}[every edge/.style={draw=black,very thick, color=fgraph}]
    \path [-] (A) edge (B);
    \path [-] (C) edge (B);
\end{scope}
\end{tikzpicture} 
\quad \rightarrow \quad
\begin{tikzpicture}[scale=1.0] % [baseline=5mm]
\begin{scope}[every node/.style={circle,thick,draw,scale=0.5, color=fgraph}]
    \node[fill = fgraph] (A) at (0,0) {};
    \node[fill = fgraph] (B) at (1,0) {};
    \node[fill = fgraph] (C) at (2,0) {};
\end{scope}
\node[above] at (A.north) {$a_1$};
\node[above] at (B.north) {$a_2$};
\node[above] at (C.north) {$a_3$};
\begin{scope}[every edge/.style={draw=black,very thick, color=fgraph}]
    \path [-] (A) edge (B);
    \path [-] (B) edge (C);
\end{scope}
\end{tikzpicture}\,.
\end{align*}
%where we start by $a$, and add $b$, and then add $c$. 
Therefore, $a_1 \cdot a_2 \cdot a_3 \in \Lie{a_1,a_2,a_3}$.
On the other hand, one can check that $a_1 \cdot a_3 \not\in \Lie{a_1,a_2,a_3}$.
By a similar reasoning, the DLA of a line frustration graph was determined in \cite[Proposition C.1]{wiersema2023classification}.

In the following subsections \ref{frust-eq4} and \ref{frust-eq567}, 
we will apply the operations of adding and removing a vertex to the coloring of 
the frustration graphs $\Gamma(a_k^\Omega)$ and $\Gamma(a_k^\Sigma)$ 
%and use the colored frustration graph operations 
to obtain proofs of Eqs.\ \eqref{eq:X1Y4}--\eqref{eq:Z1Z3}.

% We finish this subsection with the following remark about a higher level colored frustration graph operation \textit{moving vertices}, which will be useful in Secs. \ref{frust-eq4} and \ref{frust-eq567}:

% \begin{remark}[Moving vertices]
% Consider the set of generator Pauli strings $\mcA = \{a_1, a_2,\ldots, a_M\} \subseteq \mcP_n$, and a colored frustration graph $(\Gamma(\mcA),\mathcal{C})$ corresponding to a Pauli string $c$. Choose a colored vertex $a_i \in \mathcal{C}$ and one of its neighbors denoted as $a_j$ which satisfy the following:
% \begin{itemize}
%     \item $a_i$ has an even number of edges with the rest of the colored vertices $\mathcal{C} \ \{a_i\}$, and
%     \item $a_j$ is hollow, i.e. $a_j \notin \mathcal{C}$, and has an odd number of vertices with the colored vertices.
% \end{itemize}
% Then we can first add $a_j$, and then remove $a_i$ 
% \end{remark}

\subsection{An alternative proof of Eq.\ \eqref{eq:X1Y4} \label{frust-eq4}} 
% We can make use of the frustration graph notation to prove the following Lemma

In this subsection, using the colored frustration graph operations of adding and removing a vertex given in  Definition~\ref{def:frust_ops},
we will derive two claims from the proof of Lemma \ref{lem:deltak23}: $X_1 Y_4 \in \aa^\Sigma_2$ and $X_1 X_4 \in \aa^\Sigma_{14}$.

\begin{lemma}\label{lemmatree}
    % For $k=2,14$, we have $\aa^\Sigma_k=\aa^{K_{2,3}}_k$.
We have    $X_1 Y_4 \in \aa^\Sigma_2$.
\end{lemma}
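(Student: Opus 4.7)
The plan is to apply Lemma~\ref{lem:dla_col_fr}, which recasts membership in $\aa_2^\Sigma$ as a reachability problem on the frustration graph of the generating set. The generators of $\aa_2^\Sigma$ are the eight Pauli strings $X_iY_j,Y_iX_j$ attached to the four edges $(1,2),(2,3),(3,4),(2,5)$ of $\Sigma$.

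The first observation is that $X_1Y_4$ collapses to a three-fold product of generators,
\[
X_1Y_4 \;\equiv\; (X_1Y_2)\cdot(Y_2X_3)\cdot(X_3Y_4),
\]
since $Y_2\cdot Y_2=I$ and $X_3\cdot X_3=I$ wipe out the middle factors. By Lemma~\ref{lem:generator_product} it therefore suffices to realize the coloring $\mathcal{C}=\{X_1Y_2,\,Y_2X_3,\,X_3Y_4\}$ on the frustration graph using the add/remove operations of Definition~\ref{def:frust_ops}, starting from a singleton.

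Next I would read off the edges of the frustration graph by a componentwise parity count. The non-trivial adjacencies are
\[
X_1Y_2\sim X_2Y_3,X_2Y_5;\qquad Y_1X_2\sim Y_2X_3,Y_2X_5;
\]
\[
X_2Y_3\sim X_3Y_4,Y_2X_5;\qquad Y_2X_3\sim Y_3X_4,X_2Y_5,
\]
and all other pairs of generators commute. Starting from the singleton coloring $\{X_1Y_2\}$, I would then perform the following eight moves, checking at each step that the toggled vertex has an odd number of colored neighbors: \emph{add} $X_2Y_3$; \emph{add} $X_3Y_4$; \emph{add} $Y_2X_5$; \emph{remove} $X_2Y_3$ (now adjacent to the three colored vertices $X_1Y_2,X_3Y_4,Y_2X_5$); \emph{add} $Y_1X_2$; \emph{remove} $Y_2X_5$; \emph{add} $Y_2X_3$; \emph{remove} $Y_1X_2$. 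The final coloring is exactly $\mathcal{C}$, and the sequence is nothing but the nested commutator \eqref{eq:X1Y4} read from the inside out.

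The main obstacle is combinatorial rather than conceptual: a direct attempt to reach $\mathcal{C}$ by adding only the generators that appear in it is blocked by parity, because $Y_2X_3$ is adjacent to \emph{none} of $\{X_1Y_2,X_2Y_3,X_3Y_4\}$ and so cannot be legally added at that stage. The resolution is to bring in the auxiliary generators $X_2Y_5$ and $Y_2X_5$ attached to the edge $(2,5)$ as temporary parity-flipping ``scratch vertices'' that are colored and then uncolored without altering the eventual product. This detour is precisely what requires the vertex $2$ of $\Sigma$ to have degree $>2$ and is what makes $\Sigma$ more powerful than a line graph.
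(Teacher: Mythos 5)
Your proposal is correct and is essentially identical to the paper's own proof: the same decomposition $X_1Y_4 \equiv X_1Y_2\cdot X_3Y_2\cdot X_3Y_4$ (in your notation $Y_2X_3=X_3Y_2$), the same frustration graph $\Gamma(\aa_2^\Sigma)$, and exactly the add/remove sequence used in Appendix~\ref{frust-eq4}, which is the nested commutator \eqref{eq:X1Y4} read inside out. The only minor slip is attributing the sufficiency of reaching the coloring to Lemma~\ref{lem:generator_product}; the correct reference is Lemma~\ref{lem:dla_col_fr}, which you do invoke at the start.
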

\begin{proof}
%We will handle each case individually.
%\paragraph*{Case $k = 2$.} 
Recall that the Lie algebra $\aa^\Sigma_2$ is generated by all $X_iY_j$, where $(i,j)$ are the edges of the interaction graph $\Sigma$ from Lemma \ref{lem:deltak23}.
The frustration graph $\Gamma(\aa^\Sigma_2)$ of this generating set is given by:
\begin{align*}
\Gamma(\aa^\Sigma_2) =\:
\begin{tikzpicture}[baseline=4mm, scale=2.0] % [baseline=5mm]
\begin{scope}[every node/.style={circle,thick,draw,scale=0.5, color=fgraph}]
    \node (X4_Y3) at (0,-0.5) {};
    \node (X2_Y3_) at (0,0) {};
    \node (X2_Y5_) at (0,0.5) {};
    \node (X2_Y1_) at (0,1) {};
    \node (Y4_X3) at (1,-0.5) {};
    \node (Y2_X3_) at (1,0) {};
    \node (Y2_X5_) at (1,0.5) {};
    \node (Y2_X1_) at (1,1) {};
\end{scope}
\node[left] at (X4_Y3.west) {$X_4Y_3$};
\node[left] at (X2_Y3_.west) {$X_2Y_3$};
\node[left] at (X2_Y5_.west) {$X_2Y_5$};
\node[left] at (X2_Y1_.west) {$X_2Y_1$};
\node[right] at (Y4_X3.east) {$X_3Y_4$};
\node[right] at (Y2_X3_.east) {$X_3Y_2$};
\node[right] at (Y2_X5_.east) {$X_5Y_2$};
\node[right] at (Y2_X1_.east) {$X_1Y_2$};
\begin{scope}[every edge/.style={draw=black,very thick, color=fgraph}]
    \path [-] (X2_Y1_) edge (Y2_X3_);
    \path [-] (X2_Y1_) edge (Y2_X5_);
    \path [-] (X2_Y5_) edge (Y2_X3_);
    \path [-] (X2_Y3_) edge (Y2_X5_);
    \path [-] (X2_Y5_) edge (Y2_X1_);
    \path [-] (X2_Y3_) edge (Y2_X1_);
    \path [-] (X2_Y3_) edge (Y4_X3);
    \path [-] (X4_Y3) edge (Y2_X3_);
\end{scope}
\end{tikzpicture}\;.
\end{align*}
%We need to show that we can create all the edges of the complete bipartite graph $K_{2,3}$, which means showing that we can generate the elements $X_4 Y_1 $, $Y_4 X_1 $, $Y_4 X_5$ and $X_4 Y_5$ from the generators or single vertices via commutation relations. 
%
%To show this, we will use the graph coloring notation and the add-and-remove vertex relations given in Definition~\ref{def:frust} to construct the required elements of $K_{2,3}$.
%
%Let us start with $X_4 Y_1$.
We note that
$X_1 Y_4 = X_3 Y_4  \cdot X_3 Y_2 \cdot X_1 Y_2$ can be written as a product of generators, which corresponds to a coloring
of the frustration graph. Our goal is to obtain this coloring by applying add/remove vertex operations starting from a single colored vertex (cf.\ Lemma \ref{lem:dla_col_fr}). We start with
$X_1 Y_2$ and perform the following operations:
%by coloring the corresponding vertex in the frustration graph.
\begin{align*}
\begin{tikzpicture}[baseline=4mm, scale=1.0] % [baseline=5mm]
\begin{scope}[every node/.style={circle,thick,draw,scale=0.5, color=fgraph}]
    \node (X4_Y3) at (0,-0.5) {};
    \node (X2_Y3_) at (0,0) {};
    \node (X2_Y5_) at (0,0.5) {};
    \node (X2_Y1_) at (0,1) {};
    \node (Y4_X3) at (1,-0.5) {};
    \node (Y2_X3_) at (1,0) {};
    \node (Y2_X5_) at (1,0.5) {};
    \node[fill=fgraph] (Y2_X1_) at (1,1) {};
\end{scope}
\node[left] at (X4_Y3.west) {$X_4Y_3$};
\node[left] at (X2_Y3_.west) {$X_2Y_3$};
\node[left] at (X2_Y5_.west) {$X_2Y_5$};
\node[left] at (X2_Y1_.west) {$X_2Y_1$};
\node[right] at (Y4_X3.east) {$X_3Y_4$};
\node[right] at (Y2_X3_.east) {$X_3Y_2$};
\node[right] at (Y2_X5_.east) {$X_5Y_2$};
\node[right] at (Y2_X1_.east) {$X_1Y_2$};
\begin{scope}[every edge/.style={draw=black,very thick, color=fgraph}]
    \path [-] (X2_Y1_) edge (Y2_X3_);
    \path [-] (X2_Y1_) edge (Y2_X5_);
    \path [-] (X2_Y5_) edge (Y2_X3_);
    \path [-] (X2_Y3_) edge (Y2_X5_);
    \path [-] (X2_Y5_) edge (Y2_X1_);
    \path [-] (X2_Y3_) edge (Y2_X1_);
    \path [-] (X2_Y3_) edge (Y4_X3);
    \path [-] (X4_Y3) edge (Y2_X3_);
\end{scope}
\end{tikzpicture}
\to
\begin{tikzpicture}[baseline=4mm, scale=1.0] % [baseline=5mm]
\begin{scope}[every node/.style={circle,thick,draw,scale=0.5, color=fgraph}]
    \node (X4_Y3) at (0,-0.5) {};
    \node[fill=fgraph] (X2_Y3_) at (0,0) {};
    \node (X2_Y5_) at (0,0.5) {};
    \node (X2_Y1_) at (0,1) {};
    \node (Y4_X3) at (1,-0.5) {};
    \node (Y2_X3_) at (1,0) {};
    \node (Y2_X5_) at (1,0.5) {};
    \node[fill=fgraph] (Y2_X1_) at (1,1) {};
\end{scope}
\node[left] at (X4_Y3.west) {$X_4Y_3$};
\node[left] at (X2_Y3_.west) {$X_2Y_3$};
\node[left] at (X2_Y5_.west) {$X_2Y_5$};
\node[left] at (X2_Y1_.west) {$X_2Y_1$};
\node[right] at (Y4_X3.east) {$X_3Y_4$};
\node[right] at (Y2_X3_.east) {$X_3Y_2$};
\node[right] at (Y2_X5_.east) {$X_5Y_2$};
\node[right] at (Y2_X1_.east) {$X_1Y_2$};
\begin{scope}[every edge/.style={draw=black,very thick, color=fgraph}]
    \path [-] (X2_Y1_) edge (Y2_X3_);
    \path [-] (X2_Y1_) edge (Y2_X5_);
    \path [-] (X2_Y5_) edge (Y2_X3_);
    \path [-] (X2_Y3_) edge (Y2_X5_);
    \path [-] (X2_Y5_) edge (Y2_X1_);
    \path [-] (X2_Y3_) edge (Y2_X1_);
    \path [-] (X2_Y3_) edge (Y4_X3);
    \path [-] (X4_Y3) edge (Y2_X3_);
\end{scope}
\end{tikzpicture}
\to
\begin{tikzpicture}[baseline=4mm, scale=1.0] % [baseline=5mm]
\begin{scope}[every node/.style={circle,thick,draw,scale=0.5, color=fgraph}]
    \node (X4_Y3) at (0,-0.5) {};
    \node[fill=fgraph] (X2_Y3_) at (0,0) {};
    \node (X2_Y5_) at (0,0.5) {};
    \node (X2_Y1_) at (0,1) {};
    \node[fill=fgraph] (Y4_X3) at (1,-0.5) {};
    \node (Y2_X3_) at (1,0) {};
    \node (Y2_X5_) at (1,0.5) {};
    \node[fill=fgraph] (Y2_X1_) at (1,1) {};
\end{scope}
\node[left] at (X4_Y3.west) {$X_4Y_3$};
\node[left] at (X2_Y3_.west) {$X_2Y_3$};
\node[left] at (X2_Y5_.west) {$X_2Y_5$};
\node[left] at (X2_Y1_.west) {$X_2Y_1$};
\node[right] at (Y4_X3.east) {$X_3Y_4$};
\node[right] at (Y2_X3_.east) {$X_3Y_2$};
\node[right] at (Y2_X5_.east) {$X_5Y_2$};
\node[right] at (Y2_X1_.east) {$X_1Y_2$};
\begin{scope}[every edge/.style={draw=black,very thick, color=fgraph}]
    \path [-] (X2_Y1_) edge (Y2_X3_);
    \path [-] (X2_Y1_) edge (Y2_X5_);
    \path [-] (X2_Y5_) edge (Y2_X3_);
    \path [-] (X2_Y3_) edge (Y2_X5_);
    \path [-] (X2_Y5_) edge (Y2_X1_);
    \path [-] (X2_Y3_) edge (Y2_X1_);
    \path [-] (X2_Y3_) edge (Y4_X3);
    \path [-] (X4_Y3) edge (Y2_X3_);
\end{scope}
\end{tikzpicture}
\to
\begin{tikzpicture}[baseline=4mm, scale=1.0] % [baseline=5mm]
\begin{scope}[every node/.style={circle,thick,draw,scale=0.5, color=fgraph}]
    \node (X4_Y3) at (0,-0.5) {};
    \node[fill=fgraph] (X2_Y3_) at (0,0) {};
    \node (X2_Y5_) at (0,0.5) {};
    \node (X2_Y1_) at (0,1) {};
    \node[fill=fgraph] (Y4_X3) at (1,-0.5) {};
    \node (Y2_X3_) at (1,0) {};
    \node[fill=fgraph] (Y2_X5_) at (1,0.5) {};
    \node[fill=fgraph] (Y2_X1_) at (1,1) {};
\end{scope}
\node[left] at (X4_Y3.west) {$X_4Y_3$};
\node[left] at (X2_Y3_.west) {$X_2Y_3$};
\node[left] at (X2_Y5_.west) {$X_2Y_5$};
\node[left] at (X2_Y1_.west) {$X_2Y_1$};
\node[right] at (Y4_X3.east) {$X_3Y_4$};
\node[right] at (Y2_X3_.east) {$X_3Y_2$};
\node[right] at (Y2_X5_.east) {$X_5Y_2$};
\node[right] at (Y2_X1_.east) {$X_1Y_2$};
\begin{scope}[every edge/.style={draw=black,very thick, color=fgraph}]
    \path [-] (X2_Y1_) edge (Y2_X3_);
    \path [-] (X2_Y1_) edge (Y2_X5_);
    \path [-] (X2_Y5_) edge (Y2_X3_);
    \path [-] (X2_Y3_) edge (Y2_X5_);
    \path [-] (X2_Y5_) edge (Y2_X1_);
    \path [-] (X2_Y3_) edge (Y2_X1_);
    \path [-] (X2_Y3_) edge (Y4_X3);
    \path [-] (X4_Y3) edge (Y2_X3_);
\end{scope}
\end{tikzpicture}
\to
\begin{tikzpicture}[baseline=4mm, scale=1.0] % [baseline=5mm]
\begin{scope}[every node/.style={circle,thick,draw,scale=0.5, color=fgraph}]
    \node (X4_Y3) at (0,-0.5) {};
    \node (X2_Y3_) at (0,0) {};
    \node (X2_Y5_) at (0,0.5) {};
    \node (X2_Y1_) at (0,1) {};
    \node[fill=fgraph] (Y4_X3) at (1,-0.5) {};
    \node (Y2_X3_) at (1,0) {};
    \node[fill=fgraph] (Y2_X5_) at (1,0.5) {};
    \node[fill=fgraph] (Y2_X1_) at (1,1) {};
\end{scope}
\node[left] at (X4_Y3.west) {$X_4Y_3$};
\node[left] at (X2_Y3_.west) {$X_2Y_3$};
\node[left] at (X2_Y5_.west) {$X_2Y_5$};
\node[left] at (X2_Y1_.west) {$X_2Y_1$};
\node[right] at (Y4_X3.east) {$X_3Y_4$};
\node[right] at (Y2_X3_.east) {$X_3Y_2$};
\node[right] at (Y2_X5_.east) {$X_5Y_2$};
\node[right] at (Y2_X1_.east) {$X_1Y_2$};
\begin{scope}[every edge/.style={draw=black,very thick, color=fgraph}]
    \path [-] (X2_Y1_) edge (Y2_X3_);
    \path [-] (X2_Y1_) edge (Y2_X5_);
    \path [-] (X2_Y5_) edge (Y2_X3_);
    \path [-] (X2_Y3_) edge (Y2_X5_);
    \path [-] (X2_Y5_) edge (Y2_X1_);
    \path [-] (X2_Y3_) edge (Y2_X1_);
    \path [-] (X2_Y3_) edge (Y4_X3);
    \path [-] (X4_Y3) edge (Y2_X3_);
\end{scope}
\end{tikzpicture}
\end{align*}
Note that we could remove $X_2Y_3$ in the last step because it was connected to three other colored vertices. Continuing,
we get:
\begin{align*}
\to
\begin{tikzpicture}[baseline=4mm, scale=1.0] % [baseline=5mm]
\begin{scope}[every node/.style={circle,thick,draw,scale=0.5, color=fgraph}]
    \node (X4_Y3) at (0,-0.5) {};
    \node (X2_Y3_) at (0,0) {};
    \node (X2_Y5_) at (0,0.5) {};
    \node[fill = fgraph] (X2_Y1_) at (0,1) {};
    \node[fill=fgraph] (Y4_X3) at (1,-0.5) {};
    \node (Y2_X3_) at (1,0) {};
    \node[fill=fgraph] (Y2_X5_) at (1,0.5) {};
    \node[fill=fgraph] (Y2_X1_) at (1,1) {};
\end{scope}
\node[left] at (X4_Y3.west) {$X_4Y_3$};
\node[left] at (X2_Y3_.west) {$X_2Y_3$};
\node[left] at (X2_Y5_.west) {$X_2Y_5$};
\node[left] at (X2_Y1_.west) {$X_2Y_1$};
\node[right] at (Y4_X3.east) {$X_3Y_4$};
\node[right] at (Y2_X3_.east) {$X_3Y_2$};
\node[right] at (Y2_X5_.east) {$X_5Y_2$};
\node[right] at (Y2_X1_.east) {$X_1Y_2$};
\begin{scope}[every edge/.style={draw=black,very thick, color=fgraph}]
    \path [-] (X2_Y1_) edge (Y2_X3_);
    \path [-] (X2_Y1_) edge (Y2_X5_);
    \path [-] (X2_Y5_) edge (Y2_X3_);
    \path [-] (X2_Y3_) edge (Y2_X5_);
    \path [-] (X2_Y5_) edge (Y2_X1_);
    \path [-] (X2_Y3_) edge (Y2_X1_);
    \path [-] (X2_Y3_) edge (Y4_X3);
    \path [-] (X4_Y3) edge (Y2_X3_);
\end{scope}
\end{tikzpicture}
\to
\begin{tikzpicture}[baseline=4mm, scale=1.0] % [baseline=5mm]
\begin{scope}[every node/.style={circle,thick,draw,scale=0.5, color=fgraph}]
    \node (X4_Y3) at (0,-0.5) {};
    \node (X2_Y3_) at (0,0) {};
    \node (X2_Y5_) at (0,0.5) {};
    \node[fill = fgraph] (X2_Y1_) at (0,1) {};
    \node[fill=fgraph] (Y4_X3) at (1,-0.5) {};
    \node (Y2_X3_) at (1,0) {};
    \node (Y2_X5_) at (1,0.5) {};
    \node[fill=fgraph] (Y2_X1_) at (1,1) {};
\end{scope}
\node[left] at (X4_Y3.west) {$X_4Y_3$};
\node[left] at (X2_Y3_.west) {$X_2Y_3$};
\node[left] at (X2_Y5_.west) {$X_2Y_5$};
\node[left] at (X2_Y1_.west) {$X_2Y_1$};
\node[right] at (Y4_X3.east) {$X_3Y_4$};
\node[right] at (Y2_X3_.east) {$X_3Y_2$};
\node[right] at (Y2_X5_.east) {$X_5Y_2$};
\node[right] at (Y2_X1_.east) {$X_1Y_2$};
\begin{scope}[every edge/.style={draw=black,very thick, color=fgraph}]
    \path [-] (X2_Y1_) edge (Y2_X3_);
    \path [-] (X2_Y1_) edge (Y2_X5_);
    \path [-] (X2_Y5_) edge (Y2_X3_);
    \path [-] (X2_Y3_) edge (Y2_X5_);
    \path [-] (X2_Y5_) edge (Y2_X1_);
    \path [-] (X2_Y3_) edge (Y2_X1_);
    \path [-] (X2_Y3_) edge (Y4_X3);
    \path [-] (X4_Y3) edge (Y2_X3_);
\end{scope}
\end{tikzpicture}
\to
\begin{tikzpicture}[baseline=4mm, scale=1.0] % [baseline=5mm]
\begin{scope}[every node/.style={circle,thick,draw,scale=0.5, color=fgraph}]
    \node (X4_Y3) at (0,-0.5) {};
    \node (X2_Y3_) at (0,0) {};
    \node (X2_Y5_) at (0,0.5) {};
    \node[fill = fgraph] (X2_Y1_) at (0,1) {};
    \node[fill=fgraph] (Y4_X3) at (1,-0.5) {};
    \node[fill=fgraph] (Y2_X3_) at (1,0) {};
    \node (Y2_X5_) at (1,0.5) {};
    \node[fill=fgraph] (Y2_X1_) at (1,1) {};
\end{scope}
\node[left] at (X4_Y3.west) {$X_4Y_3$};
\node[left] at (X2_Y3_.west) {$X_2Y_3$};
\node[left] at (X2_Y5_.west) {$X_2Y_5$};
\node[left] at (X2_Y1_.west) {$X_2Y_1$};
\node[right] at (Y4_X3.east) {$X_3Y_4$};
\node[right] at (Y2_X3_.east) {$X_3Y_2$};
\node[right] at (Y2_X5_.east) {$X_5Y_2$};
\node[right] at (Y2_X1_.east) {$X_1Y_2$};
\begin{scope}[every edge/.style={draw=black,very thick, color=fgraph}]
    \path [-] (X2_Y1_) edge (Y2_X3_);
    \path [-] (X2_Y1_) edge (Y2_X5_);
    \path [-] (X2_Y5_) edge (Y2_X3_);
    \path [-] (X2_Y3_) edge (Y2_X5_);
    \path [-] (X2_Y5_) edge (Y2_X1_);
    \path [-] (X2_Y3_) edge (Y2_X1_);
    \path [-] (X2_Y3_) edge (Y4_X3);
    \path [-] (X4_Y3) edge (Y2_X3_);
\end{scope}
\end{tikzpicture}
\to
\begin{tikzpicture}[baseline=4mm, scale=1.0] % [baseline=5mm]
\begin{scope}[every node/.style={circle,thick,draw,scale=0.5, color=fgraph}]
    \node (X4_Y3) at (0,-0.5) {};
    \node (X2_Y3_) at (0,0) {};
    \node (X2_Y5_) at (0,0.5) {};
    \node (X2_Y1_) at (0,1) {};
    \node[fill=fgraph] (Y4_X3) at (1,-0.5) {};
    \node[fill=fgraph] (Y2_X3_) at (1,0) {};
    \node (Y2_X5_) at (1,0.5) {};
    \node[fill=fgraph] (Y2_X1_) at (1,1) {};
\end{scope}
\node[left] at (X4_Y3.west) {$X_4Y_3$};
\node[left] at (X2_Y3_.west) {$X_2Y_3$};
\node[left] at (X2_Y5_.west) {$X_2Y_5$};
\node[left] at (X2_Y1_.west) {$X_2Y_1$};
\node[right] at (Y4_X3.east) {$X_3Y_4$};
\node[right] at (Y2_X3_.east) {$X_3Y_2$};
\node[right] at (Y2_X5_.east) {$X_5Y_2$};
\node[right] at (Y2_X1_.east) {$X_1Y_2$};
\begin{scope}[every edge/.style={draw=black,very thick, color=fgraph}]
    \path [-] (X2_Y1_) edge (Y2_X3_);
    \path [-] (X2_Y1_) edge (Y2_X5_);
    \path [-] (X2_Y5_) edge (Y2_X3_);
    \path [-] (X2_Y3_) edge (Y2_X5_);
    \path [-] (X2_Y5_) edge (Y2_X1_);
    \path [-] (X2_Y3_) edge (Y2_X1_);
    \path [-] (X2_Y3_) edge (Y4_X3);
    \path [-] (X4_Y3) edge (Y2_X3_);
\end{scope}
\end{tikzpicture}.
\end{align*}
The last colored frustration graph corresponds to the product $X_1 Y_4 = X_3 Y_4  \cdot X_3Y_2 \cdot X_1Y_2$; hence, $X_1 Y_4  \in \aa^\Sigma_2$. 
%By symmetry of the frustration graph and the interaction graph $\Sigma$, we see that the same argument applies to $Y_4 X_1$, $Y_4 X_5$ and  $X_4 Y_5$. Thus, we see that in the interaction graph $\Sigma$, the vertices $4$ and $1$, and $4$ and $5$ can be connected, leading to $\Sigma \sim_2 K_{2,3}$.
\end{proof}

\begin{lemma}
We have $X_1 X_4 \in \aa^\Sigma_{14}$.
\end{lemma}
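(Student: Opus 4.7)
The plan is to extend the colored-frustration-graph argument from Lemma \ref{lemmatree} by one additional operation. Recall that $\aa_{14} = \Lie{XY, YX, ZI, IZ}$, so $\aa_{14}^\Sigma$ is generated by the elements $X_iY_j, Y_iX_j$ for every edge $(i,j) \in \Sigma$, together with the single-qubit operators $Z_v$ for every vertex $v$ of $\Sigma$. The associated frustration graph $\Gamma(\aa_{14}^\Sigma)$ is obtained from $\Gamma(\aa_2^\Sigma)$ by adjoining the vertices $Z_1, \ldots, Z_5$ and drawing an edge from $Z_k$ to precisely those generators whose Pauli support contains site $k$. In particular, the $Z_k$'s have no edges among themselves.

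Starting from the terminal coloring $\mathcal{C}_0 = \{X_1 Y_2, X_3 Y_2, X_3 Y_4\}$ produced at the end of the proof of Lemma \ref{lemmatree} (whose product equals $X_1 Y_4$ up to a nonzero scalar), I would perform exactly one further add-vertex operation with $Z_4$. Among the colored vertices in $\mathcal{C}_0$, the generator $X_3 Y_4$ is the only one acting nontrivially on qubit $4$, so $Z_4$ is connected to exactly one colored vertex; this is odd, and Definition \ref{def:frust_ops} permits the operation. By Lemma \ref{lem:dla_col_fr}, the resulting coloring $\mathcal{C}_1 = \mathcal{C}_0 \cup \{Z_4\}$ produces an element of $\aa_{14}^\Sigma$ whose value is the Pauli-string product $X_1 Y_2 \cdot X_3 Y_2 \cdot X_3 Y_4 \cdot Z_4$.

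A componentwise multiplication (Eq.~\eqref{eq:componenwise}), using only the cancellations already implicit in Lemma \ref{lemmatree} together with $Y \cdot Z = iX$ at site $4$, shows that this product equals $i X_1 X_4$; hence $X_1 X_4 \in \aa_{14}^\Sigma$ up to a nonzero scalar. There is no substantive obstacle in this argument: the only point to verify is the parity of edges from $Z_4$ to $\mathcal{C}_0$ in the augmented frustration graph, which is immediate from the observation that $X_1 Y_2$ and $X_3 Y_2$ are supported on $\{1,2\}$ and $\{2,3\}$ respectively and thus commute with $Z_4$, while $X_3 Y_4$ anti-commutes with it.
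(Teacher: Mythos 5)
Your proof is correct, but it takes a genuinely different route from the paper's. You work with the generating set $\{X_iY_j,\, Y_iX_j,\, Z_v\}$ coming from $\aa_{14}=\Lie{XY,YX,ZI,IZ}$, note that the add/remove moves of Lemma \ref{lemmatree} remain valid after adjoining the $Z$-vertices (those moves only involve $XY$-type vertices, whose mutual adjacencies and parity counts are unchanged), and then finish with one extra move: adding $Z_4$, which is permitted since it meets exactly one colored vertex, namely $X_3Y_4$. The resulting product $X_1Y_2\cdot X_3Y_2\cdot X_3Y_4\cdot Z_4\equiv X_1Y_4\cdot Z_4\equiv X_1X_4$ then lies in $\aa^\Sigma_{14}$ by Lemma \ref{lem:dla_col_fr}. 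The paper's own proof instead uses the generating set $\{X_iX_j,\, Z_v\}$ from $\aa_{14}=\Lie{XX,ZI,IZ}$ and reaches the coloring for $X_1X_4= X_3X_4\cdot X_2X_3\cdot X_1X_2$ by an independent, longer sequence of moves, including the ``shortcut move'' of Eq.~\eqref{eq:move_vertex}. Your version is shorter and is essentially the frustration-graph rendering of $[Z_4,\,X_1Y_4]\equiv X_1X_4$ with $X_1Y_4\in\aa_2^\Sigma\subset\aa_{14}^\Sigma$, mirroring the main-text argument for the $k=14$ case of Lemma \ref{lem:deltak23}, at the cost of depending on Lemma \ref{lemmatree}; the paper's derivation is self-contained in the $\{XX,Z\}$ picture and also introduces the shortcut move that is reused in Appendix \ref{frust-eq567}.
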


\begin{proof}
%\paragraph*{Case $k = 14$.} 
The Lie algebra $\aa_{14}$ is generated by $\{XX,XY,YX\}$, or equivalently by $\{XX,ZI,IZ\}$, since $\aa_{14}=\Lie{XX,XY,YX} = \Lie{XX, ZI, IZ}$. 
Using the latter set of generators, the frustration graph $\Gamma(\aa^\Sigma_{14})$ is given by:
\begin{align*}
\Gamma(\aa^\Sigma_{14}) =\:
\begin{tikzpicture}[baseline=4mm, scale=2.0] % [baseline=5mm]
\begin{scope}[every node/.style={circle,thick,draw,scale=0.5, color=fgraph}]
    \node (X4-X3) at (0,-0.5) {};
    \node (X2-X3) at (0,0) {};
    \node (X2-X5) at (0,0.5) {};
    \node (X2-X1) at (0,1) {};
    \node (Z-4) at (1,-1) {};
    \node (Z-3) at (1,-0.5) {};
    \node (Z-5) at (1,0) {};
    \node (Z-1) at (1,0.5) {};
    \node (Z-2) at (1,1) {};
\end{scope}
\node[left] at (X4-X3.west) {$X_3X_4$};
\node[left] at (X2-X3.west) {$X_2X_3$};
\node[left] at (X2-X5.west) {$X_2X_5$};
\node[left] at (X2-X1.west) {$X_1X_2$};
\node[right] at (Z-4.east) {$Z_4$};
\node[right] at (Z-3.east) {$Z_3$};
\node[right] at (Z-5.east) {$Z_5$};
\node[right] at (Z-1.east) {$Z_1$};
\node[right] at (Z-2.east) {$Z_2$};
\begin{scope}[every edge/.style={draw=black,very thick, color=fgraph}]
    \path [-] (X2-X1) edge (Z-2);
    \path [-] (X2-X1) edge (Z-1);
    \path [-] (X2-X5) edge (Z-2);
    \path [-] (X2-X5) edge (Z-5);
    \path [-] (X2-X3) edge (Z-2);
    \path [-] (X2-X3) edge (Z-3);
    \path [-] (X4-X3) edge (Z-3);
    \path [-] (X4-X3) edge (Z-4);
\end{scope}
\end{tikzpicture}\;.
\end{align*}
%We need to show that we can create the edges $X_1 X_4$ and $X_4 X_5$ from the generators of the DLA $\aa^\Sigma_{14}$ to show that $\Sigma \sim_{14} {K_{2,3}}$.
%
Note that $X_1 X_4 = X_3 X_4 \cdot X_2 X_3 \cdot X_1 X_2$ corresponds to a coloring of the frustration graph.
We start with the single colored vertex $X_1 X_2$, and perform the following adding/removing operations:
\begin{align*}
\begin{tikzpicture}[baseline=4mm, scale=1.0] % [baseline=5mm]
\begin{scope}[every node/.style={circle,thick,draw,scale=0.5, color=fgraph}]
    \node (X4-X3) at (0,-0.5) {};
    \node (X2-X3) at (0,0) {};
    \node (X2-X5) at (0,0.5) {};
    \node[fill=fgraph] (X2-X1) at (0,1) {};
    \node (Z-4) at (1,-1) {};
    \node (Z-3) at (1,-0.5) {};
    \node (Z-5) at (1,0) {};
    \node (Z-1) at (1,0.5) {};
    \node (Z-2) at (1,1) {};
\end{scope}
\node[left] at (X4-X3.west) {$X_3X_4$};
\node[left] at (X2-X3.west) {$X_2X_3$};
\node[left] at (X2-X5.west) {$X_2X_5$};
\node[left] at (X2-X1.west) {$X_1X_2$};
\node[right] at (Z-4.east) {$Z_4$};
\node[right] at (Z-3.east) {$Z_3$};
\node[right] at (Z-5.east) {$Z_5$};
\node[right] at (Z-1.east) {$Z_1$};
\node[right] at (Z-2.east) {$Z_2$};
\begin{scope}[every edge/.style={draw=black,very thick, color=fgraph}]
    \path [-] (X2-X1) edge (Z-2);
    \path [-] (X2-X1) edge (Z-1);
    \path [-] (X2-X5) edge (Z-2);
    \path [-] (X2-X5) edge (Z-5);
    \path [-] (X2-X3) edge (Z-2);
    \path [-] (X2-X3) edge (Z-3);
    \path [-] (X4-X3) edge (Z-3);
    \path [-] (X4-X3) edge (Z-4);
\end{scope}
\end{tikzpicture}
\to
\begin{tikzpicture}[baseline=4mm, scale=1.0] % [baseline=5mm]
\begin{scope}[every node/.style={circle,thick,draw,scale=0.5, color=fgraph}]
    \node (X4-X3) at (0,-0.5) {};
    \node (X2-X3) at (0,0) {};
    \node (X2-X5) at (0,0.5) {};
    \node[fill=fgraph] (X2-X1) at (0,1) {};
    \node (Z-4) at (1,-1) {};
    \node (Z-3) at (1,-0.5) {};
    \node (Z-5) at (1,0) {};
    \node (Z-1) at (1,0.5) {};
    \node[fill=fgraph] (Z-2) at (1,1) {};
\end{scope}
\node[left] at (X4-X3.west) {$X_3X_4$};
\node[left] at (X2-X3.west) {$X_2X_3$};
\node[left] at (X2-X5.west) {$X_2X_5$};
\node[left] at (X2-X1.west) {$X_1X_2$};
\node[right] at (Z-4.east) {$Z_4$};
\node[right] at (Z-3.east) {$Z_3$};
\node[right] at (Z-5.east) {$Z_5$};
\node[right] at (Z-1.east) {$Z_1$};
\node[right] at (Z-2.east) {$Z_2$};
\begin{scope}[every edge/.style={draw=black,very thick, color=fgraph}]
    \path [-] (X2-X1) edge (Z-2);
    \path [-] (X2-X1) edge (Z-1);
    \path [-] (X2-X5) edge (Z-2);
    \path [-] (X2-X5) edge (Z-5);
    \path [-] (X2-X3) edge (Z-2);
    \path [-] (X2-X3) edge (Z-3);
    \path [-] (X4-X3) edge (Z-3);
    \path [-] (X4-X3) edge (Z-4);
\end{scope}
\end{tikzpicture}
\to
\begin{tikzpicture}[baseline=4mm, scale=1.0] % [baseline=5mm]
\begin{scope}[every node/.style={circle,thick,draw,scale=0.5, color=fgraph}]
    \node (X4-X3) at (0,-0.5) {};
    \node[fill=fgraph] (X2-X3) at (0,0) {};
    \node (X2-X5) at (0,0.5) {};
    \node[fill=fgraph] (X2-X1) at (0,1) {};
    \node (Z-4) at (1,-1) {};
    \node (Z-3) at (1,-0.5) {};
    \node (Z-5) at (1,0) {};
    \node (Z-1) at (1,0.5) {};
    \node[fill=fgraph] (Z-2) at (1,1) {};
\end{scope}
\node[left] at (X4-X3.west) {$X_3X_4$};
\node[left] at (X2-X3.west) {$X_2X_3$};
\node[left] at (X2-X5.west) {$X_2X_5$};
\node[left] at (X2-X1.west) {$X_1X_2$};
\node[right] at (Z-4.east) {$Z_4$};
\node[right] at (Z-3.east) {$Z_3$};
\node[right] at (Z-5.east) {$Z_5$};
\node[right] at (Z-1.east) {$Z_1$};
\node[right] at (Z-2.east) {$Z_2$};
\begin{scope}[every edge/.style={draw=black,very thick, color=fgraph}]
    \path [-] (X2-X1) edge (Z-2);
    \path [-] (X2-X1) edge (Z-1);
    \path [-] (X2-X5) edge (Z-2);
    \path [-] (X2-X5) edge (Z-5);
    \path [-] (X2-X3) edge (Z-2);
    \path [-] (X2-X3) edge (Z-3);
    \path [-] (X4-X3) edge (Z-3);
    \path [-] (X4-X3) edge (Z-4);
\end{scope}
\end{tikzpicture}
\to
\begin{tikzpicture}[baseline=4mm, scale=1.0] % [baseline=5mm]
\begin{scope}[every node/.style={circle,thick,draw,scale=0.5, color=fgraph}]
    \node (X4-X3) at (0,-0.5) {};
    \node[fill=fgraph] (X2-X3) at (0,0) {};
    \node[fill=fgraph] (X2-X5) at (0,0.5) {};
    \node[fill=fgraph] (X2-X1) at (0,1) {};
    \node (Z-4) at (1,-1) {};
    \node (Z-3) at (1,-0.5) {};
    \node (Z-5) at (1,0) {};
    \node (Z-1) at (1,0.5) {};
    \node[fill=fgraph] (Z-2) at (1,1) {};
\end{scope}
\node[left] at (X4-X3.west) {$X_3X_4$};
\node[left] at (X2-X3.west) {$X_2X_3$};
\node[left] at (X2-X5.west) {$X_2X_5$};
\node[left] at (X2-X1.west) {$X_1X_2$};
\node[right] at (Z-4.east) {$Z_4$};
\node[right] at (Z-3.east) {$Z_3$};
\node[right] at (Z-5.east) {$Z_5$};
\node[right] at (Z-1.east) {$Z_1$};
\node[right] at (Z-2.east) {$Z_2$};
\begin{scope}[every edge/.style={draw=black,very thick, color=fgraph}]
    \path [-] (X2-X1) edge (Z-2);
    \path [-] (X2-X1) edge (Z-1);
    \path [-] (X2-X5) edge (Z-2);
    \path [-] (X2-X5) edge (Z-5);
    \path [-] (X2-X3) edge (Z-2);
    \path [-] (X2-X3) edge (Z-3);
    \path [-] (X4-X3) edge (Z-3);
    \path [-] (X4-X3) edge (Z-4);
\end{scope}
\end{tikzpicture}
\to
\begin{tikzpicture}[baseline=4mm, scale=1.0] % [baseline=5mm]
\begin{scope}[every node/.style={circle,thick,draw,scale=0.5, color=fgraph}]
    \node (X4-X3) at (0,-0.5) {};
    \node[fill=fgraph] (X2-X3) at (0,0) {};
    \node[fill=fgraph] (X2-X5) at (0,0.5) {};
    \node[fill=fgraph] (X2-X1) at (0,1) {};
    \node (Z-4) at (1,-1) {};
    \node (Z-3) at (1,-0.5) {};
    \node (Z-5) at (1,0) {};
    \node (Z-1) at (1,0.5) {};
    \node (Z-2) at (1,1) {};
\end{scope}
\node[left] at (X4-X3.west) {$X_3X_4$};
\node[left] at (X2-X3.west) {$X_2X_3$};
\node[left] at (X2-X5.west) {$X_2X_5$};
\node[left] at (X2-X1.west) {$X_1X_2$};
\node[right] at (Z-4.east) {$Z_4$};
\node[right] at (Z-3.east) {$Z_3$};
\node[right] at (Z-5.east) {$Z_5$};
\node[right] at (Z-1.east) {$Z_1$};
\node[right] at (Z-2.east) {$Z_2$};
\begin{scope}[every edge/.style={draw=black,very thick, color=fgraph}]
    \path [-] (X2-X1) edge (Z-2);
    \path [-] (X2-X1) edge (Z-1);
    \path [-] (X2-X5) edge (Z-2);
    \path [-] (X2-X5) edge (Z-5);
    \path [-] (X2-X3) edge (Z-2);
    \path [-] (X2-X3) edge (Z-3);
    \path [-] (X4-X3) edge (Z-3);
    \path [-] (X4-X3) edge (Z-4);
\end{scope}
\end{tikzpicture}.
\end{align*}
In order to shorten the remaining part of this proof, observe that we can move the color from the vertex $X_1X_2$ to $Z_1$ by first adding $Z_1$ and then removing $X_1 X_2$:
\begin{align}\label{eq:move_vertex}
\begin{tikzpicture}[baseline=4mm, scale=1.0] % [baseline=5mm]
\begin{scope}[every node/.style={circle,thick,draw,scale=0.5, color=fgraph}]
    \node (X4-X3) at (0,-0.5) {};
    \node[fill=fgraph] (X2-X3) at (0,0) {};
    \node[fill=fgraph] (X2-X5) at (0,0.5) {};
    \node[fill=fgraph] (X2-X1) at (0,1) {};
    \node (Z-4) at (1,-1) {};
    \node (Z-3) at (1,-0.5) {};
    \node (Z-5) at (1,0) {};
    \node (Z-1) at (1,0.5) {};
    \node (Z-2) at (1,1) {};
\end{scope}
\node[left] at (X4-X3.west) {$X_3X_4$};
\node[left] at (X2-X3.west) {$X_2X_3$};
\node[left] at (X2-X5.west) {$X_2X_5$};
\node[left] at (X2-X1.west) {$X_1X_2$};
\node[right] at (Z-4.east) {$Z_4$};
\node[right] at (Z-3.east) {$Z_3$};
\node[right] at (Z-5.east) {$Z_5$};
\node[right] at (Z-1.east) {$Z_1$};
\node[right] at (Z-2.east) {$Z_2$};
\begin{scope}[every edge/.style={draw=black,very thick, color=fgraph}]
    \path [-] (X2-X1) edge (Z-2);
    \path [-] (X2-X1) edge (Z-1);
    \path [-] (X2-X5) edge (Z-2);
    \path [-] (X2-X5) edge (Z-5);
    \path [-] (X2-X3) edge (Z-2);
    \path [-] (X2-X3) edge (Z-3);
    \path [-] (X4-X3) edge (Z-3);
    \path [-] (X4-X3) edge (Z-4);
\end{scope}
\end{tikzpicture}
\to
\begin{tikzpicture}[baseline=4mm, scale=1.0] % [baseline=5mm]
\begin{scope}[every node/.style={circle,thick,draw,scale=0.5, color=fgraph}]
    \node (X4-X3) at (0,-0.5) {};
    \node[fill=fgraph] (X2-X3) at (0,0) {};
    \node[fill=fgraph] (X2-X5) at (0,0.5) {};
    \node[fill=fgraph] (X2-X1) at (0,1) {};
    \node (Z-4) at (1,-1) {};
    \node (Z-3) at (1,-0.5) {};
    \node (Z-5) at (1,0) {};
    \node[fill=fgraph] (Z-1) at (1,0.5) {};
    \node (Z-2) at (1,1) {};
\end{scope}
\node[left] at (X4-X3.west) {$X_3X_4$};
\node[left] at (X2-X3.west) {$X_2X_3$};
\node[left] at (X2-X5.west) {$X_2X_5$};
\node[left] at (X2-X1.west) {$X_1X_2$};
\node[right] at (Z-4.east) {$Z_4$};
\node[right] at (Z-3.east) {$Z_3$};
\node[right] at (Z-5.east) {$Z_5$};
\node[right] at (Z-1.east) {$Z_1$};
\node[right] at (Z-2.east) {$Z_2$};
\begin{scope}[every edge/.style={draw=black,very thick, color=fgraph}]
    \path [-] (X2-X1) edge (Z-2);
    \path [-] (X2-X1) edge (Z-1);
    \path [-] (X2-X5) edge (Z-2);
    \path [-] (X2-X5) edge (Z-5);
    \path [-] (X2-X3) edge (Z-2);
    \path [-] (X2-X3) edge (Z-3);
    \path [-] (X4-X3) edge (Z-3);
    \path [-] (X4-X3) edge (Z-4);
\end{scope}
\end{tikzpicture}
\to
\begin{tikzpicture}[baseline=4mm, scale=1.0] % [baseline=5mm]
\begin{scope}[every node/.style={circle,thick,draw,scale=0.5, color=fgraph}]
    \node (X4-X3) at (0,-0.5) {};
    \node[fill=fgraph] (X2-X3) at (0,0) {};
    \node[fill=fgraph] (X2-X5) at (0,0.5) {};
    \node (X2-X1) at (0,1) {};
    \node (Z-4) at (1,-1) {};
    \node (Z-3) at (1,-0.5) {};
    \node (Z-5) at (1,0) {};
    \node[fill=fgraph] (Z-1) at (1,0.5) {};
    \node (Z-2) at (1,1) {};
\end{scope}
\node[left] at (X4-X3.west) {$X_3X_4$};
\node[left] at (X2-X3.west) {$X_2X_3$};
\node[left] at (X2-X5.west) {$X_2X_5$};
\node[left] at (X2-X1.west) {$X_1X_2$};
\node[right] at (Z-4.east) {$Z_4$};
\node[right] at (Z-3.east) {$Z_3$};
\node[right] at (Z-5.east) {$Z_5$};
\node[right] at (Z-1.east) {$Z_1$};
\node[right] at (Z-2.east) {$Z_2$};
\begin{scope}[every edge/.style={draw=black,very thick, color=fgraph}]
    \path [-] (X2-X1) edge (Z-2);
    \path [-] (X2-X1) edge (Z-1);
    \path [-] (X2-X5) edge (Z-2);
    \path [-] (X2-X5) edge (Z-5);
    \path [-] (X2-X3) edge (Z-2);
    \path [-] (X2-X3) edge (Z-3);
    \path [-] (X4-X3) edge (Z-3);
    \path [-] (X4-X3) edge (Z-4);
\end{scope}
\end{tikzpicture}.
\end{align}
%
%Now that we have obtained 3 colored vertices, we can move them around
Using this shortcut move,  we move the colored vertices until we
%to 
obtain the term $X_1 X_4$ as follows: 
\begin{align*}
\begin{tikzpicture}[baseline=4mm, scale=1.0] % [baseline=5mm]
\begin{scope}[every node/.style={circle,thick,draw,scale=0.5, color=fgraph}]
    \node (X4-X3) at (0,-0.5) {};
    \node[fill=fgraph] (X2-X3) at (0,0) {};
    \node[fill=fgraph] (X2-X5) at (0,0.5) {};
    \node (X2-X1) at (0,1) {};
    \node (Z-4) at (1,-1) {};
    \node (Z-3) at (1,-0.5) {};
    \node (Z-5) at (1,0) {};
    \node[fill=fgraph] (Z-1) at (1,0.5) {};
    \node (Z-2) at (1,1) {};
\end{scope}
\node[left] at (X4-X3.west) {$X_3X_4$};
\node[left] at (X2-X3.west) {$X_2X_3$};
\node[left] at (X2-X5.west) {$X_2X_5$};
\node[left] at (X2-X1.west) {$X_1X_2$};
\node[right] at (Z-4.east) {$Z_4$};
\node[right] at (Z-3.east) {$Z_3$};
\node[right] at (Z-5.east) {$Z_5$};
\node[right] at (Z-1.east) {$Z_1$};
\node[right] at (Z-2.east) {$Z_2$};
\begin{scope}[every edge/.style={draw=black,very thick, color=fgraph}]
    \path [-] (X2-X1) edge (Z-2);
    \path [-] (X2-X1) edge (Z-1);
    \path [-] (X2-X5) edge (Z-2);
    \path [-] (X2-X5) edge (Z-5);
    \path [-] (X2-X3) edge (Z-2);
    \path [-] (X2-X3) edge (Z-3);
    \path [-] (X4-X3) edge (Z-3);
    \path [-] (X4-X3) edge (Z-4);
\end{scope}
\end{tikzpicture}
\to
\begin{tikzpicture}[baseline=4mm, scale=1.0] % [baseline=5mm]
\begin{scope}[every node/.style={circle,thick,draw,scale=0.5, color=fgraph}]
    \node (X4-X3) at (0,-0.5) {};
    \node (X2-X3) at (0,0) {};
    \node[fill=fgraph] (X2-X5) at (0,0.5) {};
    \node (X2-X1) at (0,1) {};
    \node (Z-4) at (1,-1) {};
    \node[fill=fgraph] (Z-3) at (1,-0.5) {};
    \node (Z-5) at (1,0) {};
    \node[fill=fgraph] (Z-1) at (1,0.5) {};
    \node (Z-2) at (1,1) {};
\end{scope}
\node[left] at (X4-X3.west) {$X_3X_4$};
\node[left] at (X2-X3.west) {$X_2X_3$};
\node[left] at (X2-X5.west) {$X_2X_5$};
\node[left] at (X2-X1.west) {$X_1X_2$};
\node[right] at (Z-4.east) {$Z_4$};
\node[right] at (Z-3.east) {$Z_3$};
\node[right] at (Z-5.east) {$Z_5$};
\node[right] at (Z-1.east) {$Z_1$};
\node[right] at (Z-2.east) {$Z_2$};
\begin{scope}[every edge/.style={draw=black,very thick, color=fgraph}]
    \path [-] (X2-X1) edge (Z-2);
    \path [-] (X2-X1) edge (Z-1);
    \path [-] (X2-X5) edge (Z-2);
    \path [-] (X2-X5) edge (Z-5);
    \path [-] (X2-X3) edge (Z-2);
    \path [-] (X2-X3) edge (Z-3);
    \path [-] (X4-X3) edge (Z-3);
    \path [-] (X4-X3) edge (Z-4);
\end{scope}
\end{tikzpicture}
\to
\begin{tikzpicture}[baseline=4mm, scale=1.0] % [baseline=5mm]
\begin{scope}[every node/.style={circle,thick,draw,scale=0.5, color=fgraph}]
    \node[fill=fgraph] (X4-X3) at (0,-0.5) {};
    \node (X2-X3) at (0,0) {};
    \node (X2-X5) at (0,0.5) {};
    \node (X2-X1) at (0,1) {};
    \node (Z-4) at (1,-1) {};
    \node (Z-3) at (1,-0.5) {};
    \node (Z-5) at (1,0) {};
    \node[fill=fgraph] (Z-1) at (1,0.5) {};
    \node[fill=fgraph] (Z-2) at (1,1) {};
\end{scope}
\node[left] at (X4-X3.west) {$X_3X_4$};
\node[left] at (X2-X3.west) {$X_2X_3$};
\node[left] at (X2-X5.west) {$X_2X_5$};
\node[left] at (X2-X1.west) {$X_1X_2$};
\node[right] at (Z-4.east) {$Z_4$};
\node[right] at (Z-3.east) {$Z_3$};
\node[right] at (Z-5.east) {$Z_5$};
\node[right] at (Z-1.east) {$Z_1$};
\node[right] at (Z-2.east) {$Z_2$};
\begin{scope}[every edge/.style={draw=black,very thick, color=fgraph}]
    \path [-] (X2-X1) edge (Z-2);
    \path [-] (X2-X1) edge (Z-1);
    \path [-] (X2-X5) edge (Z-2);
    \path [-] (X2-X5) edge (Z-5);
    \path [-] (X2-X3) edge (Z-2);
    \path [-] (X2-X3) edge (Z-3);
    \path [-] (X4-X3) edge (Z-3);
    \path [-] (X4-X3) edge (Z-4);
\end{scope}
\end{tikzpicture}
\to
\begin{tikzpicture}[baseline=4mm, scale=1.0] % [baseline=5mm]
\begin{scope}[every node/.style={circle,thick,draw,scale=0.5, color=fgraph}]
    \node[fill=fgraph] (X4-X3) at (0,-0.5) {};
    \node[fill=fgraph] (X2-X3) at (0,0) {};
    \node (X2-X5) at (0,0.5) {};
    \node (X2-X1) at (0,1) {};
    \node (Z-4) at (1,-1) {};
    \node (Z-3) at (1,-0.5) {};
    \node (Z-5) at (1,0) {};
    \node[fill=fgraph] (Z-1) at (1,0.5) {};
    \node (Z-2) at (1,1) {};
\end{scope}
\node[left] at (X4-X3.west) {$X_3X_4$};
\node[left] at (X2-X3.west) {$X_2X_3$};
\node[left] at (X2-X5.west) {$X_2X_5$};
\node[left] at (X2-X1.west) {$X_1X_2$};
\node[right] at (Z-4.east) {$Z_4$};
\node[right] at (Z-3.east) {$Z_3$};
\node[right] at (Z-5.east) {$Z_5$};
\node[right] at (Z-1.east) {$Z_1$};
\node[right] at (Z-2.east) {$Z_2$};
\begin{scope}[every edge/.style={draw=black,very thick, color=fgraph}]
    \path [-] (X2-X1) edge (Z-2);
    \path [-] (X2-X1) edge (Z-1);
    \path [-] (X2-X5) edge (Z-2);
    \path [-] (X2-X5) edge (Z-5);
    \path [-] (X2-X3) edge (Z-2);
    \path [-] (X2-X3) edge (Z-3);
    \path [-] (X4-X3) edge (Z-3);
    \path [-] (X4-X3) edge (Z-4);
\end{scope}
\end{tikzpicture}
\to
\begin{tikzpicture}[baseline=4mm, scale=1.0] % [baseline=5mm]
\begin{scope}[every node/.style={circle,thick,draw,scale=0.5, color=fgraph}]
    \node[fill=fgraph] (X4-X3) at (0,-0.5) {};
    \node[fill=fgraph] (X2-X3) at (0,0) {};
    \node (X2-X5) at (0,0.5) {};
    \node[fill=fgraph] (X2-X1) at (0,1) {};
    \node (Z-4) at (1,-1) {};
    \node (Z-3) at (1,-0.5) {};
    \node (Z-5) at (1,0) {};
    \node (Z-1) at (1,0.5) {};
    \node (Z-2) at (1,1) {};
\end{scope}
\node[left] at (X4-X3.west) {$X_3X_4$};
\node[left] at (X2-X3.west) {$X_2X_3$};
\node[left] at (X2-X5.west) {$X_2X_5$};
\node[left] at (X2-X1.west) {$X_1X_2$};
\node[right] at (Z-4.east) {$Z_4$};
\node[right] at (Z-3.east) {$Z_3$};
\node[right] at (Z-5.east) {$Z_5$};
\node[right] at (Z-1.east) {$Z_1$};
\node[right] at (Z-2.east) {$Z_2$};
\begin{scope}[every edge/.style={draw=black,very thick, color=fgraph}]
    \path [-] (X2-X1) edge (Z-2);
    \path [-] (X2-X1) edge (Z-1);
    \path [-] (X2-X5) edge (Z-2);
    \path [-] (X2-X5) edge (Z-5);
    \path [-] (X2-X3) edge (Z-2);
    \path [-] (X2-X3) edge (Z-3);
    \path [-] (X4-X3) edge (Z-3);
    \path [-] (X4-X3) edge (Z-4);
\end{scope}
\end{tikzpicture}.
\end{align*}
We conclude that $X_1 X_4 \in \aa^\Sigma_{14}$. 
%By the symmetry of the frustration graph, we see that a similar approach can give $X_4 X_5$. Thus $\Sigma \sim_{14} K_{2,3}$.
\end{proof}

\ifTikz

\subsection{Alternative proofs of Eqs. \eqref{eq:Y1X3}, \eqref{eq:X1X3} and \eqref{eq:Z1Z3}}\label{frust-eq567}

In this subsection, using the colored frustration graph operations of adding and removing a vertex given in  Definition~\ref{def:frust_ops},
we will derive the claims from the proof of Lemma \ref{lem:k3plusk4}: 
$X_3 Y_1 \in \aa^\Omega_2$, $X_1X_3 \in \aa^\Omega_4$, $Z_1Z_3 \in \aa^\Omega_6$, and $X_1X_3 \in \aa^\Omega_{14}$.
%by using the colored frustration graph operations add and remove vertex given in  Definition~\ref{def:frust}.

\begin{lemma}\label{lemmacyc}
    % For the cases $k=2,4,6,14$ 
    % %the DLA for the 3-cycle equals the DLA of the complete graph $\aa^{3-\mathrm{cycle}}_k = \aa^\pi_k(4)$.
    % we have $\aa^\Omega_k = \aa^{K_4}_k$, i.e. $\Omega \sim_k K_4$.
We have $X_3 Y_1 \in \aa^\Omega_2$.
\end{lemma}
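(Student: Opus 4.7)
The plan is to apply the colored frustration graph calculus developed in Definition~\ref{def:frust_ops} and Lemma~\ref{lem:dla_col_fr}, mirroring the strategy used to prove Lemma~\ref{lemmatree}. The graph $\Omega$ is a triangle on $\{1,2,4\}$ with a pendant vertex $3$ attached to $2$, so $\aa^\Omega_2$ is generated by the eight Pauli strings $X_iY_j$ and $Y_iX_j$ for $(i,j)\in\{(1,2),(2,3),(2,4),(1,4)\}$.

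First, I would construct the frustration graph $\Gamma(\aa^\Omega_2)$ explicitly. Two generators $A_iB_j$ and $C_kD_l$ with $A,B,C,D\in\{X,Y\}$, $A\neq B$, $C\neq D$, anti-commute precisely when they share exactly one qubit and carry different Pauli matrices on it. A direct check gives an eight-vertex graph in which each generator has degree three.

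Next, I would identify a coloring of $\Gamma(\aa^\Omega_2)$ whose associated product equals $X_3Y_1=Y_1X_3$ up to a nonzero scalar. A convenient choice is $\mathcal{C}=\{Y_1X_2,\,Y_2X_3,\,X_1Y_4,\,X_2Y_4,\,X_1Y_2\}$: multiplying these five Pauli strings componentwise and simplifying via the relations \eqref{eq:multiplication_table} yields $Y_1X_3$ up to a scalar. This is precisely the product one obtains by rewriting the nested commutator in Eq.~\eqref{eq:Y1X3} through Eq.~\eqref{eq:basis_element}.

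Finally, I would exhibit $\mathcal{C}$ as reachable from a single colored vertex. Starting from $\{Y_1X_2\}$, the operations of Definition~\ref{def:frust_ops} successively add $Y_2X_3$, then $X_1Y_4$, then $X_2Y_4$, and finally $X_1Y_2$. At each stage the newly added vertex is connected to exactly one previously colored vertex, so the move is legal; Lemma~\ref{lem:dla_col_fr} then yields $X_3Y_1\in\aa^\Omega_2$. The only real obstacle is the combinatorial bookkeeping in verifying the parity condition at each step, which is routine once the eight-vertex frustration graph has been drawn and is entirely analogous to the verification carried out in Lemma~\ref{lemmatree}.
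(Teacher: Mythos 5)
Your proposal is correct and essentially identical to the paper's own proof: you use the same coloring $\{X_2Y_1, X_3Y_2, X_1Y_4, X_2Y_4, X_1Y_2\}$ of $\Gamma(\aa_2^\Omega)$ and the same sequence of add-vertex moves starting from $Y_1X_2$, and at each step the new vertex indeed meets exactly one colored vertex, so Lemma~\ref{lem:dla_col_fr} applies. The only inaccuracy is your side remark that every vertex of the frustration graph has degree three; in fact four of the eight generators (e.g.\ $X_2Y_3$ and $X_3Y_2$) have degree two, but this does not affect your parity checks or the validity of the argument.
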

%\EK{This one needs to be corrected}
\begin{proof}
% Note that since the $\Omega$ is not bipartite, $\aa^\Omega_2$ and $\aa_4^\Omega$ are not isomorphic anymore, and thus we need to 
% % %
% % we cannot use the previous equalities $\aa^\Gamma_2(n)=\aa^\Gamma_4(n)$ and $\aa^\Gamma_6(n) = \aa_7(n)$, hence we have to 
% check all four cases.
Recall that the Lie algebra $\aa^\Omega_2$ is generated by all $X_iY_j$, where $(i,j)$ are the edges of the interaction graph $\Omega$ from Lemma \ref{lem:k3plusk4}. These generators give the frustration graph
\begin{align*}
\Gamma(\aa^\Omega_2) =\:
\begin{tikzpicture}[baseline=4mm, scale=2.0] % [baseline=5mm]
\begin{scope}[every node/.style={circle,thick,draw,scale=0.5, color=fgraph}]
    \node (X1Y4) at (0,-0.5) {};
    \node (X3Y4) at (0,0) {};
    \node (X1Y3) at (0,0.5) {};
    \node (X1Y2) at (0,1) {};
    \node (Y1X4) at (1,-0.5) {};
    \node (Y3X4) at (1,0.0) {};
    \node (Y1X3) at (1,0.5) {};
    \node (Y1X2) at (1,1) {};
\end{scope}
\node[left] at (X1Y4.west) {$X_2Y_4$};
\node[left] at (X3Y4.west) {$X_1Y_4$};
\node[left] at (X1Y3.west) {$X_2Y_1$};
\node[left] at (X1Y2.west) {$X_2Y_3$};
\node[right] at (Y1X4.east) {$X_4Y_2$};
\node[right] at (Y3X4.east) {$X_4Y_1$};
\node[right] at (Y1X3.east) {$X_1Y_2$};
\node[right] at (Y1X2.east) {$X_3Y_2$};
\begin{scope}[every edge/.style={draw=black,very thick, color=fgraph}]
    \path [-] (X1Y2) edge (Y1X3);
    \path [-] (X1Y2) edge (Y1X4);
    \path [-] (X1Y3) edge (Y1X2);
    \path [-] (X1Y3) edge (Y1X4);
    \path [-] (X1Y4) edge (Y1X2);
    \path [-] (X1Y4) edge (Y1X3);
    \path [-] (X1Y4) edge (Y3X4);
    \path [-] (Y1X4) edge (X3Y4);
    \path [-] (X1Y3) edge(X3Y4);
    \path [-] (Y1X3) edge (Y3X4);
\end{scope}
\end{tikzpicture}\;.
\end{align*}
%We need to show that we can create all the edges of the complete graph $K_4$, which means showing that we can create the elements $X_3Y_1$, $X_3 Y_4$, $X_1Y_3$ and  $X_4Y_3$. 
We note that
$X_3 Y_1 = X_1Y_2 \cdot X_2Y_4 \cdot X_1Y_4 \cdot X_3Y_2 \cdot X_2Y_1$. 
We can obtain the corresponding colored frustration graph from a graph with a single colored vertex by adding vertices as follows:
\begin{align*}
\begin{tikzpicture}[baseline=4mm, scale=1.0] % [baseline=5mm]
\begin{scope}[every node/.style={circle,thick,draw,scale=0.5, color=fgraph}]
    \node (X1Y4) at (0,-0.5) {};
    \node (X3Y4) at (0,0) {};
    \node[fill=fgraph] (X1Y3) at (0,0.5) {};
    \node (X1Y2) at (0,1) {};
    \node (Y1X4) at (1,-0.5) {};
    \node (Y3X4) at (1,0.0) {};
    \node (Y1X3) at (1,0.5) {};
    \node (Y1X2) at (1,1) {};
\end{scope}
\node[left] at (X1Y4.west) {$X_2Y_4$};
\node[left] at (X3Y4.west) {$X_1Y_4$};
\node[left] at (X1Y3.west) {$X_2Y_1$};
\node[left] at (X1Y2.west) {$X_2Y_3$};
\node[right] at (Y1X4.east) {$X_4Y_2$};
\node[right] at (Y3X4.east) {$X_4Y_1$};
\node[right] at (Y1X3.east) {$X_1Y_2$};
\node[right] at (Y1X2.east) {$X_3Y_2$};
\begin{scope}[every edge/.style={draw=black,very thick, color=fgraph}]
    \path [-] (X1Y2) edge (Y1X3);
    \path [-] (X1Y2) edge (Y1X4);
    \path [-] (X1Y3) edge (Y1X2);
    \path [-] (X1Y3) edge (Y1X4);
    \path [-] (X1Y4) edge (Y1X2);
    \path [-] (X1Y4) edge (Y1X3);
    \path [-] (X1Y4) edge (Y3X4);
    \path [-] (Y1X4) edge (X3Y4);
    \path [-] (X1Y3) edge(X3Y4);
    \path [-] (Y1X3) edge (Y3X4);
\end{scope}
\end{tikzpicture}
\to
\begin{tikzpicture}[baseline=4mm, scale=1.0] % [baseline=5mm]
\begin{scope}[every node/.style={circle,thick,draw,scale=0.5, color=fgraph}]
    \node (X1Y4) at (0,-0.5) {};
    \node (X3Y4) at (0,0) {};
    \node[fill=fgraph] (X1Y3) at (0,0.5) {};
    \node (X1Y2) at (0,1) {};
    \node (Y1X4) at (1,-0.5) {};
    \node (Y3X4) at (1,0.0) {};
    \node (Y1X3) at (1,0.5) {};
    \node[fill=fgraph] (Y1X2) at (1,1) {};
\end{scope}
\node[left] at (X1Y4.west) {$X_2Y_4$};
\node[left] at (X3Y4.west) {$X_1Y_4$};
\node[left] at (X1Y3.west) {$X_2Y_1$};
\node[left] at (X1Y2.west) {$X_2Y_3$};
\node[right] at (Y1X4.east) {$X_4Y_2$};
\node[right] at (Y3X4.east) {$X_4Y_1$};
\node[right] at (Y1X3.east) {$X_1Y_2$};
\node[right] at (Y1X2.east) {$X_3Y_2$};
\begin{scope}[every edge/.style={draw=black,very thick, color=fgraph}]
    \path [-] (X1Y2) edge (Y1X3);
    \path [-] (X1Y2) edge (Y1X4);
    \path [-] (X1Y3) edge (Y1X2);
    \path [-] (X1Y3) edge (Y1X4);
    \path [-] (X1Y4) edge (Y1X2);
    \path [-] (X1Y4) edge (Y1X3);
    \path [-] (X1Y4) edge (Y3X4);
    \path [-] (Y1X4) edge (X3Y4);
    \path [-] (X1Y3) edge(X3Y4);
    \path [-] (Y1X3) edge (Y3X4);
\end{scope}
\end{tikzpicture}
\to
\begin{tikzpicture}[baseline=4mm, scale=1.0] % [baseline=5mm]
\begin{scope}[every node/.style={circle,thick,draw,scale=0.5, color=fgraph}]
    \node (X1Y4) at (0,-0.5) {};
    \node[fill=fgraph] (X3Y4) at (0,0) {};
    \node[fill=fgraph] (X1Y3) at (0,0.5) {};
    \node (X1Y2) at (0,1) {};
    \node (Y1X4) at (1,-0.5) {};
    \node (Y3X4) at (1,0.0) {};
    \node (Y1X3) at (1,0.5) {};
    \node[fill=fgraph] (Y1X2) at (1,1) {};
\end{scope}
\node[left] at (X1Y4.west) {$X_2Y_4$};
\node[left] at (X3Y4.west) {$X_1Y_4$};
\node[left] at (X1Y3.west) {$X_2Y_1$};
\node[left] at (X1Y2.west) {$X_2Y_3$};
\node[right] at (Y1X4.east) {$X_4Y_2$};
\node[right] at (Y3X4.east) {$X_4Y_1$};
\node[right] at (Y1X3.east) {$X_1Y_2$};
\node[right] at (Y1X2.east) {$X_3Y_2$};
\begin{scope}[every edge/.style={draw=black,very thick, color=fgraph}]
    \path [-] (X1Y2) edge (Y1X3);
    \path [-] (X1Y2) edge (Y1X4);
    \path [-] (X1Y3) edge (Y1X2);
    \path [-] (X1Y3) edge (Y1X4);
    \path [-] (X1Y4) edge (Y1X2);
    \path [-] (X1Y4) edge (Y1X3);
    \path [-] (X1Y4) edge (Y3X4);
    \path [-] (Y1X4) edge (X3Y4);
    \path [-] (X1Y3) edge(X3Y4);
    \path [-] (Y1X3) edge (Y3X4);
\end{scope}
\end{tikzpicture}
\to
\begin{tikzpicture}[baseline=4mm, scale=1.0] % [baseline=5mm]
\begin{scope}[every node/.style={circle,thick,draw,scale=0.5, color=fgraph}]
    \node[fill=fgraph] (X1Y4) at (0,-0.5) {};
    \node[fill=fgraph] (X3Y4) at (0,0) {};
    \node[fill=fgraph] (X1Y3) at (0,0.5) {};
    \node (X1Y2) at (0,1) {};
    \node (Y1X4) at (1,-0.5) {};
    \node (Y3X4) at (1,0.0) {};
    \node (Y1X3) at (1,0.5) {};
    \node[fill=fgraph] (Y1X2) at (1,1) {};
\end{scope}
\node[left] at (X1Y4.west) {$X_2Y_4$};
\node[left] at (X3Y4.west) {$X_1Y_4$};
\node[left] at (X1Y3.west) {$X_2Y_1$};
\node[left] at (X1Y2.west) {$X_2Y_3$};
\node[right] at (Y1X4.east) {$X_4Y_2$};
\node[right] at (Y3X4.east) {$X_4Y_1$};
\node[right] at (Y1X3.east) {$X_1Y_2$};
\node[right] at (Y1X2.east) {$X_3Y_2$};
\begin{scope}[every edge/.style={draw=black,very thick, color=fgraph}]
    \path [-] (X1Y2) edge (Y1X3);
    \path [-] (X1Y2) edge (Y1X4);
    \path [-] (X1Y3) edge (Y1X2);
    \path [-] (X1Y3) edge (Y1X4);
    \path [-] (X1Y4) edge (Y1X2);
    \path [-] (X1Y4) edge (Y1X3);
    \path [-] (X1Y4) edge (Y3X4);
    \path [-] (Y1X4) edge (X3Y4);
    \path [-] (X1Y3) edge(X3Y4);
    \path [-] (Y1X3) edge (Y3X4);
\end{scope}
\end{tikzpicture}
\to
\begin{tikzpicture}[baseline=4mm, scale=1.0] % [baseline=5mm]
\begin{scope}[every node/.style={circle,thick,draw,scale=0.5, color=fgraph}]
    \node[fill=fgraph] (X1Y4) at (0,-0.5) {};
    \node[fill=fgraph] (X3Y4) at (0,0) {};
    \node[fill=fgraph] (X1Y3) at (0,0.5) {};
    \node (X1Y2) at (0,1) {};
    \node (Y1X4) at (1,-0.5) {};
    \node (Y3X4) at (1,0.0) {};
    \node[fill=fgraph] (Y1X3) at (1,0.5) {};
    \node[fill=fgraph] (Y1X2) at (1,1) {};
\end{scope}
\node[left] at (X1Y4.west) {$X_2Y_4$};
\node[left] at (X3Y4.west) {$X_1Y_4$};
\node[left] at (X1Y3.west) {$X_2Y_1$};
\node[left] at (X1Y2.west) {$X_2Y_3$};
\node[right] at (Y1X4.east) {$X_4Y_2$};
\node[right] at (Y3X4.east) {$X_4Y_1$};
\node[right] at (Y1X3.east) {$X_1Y_2$};
\node[right] at (Y1X2.east) {$X_3Y_2$};
\begin{scope}[every edge/.style={draw=black,very thick, color=fgraph}]
    \path [-] (X1Y2) edge (Y1X3);
    \path [-] (X1Y2) edge (Y1X4);
    \path [-] (X1Y3) edge (Y1X2);
    \path [-] (X1Y3) edge (Y1X4);
    \path [-] (X1Y4) edge (Y1X2);
    \path [-] (X1Y4) edge (Y1X3);
    \path [-] (X1Y4) edge (Y3X4);
    \path [-] (Y1X4) edge (X3Y4);
    \path [-] (X1Y3) edge(X3Y4);
    \path [-] (Y1X3) edge (Y3X4);
\end{scope}
\end{tikzpicture}
\end{align*}
%This colored frustration graph corresponds to the product $X_3 Y_1 \equiv X_3Y_2 \cdot X_4Y_1 \cdot X_4Y_2$, hence we see that $X_3 Y_1$ can be generated with the generators of $\aa_2^\Omega$, which then yields 
Therefore, $X_3 Y_1 \in \aa^\Omega_2$. 
%The other cases follow by symmetry of the interaction graph. We conclude that $\Omega \sim_2 K_4$.
\end{proof}

\begin{lemma}
We have $X_1X_3 \in \aa^\Omega_4$.
\end{lemma}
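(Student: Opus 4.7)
The plan is to mirror Lemma~\ref{lemmatree}: build the frustration graph $\Gamma(\aa^\Omega_4)$ and exhibit an explicit sequence of add/remove operations (Definition~\ref{def:frust_ops}) starting from a single colored vertex and ending at a coloring whose associated product of generators equals $X_1X_3$ up to a nonzero scalar. By Lemma~\ref{lem:dla_col_fr}, this then implies $X_1X_3 \in \aa^\Omega_4$.

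First I would draw $\Gamma(\aa^\Omega_4)$, whose ten vertices are $X_iX_j$ and $Y_iY_j$ for each edge $(i,j)$ of $\Omega$, i.e., for $(i,j)\in\{(1,2),(2,3),(2,4),(3,4),(1,4)\}$. An edge of the frustration graph joins two generators exactly when they anti-commute, which for these particular generators happens precisely when their two unordered index sets meet in a single qubit and the Pauli letters placed on that qubit differ.

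Next, I would use the decomposition $X_1X_3 \equiv X_1X_4 \cdot X_2X_3 \cdot X_2X_4$ (obtained by componentwise multiplication via \eqref{eq:componenwise}) to identify the target coloring $\mathcal{C}=\{X_1X_4,\,X_2X_3,\,X_2X_4\}$. These three vertices pairwise commute, so $\mathcal{C}$ cannot be reached directly from any single-vertex coloring. Following the commutator chain in Eq.~\eqref{eq:X1X3}, I would instead seed the process with the colored vertex $Y_1Y_2$ (which anti-commutes with each of the three targets through qubit $1$ or qubit $2$) and perform, in order: add $X_1X_4$, add $X_2X_3$, add $X_2X_4$, and finally remove $Y_1Y_2$. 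At each step the vertex being modified has an odd number of colored neighbors, so the operation is legal; after the removal, the coloring is exactly $\mathcal{C}$, which is what we want.

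The main obstacle is purely bookkeeping: one must verify at each of the four stages that the vertex being added or removed has odd parity of colored neighbors in the current coloring, which requires examining the relevant anti-commutation relations inside the ten-vertex frustration graph. Presenting this explicitly will take four diagrams analogous to those in Lemma~\ref{lemmatree}. The conceptual point behind the choice of sequence is that $Y_1Y_2$ acts as a catalyst bridging the three mutually commuting generators of $\mathcal{C}$; since no two of these three target vertices anti-commute, such a bridging vertex is necessary for any valid sequence of operations starting from a single colored seed.
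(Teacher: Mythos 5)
Your proposal is correct and takes essentially the same route as the paper: the same target coloring $\{X_2X_3,\,X_1X_4,\,X_2X_4\}$ with $Y_1Y_2$ as the bridging (catalyst) vertex, your add/remove order being just the inside-out reading of Eq.~\eqref{eq:X1X3} (the paper's appendix derivation seeds with $X_2X_3$ and adds $Y_1Y_2$ first, which is equivalent), and each of your four steps indeed has odd parity of colored neighbors. One slip in your setup: $\Omega$ has only the four edges $(1,2),(2,3),(2,4),(1,4)$ --- $(3,4)$ is not among them, so $\Gamma(\aa_4^\Omega)$ has eight vertices, not ten; this is harmless because your sequence never uses generators on $(3,4)$, but invoking that edge would have been circular, since it is one of the edges being added to reach $K_4$.
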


\begin{proof}
The Lie algebra $\aa^\Omega_4$ is generated by placing $XX$ and $YY$ on the edges of the interaction graph $\Omega$, giving rise to
the frustration graph
\begin{align*}
\Gamma(\aa_4^\Omega) =\:
\begin{tikzpicture}[baseline=4mm, scale=2.0] % [baseline=5mm]
\begin{scope}[every node/.style={circle,thick,draw,scale=0.5, color=fgraph}]
    \node (X1X4) at (0,-0.5) {};
    \node (X3X4) at (0,0) {};
    \node (X1X3) at (0,0.5) {};
    \node (X1X2) at (0,1) {};
    \node (Y1Y4) at (1,-0.5) {};
    \node (Y3Y4) at (1,0) {};
    \node (Y1Y3) at (1,0.5) {};
    \node (Y1Y2) at (1,1) {};
\end{scope}
\node[left] at (X1X4.west) {$X_2X_4$};
\node[left] at (X3X4.west) {$X_1X_4$};
\node[left] at (X1X3.west) {$X_1X_2$};
\node[left] at (X1X2.west) {$X_2X_3$};
\node[right] at (Y1Y4.east) {$Y_2Y_4$};
\node[right] at (Y3Y4.east) {$Y_1Y_4$};
\node[right] at (Y1Y3.east) {$Y_1Y_2$};
\node[right] at (Y1Y2.east) {$Y_2Y_3$};
\begin{scope}[every edge/.style={draw=black,very thick, color=fgraph}]
    \path [-] (X1X2) edge (Y1Y3);
    \path [-] (X1X2) edge (Y1Y4);
    \path [-] (X1X3) edge (Y1Y2);
    \path [-] (X1X3) edge (Y3Y4);
    \path [-] (X1X3) edge (Y1Y4);
    \path [-] (X3X4) edge (Y1Y3);
    \path [-] (X1X4) edge (Y1Y2);
    \path [-] (X1X4) edge (Y1Y3);
    \path [-] (X3X4) edge (Y1Y4);
    \path [-] (X1X4) edge (Y3Y4);
\end{scope}
\end{tikzpicture}\;.
\end{align*}
We want to obtain the colored frustration graph of
$X_1X_3 = X_2 X_3 \cdot X_1 X_4 \cdot X_2 X_4$ by add/remove vertex operations. Starting with the generator $X_2 X_3$, we perform the following sequence:
\begin{align*}
\begin{tikzpicture}[baseline=4mm, scale=1.0] % [baseline=5mm]
\begin{scope}[every node/.style={circle,thick,draw,scale=0.5, color=fgraph}]
    \node (X1X4) at (0,-0.5) {};
    \node (X3X4) at (0,0) {};
    \node (X1X3) at (0,0.5) {};
    \node[fill=fgraph] (X1X2) at (0,1) {};
    \node (Y1Y4) at (1,-0.5) {};
    \node (Y3Y4) at (1,0) {};
    \node (Y1Y3) at (1,0.5) {};
    \node (Y1Y2) at (1,1) {};
\end{scope}
\node[left] at (X1X4.west) {$X_2X_4$};
\node[left] at (X3X4.west) {$X_1X_4$};
\node[left] at (X1X3.west) {$X_1X_2$};
\node[left] at (X1X2.west) {$X_2X_3$};
\node[right] at (Y1Y4.east) {$Y_2Y_4$};
\node[right] at (Y3Y4.east) {$Y_1Y_4$};
\node[right] at (Y1Y3.east) {$Y_1Y_2$};
\node[right] at (Y1Y2.east) {$Y_2Y_3$};
\begin{scope}[every edge/.style={draw=black,very thick, color=fgraph}]
    \path [-] (X1X2) edge (Y1Y3);
    \path [-] (X1X2) edge (Y1Y4);
    \path [-] (X1X3) edge (Y1Y2);
    \path [-] (X1X3) edge (Y3Y4);
    \path [-] (X1X3) edge (Y1Y4);
    \path [-] (X3X4) edge (Y1Y3);
    \path [-] (X1X4) edge (Y1Y2);
    \path [-] (X1X4) edge (Y1Y3);
    \path [-] (X3X4) edge (Y1Y4);
    \path [-] (X1X4) edge (Y3Y4);
\end{scope}
\end{tikzpicture}
\to
\begin{tikzpicture}[baseline=4mm, scale=1.0] % [baseline=5mm]
\begin{scope}[every node/.style={circle,thick,draw,scale=0.5, color=fgraph}]
    \node (X1X4) at (0,-0.5) {};
    \node (X3X4) at (0,0) {};
    \node (X1X3) at (0,0.5) {};
    \node[fill=fgraph] (X1X2) at (0,1) {};
    \node (Y1Y4) at (1,-0.5) {};
    \node (Y3Y4) at (1,0) {};
    \node[fill=fgraph] (Y1Y3) at (1,0.5) {};
    \node (Y1Y2) at (1,1) {};
\end{scope}
\node[left] at (X1X4.west) {$X_2X_4$};
\node[left] at (X3X4.west) {$X_1X_4$};
\node[left] at (X1X3.west) {$X_1X_2$};
\node[left] at (X1X2.west) {$X_2X_3$};
\node[right] at (Y1Y4.east) {$Y_2Y_4$};
\node[right] at (Y3Y4.east) {$Y_1Y_4$};
\node[right] at (Y1Y3.east) {$Y_1Y_2$};
\node[right] at (Y1Y2.east) {$Y_2Y_3$};
\begin{scope}[every edge/.style={draw=black,very thick, color=fgraph}]
    \path [-] (X1X2) edge (Y1Y3);
    \path [-] (X1X2) edge (Y1Y4);
    \path [-] (X1X3) edge (Y1Y2);
    \path [-] (X1X3) edge (Y3Y4);
    \path [-] (X1X3) edge (Y1Y4);
    \path [-] (X3X4) edge (Y1Y3);
    \path [-] (X1X4) edge (Y1Y2);
    \path [-] (X1X4) edge (Y1Y3);
    \path [-] (X3X4) edge (Y1Y4);
    \path [-] (X1X4) edge (Y3Y4);
\end{scope}
\end{tikzpicture}
\to
\begin{tikzpicture}[baseline=4mm, scale=1.0] % [baseline=5mm]
\begin{scope}[every node/.style={circle,thick,draw,scale=0.5, color=fgraph}]
    \node[fill=fgraph] (X1X4) at (0,-0.5) {};
    \node (X3X4) at (0,0) {};
    \node (X1X3) at (0,0.5) {};
    \node[fill=fgraph] (X1X2) at (0,1) {};
    \node (Y1Y4) at (1,-0.5) {};
    \node (Y3Y4) at (1,0) {};
    \node[fill=fgraph] (Y1Y3) at (1,0.5) {};
    \node (Y1Y2) at (1,1) {};
\end{scope}
\node[left] at (X1X4.west) {$X_2X_4$};
\node[left] at (X3X4.west) {$X_1X_4$};
\node[left] at (X1X3.west) {$X_1X_2$};
\node[left] at (X1X2.west) {$X_2X_3$};
\node[right] at (Y1Y4.east) {$Y_2Y_4$};
\node[right] at (Y3Y4.east) {$Y_1Y_4$};
\node[right] at (Y1Y3.east) {$Y_1Y_2$};
\node[right] at (Y1Y2.east) {$Y_2Y_3$};
\begin{scope}[every edge/.style={draw=black,very thick, color=fgraph}]
    \path [-] (X1X2) edge (Y1Y3);
    \path [-] (X1X2) edge (Y1Y4);
    \path [-] (X1X3) edge (Y1Y2);
    \path [-] (X1X3) edge (Y3Y4);
    \path [-] (X1X3) edge (Y1Y4);
    \path [-] (X3X4) edge (Y1Y3);
    \path [-] (X1X4) edge (Y1Y2);
    \path [-] (X1X4) edge (Y1Y3);
    \path [-] (X3X4) edge (Y1Y4);
    \path [-] (X1X4) edge (Y3Y4);
\end{scope}
\end{tikzpicture}
\to
\begin{tikzpicture}[baseline=4mm, scale=1.0] % [baseline=5mm]
\begin{scope}[every node/.style={circle,thick,draw,scale=0.5, color=fgraph}]
    \node[fill=fgraph] (X1X4) at (0,-0.5) {};
    \node[fill=fgraph] (X3X4) at (0,0) {};
    \node (X1X3) at (0,0.5) {};
    \node[fill=fgraph] (X1X2) at (0,1) {};
    \node (Y1Y4) at (1,-0.5) {};
    \node (Y3Y4) at (1,0) {};
    \node[fill=fgraph] (Y1Y3) at (1,0.5) {};
    \node (Y1Y2) at (1,1) {};
\end{scope}
\node[left] at (X1X4.west) {$X_2X_4$};
\node[left] at (X3X4.west) {$X_1X_4$};
\node[left] at (X1X3.west) {$X_1X_2$};
\node[left] at (X1X2.west) {$X_2X_3$};
\node[right] at (Y1Y4.east) {$Y_2Y_4$};
\node[right] at (Y3Y4.east) {$Y_1Y_4$};
\node[right] at (Y1Y3.east) {$Y_1Y_2$};
\node[right] at (Y1Y2.east) {$Y_2Y_3$};
\begin{scope}[every edge/.style={draw=black,very thick, color=fgraph}]
    \path [-] (X1X2) edge (Y1Y3);
    \path [-] (X1X2) edge (Y1Y4);
    \path [-] (X1X3) edge (Y1Y2);
    \path [-] (X1X3) edge (Y3Y4);
    \path [-] (X1X3) edge (Y1Y4);
    \path [-] (X3X4) edge (Y1Y3);
    \path [-] (X1X4) edge (Y1Y2);
    \path [-] (X1X4) edge (Y1Y3);
    \path [-] (X3X4) edge (Y1Y4);
    \path [-] (X1X4) edge (Y3Y4);
\end{scope}
\end{tikzpicture}
\to
\begin{tikzpicture}[baseline=4mm, scale=1.0] % [baseline=5mm]
\begin{scope}[every node/.style={circle,thick,draw,scale=0.5, color=fgraph}]
    \node[fill=fgraph] (X1X4) at (0,-0.5) {};
    \node[fill=fgraph] (X3X4) at (0,0) {};
    \node (X1X3) at (0,0.5) {};
    \node[fill=fgraph] (X1X2) at (0,1) {};
    \node (Y1Y4) at (1,-0.5) {};
    \node (Y3Y4) at (1,0) {};
    \node (Y1Y3) at (1,0.5) {};
    \node (Y1Y2) at (1,1) {};
\end{scope}
\node[left] at (X1X4.west) {$X_2X_4$};
\node[left] at (X3X4.west) {$X_1X_4$};
\node[left] at (X1X3.west) {$X_1X_2$};
\node[left] at (X1X2.west) {$X_2X_3$};
\node[right] at (Y1Y4.east) {$Y_2Y_4$};
\node[right] at (Y3Y4.east) {$Y_1Y_4$};
\node[right] at (Y1Y3.east) {$Y_1Y_2$};
\node[right] at (Y1Y2.east) {$Y_2Y_3$};
\begin{scope}[every edge/.style={draw=black,very thick, color=fgraph}]
    \path [-] (X1X2) edge (Y1Y3);
    \path [-] (X1X2) edge (Y1Y4);
    \path [-] (X1X3) edge (Y1Y2);
    \path [-] (X1X3) edge (Y3Y4);
    \path [-] (X1X3) edge (Y1Y4);
    \path [-] (X3X4) edge (Y1Y3);
    \path [-] (X1X4) edge (Y1Y2);
    \path [-] (X1X4) edge (Y1Y3);
    \path [-] (X3X4) edge (Y1Y4);
    \path [-] (X1X4) edge (Y3Y4);
\end{scope}
\end{tikzpicture}
\end{align*}
which yields that $X_1X_3 \in \aa^\Omega_4$.
%From $X \leftrightarrow Y$ exchange symmetry of the frustration graph, we can conclude that $Y_3 Y_1 \in \aa^\Omega_4$ as well, therefore we can connect the edges 2 and 3 in the interaction graph.
%
%Connecting $(2,4)$ follows by the symmetry of the interaction graph. We then conclude that $\Omega \sim_4 K_4$.
\end{proof}

\begin{lemma}
We have $Z_1Z_3 \in \aa^\Omega_6$, if we define $\aa_6$ as $\Lie{XY,YX,ZZ}$.
\end{lemma}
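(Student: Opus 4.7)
The plan is to imitate the strategy used for the $k=6$ case in the proof of Lemma \ref{lem:k3plusk4}: first import an auxiliary Pauli string from $\aa_2^\Omega$ into $\aa_6^\Omega$, then combine it with the $ZZ$ generators on the edges of $\Omega$ via the colored frustration graph calculus developed above. Since $\aa_2 = \Lie{XY, YX}$ is a subalgebra of $\aa_6 = \Lie{XY, YX, ZZ}$, we have $\aa_2^\Omega \subseteq \aa_6^\Omega$ for free.

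The first step is to produce $X_3 Y_4$ inside $\aa_6^\Omega$. By Lemma \ref{lemmacyc}, $X_3 Y_1 \in \aa_2^\Omega$. The interaction graph $\Omega$ has edge set $\{(1,2),(1,4),(2,3),(2,4)\}$, which is invariant under the transposition $1 \leftrightarrow 4$; applying this automorphism to the nested-commutator witness for $X_3 Y_1$ yields a corresponding witness for $X_3 Y_4 \in \aa_2^\Omega \subseteq \aa_6^\Omega$.

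With $X_3 Y_4$ in hand, I enlarge the working set to $\mcA = \{X_3 Y_4,\, Z_1 Z_4,\, Z_2 Z_4,\, Z_2 Z_3\} \subseteq \aa_6^\Omega$ and examine $\Gamma(\mcA)$. The three $ZZ$ generators pairwise commute, while each of them disagrees with $X_3 Y_4$ on exactly one of qubits $3$ or $4$ and thus anticommutes with it. Hence $\Gamma(\mcA)$ is a star with $X_3 Y_4$ at its center and the three $ZZ$ generators as independent leaves.

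Starting from the singleton coloring $\{X_3 Y_4\}$, each leaf sees exactly one colored neighbor, so the add-vertex rule of Definition \ref{def:frust_ops} lets me color $Z_1 Z_4$, then $Z_2 Z_4$, then $Z_2 Z_3$ in succession (coloring a leaf does not affect the parity of any other leaf's neighborhood). Once all three leaves are colored, the center $X_3 Y_4$ has three colored neighbors and can be removed. By Lemma \ref{lem:dla_col_fr}, the resulting coloring $\{Z_1 Z_4, Z_2 Z_4, Z_2 Z_3\}$ witnesses the product $Z_1 Z_4 \cdot Z_2 Z_4 \cdot Z_2 Z_3 \equiv Z_1 Z_3$, giving $Z_1 Z_3 \in \aa_6^\Omega$. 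I do not expect any substantive obstacle: the key observation is that the $ZZ$ generators on $\Omega$ span a commutative set and cannot produce $Z_1 Z_3$ through any commutator by themselves, so a non-commuting auxiliary element such as $X_3 Y_4$ is essential, and the star shape of $\Gamma(\mcA)$ makes the remaining combinatorics immediate.
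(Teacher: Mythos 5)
Your proof is correct: the set $\{X_3Y_4, Z_1Z_4, Z_2Z_4, Z_2Z_3\}$ does lie in $\aa^\Omega_6$, its frustration graph is indeed a star (each $ZZ$ string overlaps $X_3Y_4$ nontrivially on exactly one of qubits $3,4$, and the $ZZ$ strings commute pairwise), and the add--add--add--remove sequence is a legitimate application of the coloring rules, yielding $Z_1Z_4\cdot Z_2Z_4\cdot Z_2Z_3 \equiv Z_1Z_3$. The import of $X_3Y_4$ via Lemma \ref{lemmacyc} and the $1\leftrightarrow 4$ automorphism of $\Omega$ is also sound, since $\aa_2^\Omega\subseteq\aa_6^\Omega$ with the stated generators. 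However, your route is not the one the paper takes for this particular lemma: the appendix proof stays entirely inside the generating set of $\aa_6^\Omega$, working with the frustration graph of $\mcA=\{X_2Y_3,X_3Y_2,X_2Y_1,X_1Y_2,X_2Y_4,X_4Y_2,X_1Y_4,X_4Y_1,Z_1Z_4\}$ and realizing the decomposition $Z_1Z_3 = X_2Y_3\cdot X_3Y_2\cdot X_4Y_2\cdot X_2Y_4\cdot Z_1Z_4$ by coloring operations alone, with no appeal to $\aa_2^\Omega$. What you wrote is essentially the main-text derivation, Eq.\ \eqref{eq:Z1Z3} in the $k=6$ case of Lemma \ref{lem:k3plusk4} (which uses exactly $X_3Y_4$ from the $k=2$ case together with $Z_1Z_4, Z_2Z_4, Z_2Z_3$), recast in the colored-frustration-graph language of Lemma \ref{lem:dla_col_fr}. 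Both arguments are valid; the paper's appendix version buys self-containedness within the $\aa_6$ generators (which is the stated purpose of that appendix, namely an alternative to the $\aa_2$-based argument), while yours buys a smaller, more transparent frustration graph at the cost of invoking the earlier lemma plus a symmetry argument.
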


\begin{proof}
The Lie algebra $\aa^\Omega_6$ is generated by placing
$XY,ZZ$ on all edges of the interaction graph $\Omega$.
We will consider the subset of generators consisting of $Z_1 Z_4$ and all $XY$-edges:
\begin{align*}
    \mcA = \{ X_2Y_3, X_3Y_2, X_2Y_1, X_1Y_2, X_2Y_4, X_4Y_2, X_1Y_4, X_4Y_1, Z_1 Z_4 \}. %\subset \aa^\Omega_6.
\end{align*}
The frustration graph of $\mcA$ is
\begin{align*} \Gamma(\mcA) = 
\begin{tikzpicture}[baseline=4mm, scale=2.0] % [baseline=5mm]
\begin{scope}[every node/.style={circle,thick,draw,scale=0.5, color=fgraph}]
    \node (X1Y4) at (0,-0.5) {};
    \node (X3Y4) at (0,0.0) {};
    \node (X1Y3) at (0,0.5) {};
    \node (X1Y2) at (0,1) {};
    \node (Y1X4) at (1,-0.5) {};
    \node (Y3X4) at (1,0.0) {};
    \node (Y1X3) at (1,0.5) {};
    \node (Y1X2) at (1,1) {};
    \node (Z3Z4) at (0.5,-1.) {};
\end{scope}
\node[left] at (X1Y4.west) {$X_2Y_4$};
\node[left] at (X3Y4.west) {$X_1Y_4$};
\node[left] at (X1Y3.west) {$X_2Y_1$};
\node[left] at (X1Y2.west) {$X_2Y_3$};
\node[right] at (Y1X4.east) {$X_4Y_2$};
\node[right] at (Y3X4.east) {$X_4Y_1$};
\node[right] at (Y1X3.east) {$X_1Y_2$};
\node[right] at (Y1X2.east) {$X_3Y_2$};
\node[below] at (Z3Z4.south) {$Z_1Z_4$};
\begin{scope}[every edge/.style={draw=black,very thick, color=fgraph}]
    \path [-] (X1Y2) edge (Y1X3);
    \path [-] (X1Y2) edge (Y1X4);
    \path [-] (X1Y3) edge (Y1X2);
    \path [-] (X1Y3) edge (Y1X4);
    \path [-] (X1Y4) edge (Y1X2);
    \path [-] (X1Y4) edge (Y1X3);
    \path [-] (X1Y4) edge (Y3X4);
    \path [-] (Y1X4) edge (X3Y4);
    \path [-] (Z3Z4) edge (Y1X4);
    \path [-] (Z3Z4) edge (Y1X3);
    \path [-] (Z3Z4) edge (X1Y4);
    \path [-] (Z3Z4) edge (X1Y3);
    \path [-] (X1Y3) edge (X3Y4);
    \path [-] (Y1X3) edge (Y3X4);
\end{scope}
\end{tikzpicture}
\end{align*}
%We have ommitted the generators for $Z_2 Z_3$, $Z_2Z_1$ and $Z_2Z_4$ for brevity and since they are not necessary for the proof. 
We will show that we can obtain $Z_1Z_3$ from this set of generators $\mcA$, which then will prove that we can obtain it from the generators of $\aa^\Omega_6$.
%We can obtain the generators $X_3Y_1$ and $X_1Y_3$ with the exact same frustration graph proof as for the previous case $k=2$. Hence, we only need to show that we can generate $Z_1Z_3$. 
Note that $Z_1Z_3 = X_2 Y_3\cdot X_3Y_2 \cdot X_4Y_2 \cdot X_2 Y_4 \cdot Z_1 Z_4$. We start from $Z_1 Z_4$ and do the following adding operations:
\begin{align*}
\begin{tikzpicture}[baseline=4mm, scale=1.0] % [baseline=5mm]
\begin{scope}[every node/.style={circle,thick,draw,scale=0.5, color=fgraph}]
    \node (X1Y4) at (0,-0.5) {};
    \node (X3Y4) at (0,0.0) {};
    \node (X1Y3) at (0,0.5) {};
    \node (X1Y2) at (0,1) {};
    \node (Y1X4) at (1,-0.5) {};
    \node (Y3X4) at (1,0.0) {};
    \node (Y1X3) at (1,0.5) {};
    \node (Y1X2) at (1,1) {};
    \node[fill=fgraph] (Z3Z4) at (0.5,-1.) {};
\end{scope}
\node[left] at (X1Y4.west) {$X_2Y_4$};
\node[left] at (X3Y4.west) {$X_1Y_4$};
\node[left] at (X1Y3.west) {$X_2Y_1$};
\node[left] at (X1Y2.west) {$X_2Y_3$};
\node[right] at (Y1X4.east) {$X_4Y_2$};
\node[right] at (Y3X4.east) {$X_4Y_1$};
\node[right] at (Y1X3.east) {$X_1Y_2$};
\node[right] at (Y1X2.east) {$X_3Y_2$};
\node[below] at (Z3Z4.south) {$Z_1Z_4$};
\begin{scope}[every edge/.style={draw=black,very thick, color=fgraph}]
    \path [-] (X1Y2) edge (Y1X3);
    \path [-] (X1Y2) edge (Y1X4);
    \path [-] (X1Y3) edge (Y1X2);
    \path [-] (X1Y3) edge (Y1X4);
    \path [-] (X1Y4) edge (Y1X2);
    \path [-] (X1Y4) edge (Y1X3);
    \path [-] (X1Y4) edge (Y3X4);
    \path [-] (Y1X4) edge (X3Y4);
    \path [-] (Z3Z4) edge (Y1X4);
    \path [-] (Z3Z4) edge (Y1X3);
    \path [-] (Z3Z4) edge (X1Y4);
    \path [-] (Z3Z4) edge (X1Y3);
    \path [-] (X1Y3) edge (X3Y4);
    \path [-] (Y1X3) edge (Y3X4);
\end{scope}
\end{tikzpicture}
\to
\begin{tikzpicture}[baseline=4mm, scale=1.0] % [baseline=5mm]
\begin{scope}[every node/.style={circle,thick,draw,scale=0.5, color=fgraph}]
    \node[fill=fgraph] (X1Y4) at (0,-0.5) {};
    \node (X3Y4) at (0,0.0) {};
    \node (X1Y3) at (0,0.5) {};
    \node (X1Y2) at (0,1) {};
    \node[fill=fgraph] (Y1X4) at (1,-0.5) {};
    \node (Y3X4) at (1,0.0) {};
    \node (Y1X3) at (1,0.5) {};
    \node (Y1X2) at (1,1) {};
    \node[fill=fgraph] (Z3Z4) at (0.5,-1.) {};
\end{scope}
\node[left] at (X1Y4.west) {$X_2Y_4$};
\node[left] at (X3Y4.west) {$X_1Y_4$};
\node[left] at (X1Y3.west) {$X_2Y_1$};
\node[left] at (X1Y2.west) {$X_2Y_3$};
\node[right] at (Y1X4.east) {$X_4Y_2$};
\node[right] at (Y3X4.east) {$X_4Y_1$};
\node[right] at (Y1X3.east) {$X_1Y_2$};
\node[right] at (Y1X2.east) {$X_3Y_2$};
\node[below] at (Z3Z4.south) {$Z_1Z_4$};
\begin{scope}[every edge/.style={draw=black,very thick, color=fgraph}]
    \path [-] (X1Y2) edge (Y1X3);
    \path [-] (X1Y2) edge (Y1X4);
    \path [-] (X1Y3) edge (Y1X2);
    \path [-] (X1Y3) edge (Y1X4);
    \path [-] (X1Y4) edge (Y1X2);
    \path [-] (X1Y4) edge (Y1X3);
    \path [-] (X1Y4) edge (Y3X4);
    \path [-] (Y1X4) edge (X3Y4);
    \path [-] (Z3Z4) edge (Y1X4);
    \path [-] (Z3Z4) edge (Y1X3);
    \path [-] (Z3Z4) edge (X1Y4);
    \path [-] (Z3Z4) edge (X1Y3);
    \path [-] (X1Y3) edge (X3Y4);
    \path [-] (Y1X3) edge (Y3X4);
\end{scope}
\end{tikzpicture}
\to
\begin{tikzpicture}[baseline=4mm, scale=1.0] % [baseline=5mm]
\begin{scope}[every node/.style={circle,thick,draw,scale=0.5, color=fgraph}]
    \node[fill=fgraph] (X1Y4) at (0,-0.5) {};
    \node (X3Y4) at (0,0.0) {};
    \node (X1Y3) at (0,0.5) {};
    \node[fill=fgraph] (X1Y2) at (0,1) {};
    \node[fill=fgraph] (Y1X4) at (1,-0.5) {};
    \node (Y3X4) at (1,0.0) {};
    \node (Y1X3) at (1,0.5) {};
    \node[fill=fgraph] (Y1X2) at (1,1) {};
    \node [fill=fgraph](Z3Z4) at (0.5,-1.) {};
\end{scope}
\node[left] at (X1Y4.west) {$X_2Y_4$};
\node[left] at (X3Y4.west) {$X_1Y_4$};
\node[left] at (X1Y3.west) {$X_2Y_1$};
\node[left] at (X1Y2.west) {$X_2Y_3$};
\node[right] at (Y1X4.east) {$X_4Y_2$};
\node[right] at (Y3X4.east) {$X_4Y_1$};
\node[right] at (Y1X3.east) {$X_1Y_2$};
\node[right] at (Y1X2.east) {$X_3Y_2$};
\node[below] at (Z3Z4.south) {$Z_1Z_4$};
\begin{scope}[every edge/.style={draw=black,very thick, color=fgraph}]
    \path [-] (X1Y2) edge (Y1X3);
    \path [-] (X1Y2) edge (Y1X4);
    \path [-] (X1Y3) edge (Y1X2);
    \path [-] (X1Y3) edge (Y1X4);
    \path [-] (X1Y4) edge (Y1X2);
    \path [-] (X1Y4) edge (Y1X3);
    \path [-] (X1Y4) edge (Y3X4);
    \path [-] (Y1X4) edge (X3Y4);
    \path [-] (Z3Z4) edge (Y1X4);
    \path [-] (Z3Z4) edge (Y1X3);
    \path [-] (Z3Z4) edge (X1Y4);
    \path [-] (Z3Z4) edge (X1Y3);
    \path [-] (X1Y3) edge (X3Y4);
    \path [-] (Y1X3) edge (Y3X4);
\end{scope}
\end{tikzpicture}\;,
\end{align*}
which produce the desired product. 
%leading to $Z_1Z_3 \in \aa^\Omega_6$, and thus vertices 2 and 3 can be connected in the interaction graph. 
%By symmetry of the interaction graph $\Omega$, we also find $Z_3Z_4$. We then conclude that $\Omega \sim_6 K_4$.
\end{proof}

\begin{lemma}
We have $X_1X_3 \in \aa^\Omega_{14}$.
\end{lemma}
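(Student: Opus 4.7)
The plan is to imitate the frustration-graph derivations of the preceding lemmas in this appendix, now for the $\aa_{14}$ case on $\Omega$. I will use the generating set $\{X_iX_j : (i,j)\in\Omega\} \cup \{Z_i : 1\le i\le 4\}$, which suffices because $\aa_{14} = \Lie{XX,\,ZI,\,IZ}$. The associated frustration graph $\Gamma(\aa^\Omega_{14})$ is bipartite in the obvious way: each edge-generator $X_iX_j$ is adjacent precisely to $Z_i$ and $Z_j$, while any two $XX$'s mutually commute and any two $Z$'s mutually commute.

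The target Pauli string can be written as
\[
X_1 X_3 \equiv X_1 X_4 \cdot X_2 X_4 \cdot X_2 X_3,
\]
so it corresponds to the coloring $\mathcal{C}^{*} = \{X_1X_4,\, X_2X_4,\, X_2X_3\}$. The plan is to reach $\mathcal{C}^{*}$ starting from the single-vertex coloring $\{X_1X_4\}$ via the following sequence of add/remove moves: first add $Z_1$, then $X_1X_2$, then $Z_2$, then $X_2X_3$, and then $X_2X_4$; after that, remove $Z_2$, then $X_1X_2$, and finally $Z_1$. A routine step-by-step check of the frustration-graph adjacencies shows that at every step the affected vertex has an odd number of edges to the currently colored vertices, so each move is legal, and by Lemma \ref{lem:dla_col_fr} the resulting coloring $\mathcal{C}^{*}$ certifies $X_1X_3 \in \aa^\Omega_{14}$.

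The main subtlety is that the shorter coloring $\{X_1X_2,\, X_2X_3\}$, which also has product $X_1X_3$, is \emph{not} directly reachable: $X_1X_2$ and $X_2X_3$ commute (they share qubit $2$ with matching $X$ factors), so $X_2X_3$ cannot be added to the singleton $\{X_1X_2\}$, nor can $X_1X_2$ be added to $\{X_2X_3\}$. The detour described above circumvents this obstruction by exploiting the triangle $\{1,2,4\}$ in $\Omega$: one first grows the coloring around the triangle using the $Z_i$ as bridges, and then cancels those $Z_i$ at the end, once enough $XX$-generators have been accumulated to make each such $Z_i$'s colored-degree odd.
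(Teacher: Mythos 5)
Your proposal is correct and follows essentially the same route as the paper: both express $X_1X_3$ as the product $X_1X_4\cdot X_2X_4\cdot X_2X_3$ and certify the corresponding coloring of $\Gamma(\aa_{14}^\Omega)$ (with generators $X_iX_j$ on edges and $Z_i$ on vertices) via legal add/remove moves that use the $Z_i$ as bridges. The only difference is the particular move sequence --- the paper starts from $X_2X_3$ and finishes with its ``shortcut move'' through $Z_1$, while you start from $X_1X_4$ and grow then cancel around the triangle --- and I checked that every step in your sequence indeed has odd colored degree, so the argument goes through.
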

\begin{proof}
As $\aa_{14}=\Lie{XX,XY,YX} = \Lie{XX, ZI, IZ}$, the Lie algebra
$\aa^\Omega_{14}$ can be generated by placing $XX$ on every edge of $\Omega$ and $Z$ on every vertex of $\Omega$.
These generators have the frustration graph
\begin{align*}
\Gamma(\aa_{14}^\Omega) =\:
\begin{tikzpicture}[baseline=4mm, scale=2.0] % [baseline=5mm]
\begin{scope}[every node/.style={circle,thick,draw,scale=0.5, color=fgraph}]
    \node (X1X4) at (0,-0.5) {};
    \node (X3X4) at (0,0) {};
    \node (X1X3) at (0,0.5) {};
    \node (X1X2) at (0,1) {};
    \node (Z4) at (1,-0.5) {};
    \node (Z3) at (1,0) {};
    \node (Z2) at (1,0.5) {};
    \node (Z1) at (1,1) {};
\end{scope}
\node[left] at (X1X4.west) {$X_2X_4$};
\node[left] at (X3X4.west) {$X_1X_4$};
\node[left] at (X1X3.west) {$X_1X_2$};
\node[left] at (X1X2.west) {$X_2X_3$};
\node[right] at (Z4.east) {$Z_4$};
\node[right] at (Z3.east) {$Z_1$};
\node[right] at (Z2.east) {$Z_3$};
\node[right] at (Z1.east) {$Z_2$};
\begin{scope}[every edge/.style={draw=black,very thick, color=fgraph}]
    \path [-] (X1X2) edge (Z1);
    \path [-] (X1X2) edge (Z2);
    \path [-] (X1X3) edge (Z1);
    \path [-] (X1X3) edge (Z3);
    \path [-] (X3X4) edge (Z3);
    \path [-] (X3X4) edge (Z4);
    \path [-] (X1X4) edge (Z1);
    \path [-] (X1X4) edge (Z4);
\end{scope}
\end{tikzpicture}\;.
\end{align*}
%We need to show that we can create the edges $X_1X_3$ and $X_3 X_4$. 
Note that $X_1X_3 = X_2 X_3 \cdot X_2 X_4 \cdot X_1 X_4$ can be represented by a colored frustration graph with three blue vertices. 
First, we create a colored frustration graph with three blue vertices by the sequence of add/remove operations:
%3 connected components via the following sequence
\begin{align*}
\begin{tikzpicture}[baseline=4mm, scale=1.0] % [baseline=5mm]
\begin{scope}[every node/.style={circle,thick,draw,scale=0.5, color=fgraph}]
    \node (X1X4) at (0,-0.5) {};
    \node (X3X4) at (0,0) {};
    \node (X1X3) at (0,0.5) {};
    \node[fill=fgraph] (X1X2) at (0,1) {};
    \node (Z4) at (1,-0.5) {};
    \node (Z3) at (1,0) {};
    \node (Z2) at (1,0.5) {};
    \node (Z1) at (1,1) {};
\end{scope}
\node[left] at (X1X4.west) {$X_2X_4$};
\node[left] at (X3X4.west) {$X_1X_4$};
\node[left] at (X1X3.west) {$X_1X_2$};
\node[left] at (X1X2.west) {$X_2X_3$};
\node[right] at (Z4.east) {$Z_4$};
\node[right] at (Z3.east) {$Z_1$};
\node[right] at (Z2.east) {$Z_3$};
\node[right] at (Z1.east) {$Z_2$};
\begin{scope}[every edge/.style={draw=black,very thick, color=fgraph}]
    \path [-] (X1X2) edge (Z1);
    \path [-] (X1X2) edge (Z2);
    \path [-] (X1X3) edge (Z1);
    \path [-] (X1X3) edge (Z3);
    \path [-] (X3X4) edge (Z3);
    \path [-] (X3X4) edge (Z4);
    \path [-] (X1X4) edge (Z1);
    \path [-] (X1X4) edge (Z4);
\end{scope}
\end{tikzpicture}
\to
\begin{tikzpicture}[baseline=4mm, scale=1.0] % [baseline=5mm]
\begin{scope}[every node/.style={circle,thick,draw,scale=0.5, color=fgraph}]
    \node (X1X4) at (0,-0.5) {};
    \node (X3X4) at (0,0) {};
    \node (X1X3) at (0,0.5) {};
    \node[fill=fgraph] (X1X2) at (0,1) {};
    \node (Z4) at (1,-0.5) {};
    \node (Z3) at (1,0) {};
    \node (Z2) at (1,0.5) {};
    \node[fill=fgraph] (Z1) at (1,1) {};
\end{scope}
\node[left] at (X1X4.west) {$X_2X_4$};
\node[left] at (X3X4.west) {$X_1X_4$};
\node[left] at (X1X3.west) {$X_1X_2$};
\node[left] at (X1X2.west) {$X_2X_3$};
\node[right] at (Z4.east) {$Z_4$};
\node[right] at (Z3.east) {$Z_1$};
\node[right] at (Z2.east) {$Z_3$};
\node[right] at (Z1.east) {$Z_2$};
\begin{scope}[every edge/.style={draw=black,very thick, color=fgraph}]
    \path [-] (X1X2) edge (Z1);
    \path [-] (X1X2) edge (Z2);
    \path [-] (X1X3) edge (Z1);
    \path [-] (X1X3) edge (Z3);
    \path [-] (X3X4) edge (Z3);
    \path [-] (X3X4) edge (Z4);
    \path [-] (X1X4) edge (Z1);
    \path [-] (X1X4) edge (Z4);
\end{scope}
\end{tikzpicture}
\to
\begin{tikzpicture}[baseline=4mm, scale=1.0] % [baseline=5mm]
\begin{scope}[every node/.style={circle,thick,draw,scale=0.5, color=fgraph}]
    \node (X1X4) at (0,-0.5) {};
    \node (X3X4) at (0,0) {};
    \node[fill=fgraph] (X1X3) at (0,0.5) {};
    \node[fill=fgraph] (X1X2) at (0,1) {};
    \node (Z4) at (1,-0.5) {};
    \node (Z3) at (1,0) {};
    \node (Z2) at (1,0.5) {};
    \node[fill=fgraph] (Z1) at (1,1) {};
\end{scope}
\node[left] at (X1X4.west) {$X_2X_4$};
\node[left] at (X3X4.west) {$X_1X_4$};
\node[left] at (X1X3.west) {$X_1X_2$};
\node[left] at (X1X2.west) {$X_2X_3$};
\node[right] at (Z4.east) {$Z_4$};
\node[right] at (Z3.east) {$Z_1$};
\node[right] at (Z2.east) {$Z_3$};
\node[right] at (Z1.east) {$Z_2$};
\begin{scope}[every edge/.style={draw=black,very thick, color=fgraph}]
    \path [-] (X1X2) edge (Z1);
    \path [-] (X1X2) edge (Z2);
    \path [-] (X1X3) edge (Z1);
    \path [-] (X1X3) edge (Z3);
    \path [-] (X3X4) edge (Z3);
    \path [-] (X3X4) edge (Z4);
    \path [-] (X1X4) edge (Z1);
    \path [-] (X1X4) edge (Z4);
\end{scope}
\end{tikzpicture}
\to
\begin{tikzpicture}[baseline=4mm, scale=1.0] % [baseline=5mm]
\begin{scope}[every node/.style={circle,thick,draw,scale=0.5, color=fgraph}]
    \node[fill=fgraph] (X1X4) at (0,-0.5) {};
    \node (X3X4) at (0,0) {};
    \node[fill=fgraph] (X1X3) at (0,0.5) {};
    \node[fill=fgraph] (X1X2) at (0,1) {};
    \node (Z4) at (1,-0.5) {};
    \node (Z3) at (1,0) {};
    \node (Z2) at (1,0.5) {};
    \node[fill=fgraph] (Z1) at (1,1) {};
\end{scope}
\node[left] at (X1X4.west) {$X_2X_4$};
\node[left] at (X3X4.west) {$X_1X_4$};
\node[left] at (X1X3.west) {$X_1X_2$};
\node[left] at (X1X2.west) {$X_2X_3$};
\node[right] at (Z4.east) {$Z_4$};
\node[right] at (Z3.east) {$Z_1$};
\node[right] at (Z2.east) {$Z_3$};
\node[right] at (Z1.east) {$Z_2$};
\begin{scope}[every edge/.style={draw=black,very thick, color=fgraph}]
    \path [-] (X1X2) edge (Z1);
    \path [-] (X1X2) edge (Z2);
    \path [-] (X1X3) edge (Z1);
    \path [-] (X1X3) edge (Z3);
    \path [-] (X3X4) edge (Z3);
    \path [-] (X3X4) edge (Z4);
    \path [-] (X1X4) edge (Z1);
    \path [-] (X1X4) edge (Z4);
\end{scope}
\end{tikzpicture}
\to
\begin{tikzpicture}[baseline=4mm, scale=1.0] % [baseline=5mm]
\begin{scope}[every node/.style={circle,thick,draw,scale=0.5, color=fgraph}]
    \node[fill=fgraph] (X1X4) at (0,-0.5) {};
    \node (X3X4) at (0,0) {};
    \node[fill=fgraph] (X1X3) at (0,0.5) {};
    \node[fill=fgraph] (X1X2) at (0,1) {};
    \node (Z4) at (1,-0.5) {};
    \node (Z3) at (1,0) {};
    \node (Z2) at (1,0.5) {};
    \node (Z1) at (1,1) {};
\end{scope}
\node[left] at (X1X4.west) {$X_2X_4$};
\node[left] at (X3X4.west) {$X_1X_4$};
\node[left] at (X1X3.west) {$X_1X_2$};
\node[left] at (X1X2.west) {$X_2X_3$};
\node[right] at (Z4.east) {$Z_4$};
\node[right] at (Z3.east) {$Z_1$};
\node[right] at (Z2.east) {$Z_3$};
\node[right] at (Z1.east) {$Z_2$};
\begin{scope}[every edge/.style={draw=black,very thick, color=fgraph}]
    \path [-] (X1X2) edge (Z1);
    \path [-] (X1X2) edge (Z2);
    \path [-] (X1X3) edge (Z1);
    \path [-] (X1X3) edge (Z3);
    \path [-] (X3X4) edge (Z3);
    \path [-] (X3X4) edge (Z4);
    \path [-] (X1X4) edge (Z1);
    \path [-] (X1X4) edge (Z4);
\end{scope}
\end{tikzpicture}\;.
\end{align*}
Then we utilize %, in order to obtain $X_1X_3$, by using 
the shortcut move as in Eq.\ \eqref{eq:move_vertex} to move the colored vertex from $X_1X_2$ to $X_1 X_4$:
\begin{align*}
\begin{tikzpicture}[baseline=4mm, scale=1.0] % [baseline=5mm]
\begin{scope}[every node/.style={circle,thick,draw,scale=0.5, color=fgraph}]
    \node[fill=fgraph] (X1X4) at (0,-0.5) {};
    \node (X3X4) at (0,0) {};
    \node[fill=fgraph] (X1X3) at (0,0.5) {};
    \node[fill=fgraph] (X1X2) at (0,1) {};
    \node (Z4) at (1,-0.5) {};
    \node (Z3) at (1,0) {};
    \node (Z2) at (1,0.5) {};
    \node (Z1) at (1,1) {};
\end{scope}
\node[left] at (X1X4.west) {$X_2X_4$};
\node[left] at (X3X4.west) {$X_1X_4$};
\node[left] at (X1X3.west) {$X_1X_2$};
\node[left] at (X1X2.west) {$X_2X_3$};
\node[right] at (Z4.east) {$Z_4$};
\node[right] at (Z3.east) {$Z_1$};
\node[right] at (Z2.east) {$Z_3$};
\node[right] at (Z1.east) {$Z_2$};
\begin{scope}[every edge/.style={draw=black,very thick, color=fgraph}]
    \path [-] (X1X2) edge (Z1);
    \path [-] (X1X2) edge (Z2);
    \path [-] (X1X3) edge (Z1);
    \path [-] (X1X3) edge (Z3);
    \path [-] (X3X4) edge (Z3);
    \path [-] (X3X4) edge (Z4);
    \path [-] (X1X4) edge (Z1);
    \path [-] (X1X4) edge (Z4);
\end{scope}
\end{tikzpicture}
\to
\begin{tikzpicture}[baseline=4mm, scale=1.0] % [baseline=5mm]
\begin{scope}[every node/.style={circle,thick,draw,scale=0.5, color=fgraph}]
    \node[fill=fgraph] (X1X4) at (0,-0.5) {};
    \node (X3X4) at (0,0) {};
    \node (X1X3) at (0,0.5) {};
    \node[fill=fgraph] (X1X2) at (0,1) {};
    \node (Z4) at (1,-0.5) {};
    \node[fill=fgraph] (Z3) at (1,0) {};
    \node (Z2) at (1,0.5) {};
    \node (Z1) at (1,1) {};
\end{scope}
\node[left] at (X1X4.west) {$X_2X_4$};
\node[left] at (X3X4.west) {$X_1X_4$};
\node[left] at (X1X3.west) {$X_1X_2$};
\node[left] at (X1X2.west) {$X_2X_3$};
\node[right] at (Z4.east) {$Z_4$};
\node[right] at (Z3.east) {$Z_1$};
\node[right] at (Z2.east) {$Z_3$};
\node[right] at (Z1.east) {$Z_2$};
\begin{scope}[every edge/.style={draw=black,very thick, color=fgraph}]
    \path [-] (X1X2) edge (Z1);
    \path [-] (X1X2) edge (Z2);
    \path [-] (X1X3) edge (Z1);
    \path [-] (X1X3) edge (Z3);
    \path [-] (X3X4) edge (Z3);
    \path [-] (X3X4) edge (Z4);
    \path [-] (X1X4) edge (Z1);
    \path [-] (X1X4) edge (Z4);
\end{scope}
\end{tikzpicture}
\to
\begin{tikzpicture}[baseline=4mm, scale=1.0] % [baseline=5mm]
\begin{scope}[every node/.style={circle,thick,draw,scale=0.5, color=fgraph}]
    \node[fill=fgraph] (X1X4) at (0,-0.5) {};
    \node[fill=fgraph] (X3X4) at (0,0) {};
    \node (X1X3) at (0,0.5) {};
    \node[fill=fgraph] (X1X2) at (0,1) {};
    \node (Z4) at (1,-0.5) {};
    \node (Z3) at (1,0) {};
    \node (Z2) at (1,0.5) {};
    \node (Z1) at (1,1) {};
\end{scope}
\node[left] at (X1X4.west) {$X_2X_4$};
\node[left] at (X3X4.west) {$X_1X_4$};
\node[left] at (X1X3.west) {$X_1X_2$};
\node[left] at (X1X2.west) {$X_2X_3$};
\node[right] at (Z4.east) {$Z_4$};
\node[right] at (Z3.east) {$Z_1$};
\node[right] at (Z2.east) {$Z_3$};
\node[right] at (Z1.east) {$Z_2$};
\begin{scope}[every edge/.style={draw=black,very thick, color=fgraph}]
    \path [-] (X1X2) edge (Z1);
    \path [-] (X1X2) edge (Z2);
    \path [-] (X1X3) edge (Z1);
    \path [-] (X1X3) edge (Z3);
    \path [-] (X3X4) edge (Z3);
    \path [-] (X3X4) edge (Z4);
    \path [-] (X1X4) edge (Z1);
    \path [-] (X1X4) edge (Z4);
\end{scope}
\end{tikzpicture}\;.
\end{align*}
% This leads to the following commutation sequence:
% \begin{align*}
% X_1X_3 \equiv \ad_{Z_1}\:\ad_{X_1 X_4}\:\ad_{X_1X_2}\:\ad_{Z_1}\:\ad_{Z_1}\:\ad_{X_1 X_4}\:\ad_{X_1 X_1}\:\ad_{Z_1} \: X_1 X_3.
% \end{align*}
This gives that $X_1X_3 \in \aa^\Omega_{14}$.
%Where we again get $X_3X_4$ by symmetry of the interaction graph. We conclude that $\Omega \sim_{14} K_4$.
\end{proof}
\fi

\end{document}

\subsection{The rest is out until decided otherwise}

\EK{It's done till here. I'll work on the rest in an hour, after lunch.}

We would like to 
use our knowledge to 
produce all basis elements of $\Lie{\mcA}$. By definition, every element of $\mcA$ is an element of the DLA, because they are the generators of the DLA. For our example where $\mcA = \{a,b,c\}$, we then have $a,b,c \in \Lie{\mcA}$. 
Other basis elements are generated via nested commutations of the basis elements, which requires us to understand how a generic product of the generators commute or anti-commute with a single generator.
%
%Without loss of generality, let us pick $a$, and consider its colored frustration graph. In order to produce another Pauli string that is in the DLA, we need to find a generator that does not commute with $a$. 
%
%
% \paragraph*{Commuting a graph coloring with a generator: }
% We can now make the following observation:
% if two elements $a$ and $b$ are connected by an edge, they anti-commute and so
% \begin{align*}
%     a \cdot b = 
%     \begin{tikzpicture}[scale=1.0] % [baseline=5mm]
%     \begin{scope}[every node/.style={circle,thick,draw,scale=0.5, color=fgraph}]
%         \node[fill = fgraph] (A) at (0,0) {};
%         \node[fill = fgraph] (B) at (1,0) {};
%         \node (C) at (2,0) {};
%     \end{scope}
%     \node[above] at (A.north) {$a_1$};
%     \node[above] at (B.north) {$a_2$};
%     \node[above] at (C.north) {$a_3$};
%     \begin{scope}[every edge/.style={draw=black,very thick, color=fgraph}]
%         \path [-] (A) edge (B);
%         \path [-] (B) edge (C);
%     \end{scope}
%     \end{tikzpicture} \equiv - b\cdot a.
% \end{align*}
From \eqref{lemp1} we know that Pauli strings must either commute or anti-commute. Given a generator $b$, and a  product of Pauli strings $a_{i_1}\cdot a_{i_2}\cdot \ldots\cdot a_{i_m}$ where $i_j = 1,2,\dots,M$, 
%$b_1\cdot  b_2\cdot \ldots\cdot b_m$, 
the commutator 
%$[a_1\cdot a_2\cdot \ldots\cdot a_n, b_1\cdot  b_2\cdot \ldots\cdot b_m]$ 
$[a_{i_1}\cdot a_{i_2}\cdot \ldots\cdot a_{i_m}\:,\: b]$ 
is only non-zero if the number of non-commuting elements $[a_i,b]$ is odd, since we pick up a minus sign each time we move $b$ through $a_1\cdot a_2,\ldots\cdot a_n$. 
In the frustration graph notation these non-commuting elements are indicated by an edge, hence we can determine the result of a commutator by counting the number of edges between the colored vertices $a_1,\dots,a_n$ and the vertex of the generator $b$ in the graph.  
\LK{This paragraph has a few imprecisions.}

Consider the colored frustration graph representation of $a$ as given below
\begin{align*}
a = 
\begin{tikzpicture}[scale=1.0] % [baseline=5mm]
\begin{scope}[every node/.style={circle,thick,draw,scale=0.5, color=fgraph}]
    \node[fill = fgraph] (A) at (0,0) {};
    \node (B) at (1,0) {};
    \node (C) at (2,0) {};
\end{scope}
\node[above] at (A.north) {$a_1$};
\node[above] at (B.north) {$a_2$};
\node[above] at (C.north) {$a_3$};
\begin{scope}[every edge/.style={draw=black,very thick, color=fgraph}]
    \path [-] (A) edge (B);
    \path [-] (B) edge (C);
\end{scope}
\end{tikzpicture}.
\end{align*}
This colored vertices have no edge with vertex $a$ (because $b$ is empty), 1 edge with vertex $b$ (because $a$ is full and $c$ is empty), and no edge with vertex $c$ (because $b$ is empty). Therefore it anticommutes with $b$ since 1 is an odd number, and commutes with $a$ and $c$ since 0 is an even number. 
\LK{Clarify what is meant by: "this graph shares no edge with vertex a"?}

Because we have found that the graph above does not commute with $b$, we can then take the commutation of it with the generator $b$, and transform it as follows $a \rightarrow [a,b] = a\cdot b - b\cdot a = 2 a \cdot b$. Omitting the constant factor, this can be represented via the following colored frustration graphs:
% an \textit{add dot} operation as follows:
\begin{align}\label{eq:add_dot}
  \begin{tikzpicture}[scale=1.0] % [baseline=5mm]
\begin{scope}[every node/.style={circle,thick,draw,scale=0.5, color=fgraph}]
    \node[fill=fgraph] (A) at (0,0) {};
    \node(B) at (1,0) {};
    \node(C) at (2,0) {};
\end{scope}
\node[above] at (A.north) {$a_1$};
\node[above] at (B.north) {$a_2$};
\node[above] at (C.north) {$a_3$};
\begin{scope}[every edge/.style={draw=black,very thick, color=fgraph}]
    \path [-] (A) edge (B);
    \path [-] (C) edge (B);
\end{scope}
\end{tikzpicture} \qquad \rightarrow \qquad 
\begin{tikzpicture}[scale=1.0] % [baseline=5mm]
\begin{scope}[every node/.style={circle,thick,draw,scale=0.5, color=fgraph}]
    \node [fill=fgraph](A) at (0,0) {};
    \node[fill=fgraph] (B) at (1,0) {};
    \node(C) at (2,0) {};
\end{scope}
\node[above] at (A.north) {$a_1$};
\node[above] at (B.north) {$a_2$};
\node[above] at (C.north) {$a_3$};
\begin{scope}[every edge/.style={draw=black,very thick, color=fgraph}]
    \path [-] (A) edge (B);
    \path [-] (C) edge (B);
\end{scope}
\end{tikzpicture}
\end{align}
Thus, on a given colored frustration graph, we can fill up a particular empty vertex if there is an odd number of edges between that vertex and the already colored vertices. If the initial colored frustration graph represents a Pauli string which is in the DLA, then the Pauli string represented by the final colored frustration graph is also in the DLA.

In a similar way, we can turn a particular full vertex of into an empty vertex, if that vertex shares an odd number of edges with the colored vertices. As an example, reversed \eqref{eq:add_dot} can be seen as removal of vertex $b$. Moreover, we can combine these two operations, and move a vertex in the frustration graph as illustrated in the following:
\begin{align}\label{eq:move_dot}
  \begin{tikzpicture}[scale=1.0] % [baseline=5mm]
\begin{scope}[every node/.style={circle,thick,draw,scale=0.5, color=fgraph}]
    \node[fill=fgraph] (A) at (0,0) {};
    \node(B) at (1,0) {};
    \node(C) at (2,0) {};
\end{scope}
\node[above] at (A.north) {$a_1$};
\node[above] at (B.north) {$a_2$};
\node[above] at (C.north) {$a_3$};
\begin{scope}[every edge/.style={draw=black,very thick, color=fgraph}]
    \path [-] (A) edge (B);
    \path [-] (C) edge (B);
\end{scope}
\end{tikzpicture}  \rightarrow 
\begin{tikzpicture}[scale=1.0] % [baseline=5mm]
\begin{scope}[every node/.style={circle,thick,draw,scale=0.5, color=fgraph}]
    \node [fill=fgraph](A) at (0,0) {};
    \node[fill=fgraph] (B) at (1,0) {};
    \node(C) at (2,0) {};
\end{scope}
\node[above] at (A.north) {$a_1$};
\node[above] at (B.north) {$a_2$};
\node[above] at (C.north) {$a_3$};
\begin{scope}[every edge/.style={draw=black,very thick, color=fgraph}]
    \path [-] (A) edge (B);
    \path [-] (C) edge (B);
\end{scope}
\end{tikzpicture}  \rightarrow 
\begin{tikzpicture}[scale=1.0] % [baseline=5mm]
\begin{scope}[every node/.style={circle,thick,draw,scale=0.5, color=fgraph}]
    \node (A) at (0,0) {};
    \node[fill=fgraph] (B) at (1,0) {};
    \node (C) at (2,0) {};
\end{scope}
\node[above] at (A.north) {$a_1$};
\node[above] at (B.north) {$a_2$};
\node[above] at (C.north) {$a_3$};
\begin{scope}[every edge/.style={draw=black,very thick, color=fgraph}]
    \path [-] (A) edge (B);
    \path [-] (C) edge (B);
\end{scope}
\end{tikzpicture}  \rightarrow 
\begin{tikzpicture}[scale=1.0] % [baseline=5mm]
\begin{scope}[every node/.style={circle,thick,draw,scale=0.5, color=fgraph}]
    \node (A) at (0,0) {};
    \node[fill=fgraph] (B) at (1,0) {};
    \node [fill=fgraph](C) at (2,0) {};
\end{scope}
\node[above] at (A.north) {$a_1$};
\node[above] at (B.north) {$a_2$};
\node[above] at (C.north) {$a_3$};
\begin{scope}[every edge/.style={draw=black,very thick, color=fgraph}]
    \path [-] (A) edge (B);
    \path [-] (C) edge (B);
\end{scope}
\end{tikzpicture}  \rightarrow 
\begin{tikzpicture}[scale=1.0] % [baseline=5mm]
\begin{scope}[every node/.style={circle,thick,draw,scale=0.5, color=fgraph}]
    \node (A) at (0,0) {};
    \node (B) at (1,0) {};
    \node [fill=fgraph](C) at (2,0) {};
\end{scope}
\node[above] at (A.north) {$a_1$};
\node[above] at (B.north) {$a_2$};
\node[above] at (C.north) {$a_3$};
\begin{scope}[every edge/.style={draw=black,very thick, color=fgraph}]
    \path [-] (A) edge (B);
    \path [-] (C) edge (B);
\end{scope}
\end{tikzpicture}
\end{align}
We can see from this example that adding and removing vertices can be used to derive complicated commutation sequences 
%\LK{Transform?}
The above sequence in \eqref{eq:move_dot} can be written as
\begin{align*}
    c \equiv \ad_{b} \: \ad_{c} \: \ad_{a} \: \ad_{b} \: a
\end{align*}
where $\equiv$ means equal up to a non-zero scalar. The colored frustration graph notation allows us to understand these non-trivial commutation sequences as simple adding, removing or moving the colored vertices on the frustration graph. In the following subsections, we will use these intuitive rules to construct Eqs. \eqref{eq:X1Y4}, \eqref{eq:X1X3}, \eqref{eq:Y1X3} and \eqref{eq:Z1Z3}. 
We summarize the above discussion with the following definition of a frustration graph:
\begin{definition} (Pauli string frustration graphs)\label{def:frust}
    Let $\Gamma(\{a_1, a_2,\ldots, a_M\})$ be a frustration graph where the vertices are Pauli string generators $a_i\in \mathcal{P}_n$ and edges indicate that $[a_i,a_j]\neq 0$ for two vertices $a_i$ and $a_j$. We then have the following rules:
    \begin{enumerate}
    \item \textbf{Single generator:} We represent a single generator $a_i$ with a single filled dot:
    \begin{align*}
    a_i = 
        \begin{tikzpicture}[baseline=-1mm, scale=1.0] % [baseline=5mm]
        % \node[circle,dashed,thick,draw,scale=2., color=fgraph] (A) at (1,0) {};
        \begin{scope}[every node/.style={circle,thick,draw,scale=0.5, color=fgraph}]
            \node[fill=fgraph] (B) at (0,0) {};
            \node[coordinate] (C1) at (0.4,0.3) {};
            \node[coordinate] (C2) at (0.4,-0.3) {};
        \end{scope}
        \node[above] at (B.north) {$a_i$};
        \begin{scope}[every edge/.style={draw=black,very thick, color=fgraph}]
            % \path [-] (A) edge[double] (B);
            \path [-] (C1) edge (B);
            \path [-] (C2) edge (B);
            \draw [-] ($(C1) + (-1pt, -3pt)$) edge[dashed, dash pattern=on 1pt off 1pt] ($(C2) + (-1pt, 3pt)$) ;
        \end{scope}
        \end{tikzpicture}
    \end{align*}
    \item \textbf{Product of generators:} We can represent an ordered product $a_{i}\cdot a_{j}$ with a frustration graph that has filled dots for $a_i$ and $a_j$, $i<j$:
    \begin{align*}
    a_i \cdot a_j = 
        \begin{tikzpicture}[baseline=-1mm, scale=1.0] % [baseline=5mm]
        % \node[circle,dashed,thick,draw,scale=2., color=fgraph] (A) at (1,0) {};
        \begin{scope}[every node/.style={circle,thick,draw,scale=0.5, color=fgraph}]
            \node[fill=fgraph] (B) at (0,0) {};
            \node[coordinate] (C1) at (0.4,0.3) {};
            \node[coordinate] (C2) at (0.4,-0.3) {};
        \end{scope}
        \node[above] at (B.north) {$a_i$};
        \begin{scope}[every edge/.style={draw=black,very thick, color=fgraph}]
            % \path [-] (A) edge[double] (B);
            \path [-] (C1) edge (B);
            \path [-] (C2) edge (B);
            \draw [-] ($(C1) + (-1pt, -3pt)$) edge[dashed, dash pattern=on 1pt off 1pt] ($(C2) + (-1pt, 3pt)$) ;
        \end{scope}
        \end{tikzpicture}
        \ldots
        \begin{tikzpicture}[baseline=-1mm, scale=1.0] % [baseline=5mm]
        % \node[circle,dashed,thick,draw,scale=2., color=fgraph] (A) at (1,0) {};
        \begin{scope}[every node/.style={circle,thick,draw,scale=0.5, color=fgraph}]
            \node[fill=fgraph] (B) at (0,0) {};
            \node[coordinate] (C1) at (-0.4,0.3) {};
            \node[coordinate] (C2) at (-0.4,-0.3) {};
        \end{scope}
        \node[above] at (B.north) {$a_j$};
        \begin{scope}[every edge/.style={draw=black,very thick, color=fgraph}]
            % \path [-] (A) edge[double] (B);
            \path [-] (C1) edge (B);
            \path [-] (C2) edge (B);
            \draw [-] ($(C1) + (-1pt, -3pt)$) edge[dashed, dash pattern=on 1pt off 1pt] ($(C2) + (-1pt, 3pt)$) ;
        \end{scope}
        \end{tikzpicture}
    \end{align*}
    This notation generalizes to products of multiple generators $a_{i_1}\cdot\ldots\cdot a_{i_k}$ with filled dots for all $a_{i_j}$, $j=1,\ldots,k$.
   
    \item \textbf{Add dot: } If the number of connected dots is odd, we can add an dot
    \begin{align*}
    a_i = 
        \begin{tikzpicture}[baseline=-1mm, scale=1.0] % [baseline=5mm]
        % \node[circle,dashed,thick,draw,scale=2., color=fgraph] (A) at (1,0) {};
        \begin{scope}[every node/.style={circle,thick,draw,scale=0.5, color=fgraph}]
            \node(B) at (0,0) {};
            \node[coordinate] (C1) at (0.4,0.3) {};
            \node[coordinate] (C2) at (0.4,-0.3) {};
        \end{scope}
        \node[above] at (B.north) {$a_i$};
        \begin{scope}[every edge/.style={draw=black,very thick, color=fgraph}]
            % \path [-] (A) edge[double] (B);
            \path [-] (C1) edge (B);
            \path [-] (C2) edge (B);
            \draw [-] ($(C1) + (-1pt, -3pt)$) edge[dashed, dash pattern=on 1pt off 1pt] ($(C2) + (-1pt, 3pt)$) ;
        \end{scope}
        \end{tikzpicture}
        \to
        \begin{tikzpicture}[baseline=-1mm, scale=1.0] % [baseline=5mm]
        % \node[circle,dashed,thick,draw,scale=2., color=fgraph] (A) at (1,0) {};
        \begin{scope}[every node/.style={circle,thick,draw,scale=0.5, color=fgraph}]
            \node[fill=fgraph] (B) at (0,0) {};
            \node[coordinate] (C1) at (0.4,0.3) {};
            \node[coordinate] (C2) at (0.4,-0.3) {};
        \end{scope}
        \node[above] at (B.north) {$a_i$};
        \begin{scope}[every edge/.style={draw=black,very thick, color=fgraph}]
            % \path [-] (A) edge[double] (B);
            \path [-] (C1) edge (B);
            \path [-] (C2) edge (B);
            \draw [-] ($(C1) + (-1pt, -3pt)$) edge[dashed, dash pattern=on 1pt off 1pt] ($(C2) + (-1pt, 3pt)$) ;
        \end{scope}
        \end{tikzpicture}
    \end{align*}
    \item \textbf{Remove dot: }  If the number of connected dots is odd, we can remove an dot
    \begin{align*}
    a_i = 
        \begin{tikzpicture}[baseline=-1mm, scale=1.0] % [baseline=5mm]
        % \node[circle,dashed,thick,draw,scale=2., color=fgraph] (A) at (1,0) {};
        \begin{scope}[every node/.style={circle,thick,draw,scale=0.5, color=fgraph}]
            \node[fill=fgraph] (B) at (0,0) {};
            \node[coordinate] (C1) at (0.4,0.3) {};
            \node[coordinate] (C2) at (0.4,-0.3) {};
        \end{scope}
        \node[above] at (B.north) {$a_i$};
        \begin{scope}[every edge/.style={draw=black,very thick, color=fgraph}]
            % \path [-] (A) edge[double] (B);
            \path [-] (C1) edge (B);
            \path [-] (C2) edge (B);
            \draw [-] ($(C1) + (-1pt, -3pt)$) edge[dashed, dash pattern=on 1pt off 1pt] ($(C2) + (-1pt, 3pt)$) ;
        \end{scope}
        \end{tikzpicture}
        \to
        \begin{tikzpicture}[baseline=-1mm, scale=1.0] % [baseline=5mm]
        % \node[circle,dashed,thick,draw,scale=2., color=fgraph] (A) at (1,0) {};
        \begin{scope}[every node/.style={circle,thick,draw,scale=0.5, color=fgraph}]
            \node (B) at (0,0) {};
            \node[coordinate] (C1) at (0.4,0.3) {};
            \node[coordinate] (C2) at (0.4,-0.3) {};
        \end{scope}
        \node[above] at (B.north) {$a_i$};
        \begin{scope}[every edge/.style={draw=black,very thick, color=fgraph}]
            % \path [-] (A) edge[double] (B);
            \path [-] (C1) edge (B);
            \path [-] (C2) edge (B);
            \draw [-] ($(C1) + (-1pt, -3pt)$) edge[dashed, dash pattern=on 1pt off 1pt] ($(C2) + (-1pt, 3pt)$) ;
        \end{scope}
        \end{tikzpicture}
    \end{align*}
\end{enumerate}
\end{definition}

\ifTikz

\fi

\end{document}